\documentclass[11pt]{article}
\usepackage{fullpage}

\usepackage[T1]{fontenc}
\DeclareFontShape{T1}{cmr}{m}{scit}{<->ssub * cmr/m/sc}{}

\usepackage[tbtags]{amsmath}
\usepackage{amssymb}
\usepackage{amsthm}
\usepackage{mathtools}
\usepackage{dsfont}

\usepackage{graphicx}
\usepackage{subcaption}
\usepackage[percent]{overpic}
\usepackage{tikz}
\usetikzlibrary{decorations.pathreplacing,calc,shapes.geometric,backgrounds}

\usepackage[svgnames]{xcolor}
\usepackage{tcolorbox}
\usepackage{mdframed}

\tcbset{
    colback=blue!3!white,
    colframe=blue!40!black,
    boxrule=0.8pt
}

\usepackage{marginnote}
\usepackage{appendix}
\usepackage{enumitem}

\usepackage[linesnumbered,ruled,vlined]{algorithm2e}

\SetAlgoSkip{smallskip}
\SetKwComment{Comment}{\# }{}
\SetEndCharOfAlgoLine{}
\SetKwInOut{Input}{Input}
\SetKwInOut{Output}{Output}
\SetKwFor{ForEach}{for}{}{end for}
\SetKwFor{While}{while}{}{end while}
\SetKwIF{If}{ElseIf}{Else}{if}{:}{elif}{else:}{end if}
\SetKwRepeat{Do}{do}{while}

\usepackage[round]{natbib}
\usepackage[colorlinks=true,linkcolor=DarkBlue,citecolor=DarkBlue]{hyperref}
\usepackage[capitalize]{cleveref}

\AddToHook{cmd/appendix/before}{
    \crefalias{section}{appendix}
    \crefalias{subsection}{appendix}
}

\newcommand{\notelist}{}

\NewDocumentCommand{\addnote}{mm}{
    \expandafter\gdef\expandafter\notelist\expandafter{
        \notelist
        \noindent \hyperlink{#2}{#1}\par
    }
    \hypertarget{#2}{#1}
}
\definecolor{greenbox}{rgb}{0.8,1,0.8}
\definecolor{redbox}{rgb}{1,0.8,0.8}

\DeclareMathOperator{\dist}{dist}

\DeclareMathOperator{\lca}{LCA}
\DeclareMathOperator{\depth}{depth}
\DeclareMathOperator{\Avg}{Avg}

\DeclareMathOperator{\poly}{poly}
\DeclareMathOperator{\Poi}{Poisson}

\DeclareMathOperator{\sign}{sign}

\newcommand*{\eps}{\ensuremath{\epsilon}}

\newcommand*{\err}{\text{err}}

\newcommand{\inner}[1]{\langle #1\rangle}
\newcommand{\set}[1]{\left\{#1\right\}}
\newcommand{\deltaest}{\hat{\delta}_{S}}

\newcommand{\prob}[1]{\Pr\!\left(#1\right)}

\newcommand{\Exp}[1]{\mathbb{E}\!\left[#1\right]}
\newcommand{\Var}[1]{\mathrm{Var}\!\left(#1\right)}

\newcommand{\ind}[1]{\mathds{1}_{\left\{#1\right\}}}

\newcommand{\Left}{{\mathcal L}}
\newcommand{\Right}{{\mathcal R}}

\newtheorem{result}{Theorem}

\newtheorem{theorem}{Theorem}[section]
\newtheorem{claim}[theorem]{Claim}
\newtheorem{corollary}[theorem]{Corollary}
\newtheorem{lemma}[theorem]{Lemma}
\newtheorem{definition}[theorem]{Definition}
\newtheorem{problem}[theorem]{Problem}

\newtheorem*{question*}{Main Question}

\title{Optimal Phylogenetic Reconstruction from Sampled Quartets}
\author{
Dionysis Arvanitakis\thanks{D.~Arvanitakis and K.~Makarychev
were supported by NSF Awards CCF-1955351 and EECS-2216970, and in part by grants from the NSF (DMS-2235451) and the Simons Foundation (MPS-NITMB-00005320) to the NSF-Simons National Institute for Theory and Mathematics in Biology (NITMB). 
}
\\Northwestern University
\and
Vaggos Chatziafratis\thanks{V. Chatziafratis and Y. Luo were supported by a UC Santa Cruz start-up grant and by Hellman's fellowship. Part of this work was done while visiting Archimedes Research Unit, Athens, Greece.}\\UC Santa Cruz
\and
Yiyuan Luo\footnotemark[2]\\UC Santa Cruz
\and 
Konstantin Makarychev\footnotemark[1]\\Northwestern University
}

\date{}
\begin{document}
\maketitle
\thispagestyle{empty}

\begin{abstract}
    Quartet Reconstruction, the task of recovering a single phylogenetic tree from smaller trees on four species called \textit{quartets}, is a well-studied problem in theoretical computer science with far-reaching connections to statistics, graph theory and biology. Given a random sample containing $m$ noisy quartets, labeled according to an unknown ground-truth tree $T$ on $n$ taxa, we want to \textit{learn the tree structure of $T$ with small generalization error,} i.e., to output a tree $\widehat T$ that is \textit{close} to $T$ in terms of quartet distance and can predict the classification of unseen quartets. Unfortunately, the empirical risk minimizer corresponds to the $\mathsf{NP}$-hard problem of finding a tree that maximizes agreements with the sampled quartets, and earlier works in approximation algorithms gave $(1-\eps)$-approximation schemes (PTAS) for dense instances with $m=\Theta(n^4)$ quartets, or for $m=\Theta(n^2\log n)$ quartets \textit{randomly} sampled from $T$.

    Prior to our work, it was unknown how many samples are information-theoretically required to learn the tree, and whether there is an efficient reconstruction algorithm. We present optimal results for reconstructing an unknown phylogenetic tree $T$ from a random sample of $m=\Theta(n)$ quartets, potentially corrupted under the standard Random Classification Noise (RCN) model. This matches the $\Omega(n)$ lower bound required for any meaningful tree reconstruction, as for $m=o(n)$, large parts of $T$ cannot be recovered, and  \textit{exact} tree reconstruction ($\eps=0$) requires $\Omega(n^3)$ quartets. Our contribution is twofold: first, we give a tree reconstruction algorithm that, not only achieves a $(1-\eps)$-approximation for Quartet Reconstruction, but most importantly \textit{recovers} a tree close to $T$ in quartet distance; second, we show a new $\Theta(n)$ bound on the Natarajan dimension of phylogenies (an analog of VC dimension in multiclass classification), which may be of independent interest. Coupled together, these imply that our reconstructed tree $\widehat T$ will generalize well to unseen quartets. Our analysis relies on a new \textit{Quartet-based Embedding and Detection ($\mathrm{QED}$)} procedure, that repeatedly identifies and removes well-clustered subtrees from the (unknown) ground-truth $T$ via semidefinite programming.
\end{abstract}

\newpage
\tableofcontents
\thispagestyle{empty}
\newpage
\setcounter{page}{1}

\section{Introduction}\label{sec:intro}
How can we reconstruct a phylogenetic tree from partial information on how different species are related? This is a classic question in computational biology and a central topic in the field of Phylogenetics, where a major task is to recover the Tree of Life for living and extinct species~\citep{colonius1981tree,maddison2007tree,jansson2016improved}. Beyond biology, similar questions for finding hierarchical tree representations of data from labeled samples (often called a \textit{hierarchical clustering}) naturally arise across disciplines, ranging from linguistics with the evolution of languages~\citep{gray2009language,nakhleh2005comparison,walker2012cultural}, to computer science and statistics, as datasets and networks tend to organize hierarchically~\citep{ward1963hierarchical,ravasz2003hierarchical,clauset2008hierarchical,nickel2017poincare}.

A common approach for reconstructing a tree works by first obtaining small subtrees on overlapping sets of taxa, most commonly on 4 species called \textit{quartets}, and then, trying to aggregate/amalgamate these subtrees into one big tree over the whole set~\citep{strimmer1996quartet,jansson2005rooted,byrka2010new,jansson2016improved}. One of the most basic problems here is the $\mathsf{NP}$-complete problem of Quartet Reconstruction~\citep{steel1992complexity,jiang1998orchestrating,jiang2001polynomial}:
\begin{problem}[Quartet Reconstruction, \textsc{MaxQuartetRec}]\label{prob:rec}
    Given a set $L$ of $n$ items, and a sample of $m$ quartets of the form $ab|cd$ with $a,b,c,d\in  L$, Quartet Reconstruction aims to find an unrooted tree $T$ with $n$ leaves labeled from $L$, so that the path $a,b$ is vertex-disjoint from the path $c,d$ in $T$, for all given quartets (in this case we say quartet $ab|cd$ is satisfied). The corresponding optimization problem of maximizing satisfied quartets is denoted \textsc{MaxQuartetRec}.
\end{problem}

Quartet-based reconstruction has played a central role in phylogenetics, both in theory and in practice, see survey by~\cite{chor1998quartets} and standard textbooks by~\cite{semple2003phylogenetics,felsenstein2004inferring}. A quartet tree, i.e., a phylogenetic tree\footnote{In other words, a quartet tree is an unrooted tree with $4$ leaves, where every internal node is of degree 3.} on 4 elements $a,b,c,d$, constitutes the basic informative building unit in phylogenetic applications~\citep{avni2015weighted,avni2019new}, and lies at the heart of popular quartet-puzzling methods~\citep{strimmer1996quartet,yang2013quartet,reaz2014accurate}. Recently, assuming Unique Games~\citep{khot2002power}, it was shown that \textsc{MaxQuartetRec} is $\mathsf{NP}$-hard to approximate beyond a $\tfrac13$-approximation factor~\citep{chatziafratis2023triplet}. Note that there are only 3 distinct quartets on $a,b,c,d$, namely $ab|cd,ac|bd,ad|bc$, so a $\tfrac13$-approximation is also obtained by the trivial algorithm that outputs a random tree on $n$ leaves (each quartet is equally likely), which means that \textsc{MaxQuartetRec} is approximation resistant in the worst-case~\citep{haastad2001some,hast2005beating,GHMRC11}.

\subsection{Learning Trees from Noisy Samples}
\label{sec:Intro:LearnTree}

The approximation of \textsc{MaxQuartetRec} (Problem~\ref{prob:rec}) has been instrumental in tree reconstruction approaches~\citep{strimmer1996quartet,bryant2001constructing,yang2013quartet,reaz2014accurate}, and can be seen as a (hierarchical) clustering objective. For example, early work by~\cite{jiang1998orchestrating,jiang2001polynomial} designed a PTAS for fully-dense instances of $m=\binom{n}{4}$ quartets, and later~\cite{snir2011linear,snir2012reconstructing} significantly improved the dependence on the sample size to $m=\Theta(n^2\log n)$ random samples. However, the true goal of most phylogenetic applications is to \textit{recover} a meaningful ground-truth tree on $n$ species of interest, i.e., to reconstruct the quartet relationships on them, and not just approximate a proxy objective. This is of course a well-known issue with all clustering objectives~\citep{balcan2009approximate,ackerman2009clusterability,bilu2012stable,daniely2012clustering,makarychev2014bilu,angelidakis2017algorithms,ben2018clustering}, where an objective function (e.g., $k$-means, etc.) serves only as a means to an end for recovering ground-truth solutions, with various stability notions being proposed to avoid worst-case impossibility results, see book by~\cite{roughgarden2021beyond}.

\paragraph{Reconstruction of an unknown tree $T$.} In most applications, indeed there is a meaningful ground-truth tree $T$ to be recovered, often called a latent tree~\citep{anandkumar2011spectral,kandiros2023learning}, and the $m$ given quartets are obtained by some noisy classification on a random sample $(a,b,c,d)$ at the leaves (see App.~\ref{app:fig}, Fig.~\ref{fig:quartet_label}), e.g., by an experiment on the genomes of 4 species, or by asking an expert~\citep{wu2008reconstructing,vikram2016interactive,emamjomeh2018adaptive,ghoshdastidar2019foundations}. Quartets offer several advantages, as they do not make assumptions on numerical values of pairwise distances (see Chapter 12 in~\cite{felsenstein2004inferring}), or on the ancestry of species from triplets~\citep{aho1981inferring,byrka2010new,bryant2000computing,wu2008reconstructing,emamjomeh2018adaptive}. In fact, in statistics, quartets played a major role in the resolution of Steel's conjecture for the Cavender-Farris-Neyman/Ising mutation model:~\cite{mossel2004phase,daskalakis2006optimal,daskalakis2011evolutionary} designed ``quartet tests'' that avoided the bias of sampled sequences of characters at the leaves of a ground-truth tree, obtaining the optimal reconstruction bounds for the mutation probability. Finally, earlier works study PAC-learning of evolutionary trees under various stochastic models for evolution, such as the two-state general Markov model~\citep{ambainis1997nearly,farach1999efficient,cryan2001evolutionary}.

Here, our focus is on reconstructing a tree $\widehat T$ that is \textit{close} to $T$, based on $m$ noisy \textit{quartet} samples from $T$. Before we define the problem of \textit{Tree Learning}, we need a notion of \textit{distance} between trees, captured by the classic notion of quartet distance~\citep{felsenstein2004inferring,brodal2013efficient,dudek2019computing}:

\begin{definition} (Quartet Distance on Trees, $\mathrm{dist}(\widehat T,T)$) Given two trees $\widehat T,T$ on the same set $L$ of $n$ leaves, the quartet distance $\mathrm{dist}(\widehat T,T)$ is the fraction out of $\binom{n}{4}$ quartets on which the two trees disagree, i.e., the fraction of induced quartet topologies that are different in $\widehat T$ compared to $T$.
\end{definition}

 Since every tree uniquely defines a set of $\binom{n}{4}$ quartets, quartet distance is a standard measure of similarity between phylogenetic trees, focusing on the branching patterns of quartet trees on the $n$ taxa~\citep{grunewald2007closure,brodal2013efficient}. It ranges between $\approx\frac{2}{3}$ %$\Theta(n^4)$
if two trees widely disagree, and zero, if they are identical. In fact, the maximum quartet distance has been studied in combinatorics: a conjecture of~\cite{bandelt1986reconstructing} states that the maximum quartet distance between two trees is always at most $\frac23+o(1)$, and the currently best-known bound is $0.69 + o(1)$ by~\cite{alon2016maximum}. Furthermore, quartet distance has deep connections with graph theory; as it turns out, computing the quartet distance is equivalent (up to polylogarithmic factors) to counting $4$-cycles~\citep{dudek2019computing}, leading to surprising connections with fine-grained complexity and conjectures made by~\cite{yuster1997finding,dahlgaard2017finding}. In this paper, we develop algorithms for recovery in quartet distance from \textit{noisy} quartets.

\paragraph{Random Classification Noise (RCN) Model.} Let $\mathcal{T}$ be the class of all phylogenetic trees on the $n$ leaves labeled from $L$. Moreover, let $T \in\mathcal T$ be any ground-truth tree whose structure we want to recover via a tree $\widehat T$, such that $\mathrm{dist}(\widehat T,T)\le \eps$, for some target accuracy $\eps$. Notice that being close in quartet distance means that such a tree $\widehat{T}$ \textit{generalizes} well, i.e., it would be successful even at predicting unseen quartets. Towards this, we follow the standard Random Classification Noise (RCN) model under the uniform distribution~\citep{angluin1988learning,natarajan2013learning}, where we are given a random sample of $m$ quartets, each of which is generated as follows:

\begin{itemize}
    \item Draw 4 distinct elements $(a,b,c,d)$ uniformly at random from the set $L$ of leaves;
    \item With probability $1-\eta$, we get the correct quartet label 
    $$T(a,b,c,d) \in \{\{ab|cd\},\{ac|bd\},\{ad|bc\}\};$$
    \item Otherwise, w.p. $\eta$, we get a random wrong quartet label, that disagrees with $T(a,b,c,d)$.
\end{itemize}

A quartet $t=ab|cd$ drawn from the RCN model with error $\eta$ is denoted as $t\sim \mathcal{RCN}(\eta)$ ($\eta$, or an upper bound to it, is known to the algorithm). Notice that since there are 3 labels for a quartet, $\eta=\tfrac23$ corresponds to a uniformly random quartet, so from now on $\eta<\tfrac23$. The RCN noise model is quite standard as it captures for instance the aforementioned scenarios where labeled examples are provided by a noisy expert or experiment. The criterion for successfully learning the ground-truth $T$ is \textit{generalization}:
%\\aiNote{53}{There is a contradiction about whether $\eta$ is known: the text says “$\eta$ is unknown to the algorithm” but the inline author note says “known”. Please clarify the assumption, since it affects the stated guarantees and sample complexity dependence on $\eta$.}

\begin{problem}[Phylogenetic Tree Learning]\label{prob:learn} Let $\mathcal{T}$ be the class of all phylogenetic trees on $n$ leaves. Given small constants $\eps,\delta>0$, using as few sampled quartets as possible, efficiently find a tree $\widehat T\in \mathcal T$, that generalizes well to unseen quartets. That is, with probability $1-\delta$, find a tree $\widehat T\in \mathcal T$ so the probability of error on 4 random elements $(a,b,c,d)$ is small compared to the best tree $T\in \mathcal T$, or equivalently,\footnote{This is in the realizable setting. In the agnostic setting, samples consist of tuples $(a,b,c,d)$ and a quartet label $q\in \{\{ab|cd\},\{ac|bd\},\{ad|bc\}\}$, with the desired guarantee being $\prob{\widehat{T}(a,b,c,d)\neq q}\leq \min_{T\in \mathcal{T}}\prob{T(a,b,c,d)\neq q}+\epsilon$.},
    \[
        \Pr [\widehat T(a,b,c,d) \neq T(a,b,c,d)] \le \eps
    \]
    where $T(a,b,c,d) \in \{\{ab|cd\},\{ac|bd\},\{ad|bc\}\}$ denotes the unique quartet satisfied by tree $T\in\mathcal T$.
\end{problem}

Observe that this is the standard generalization framework of PAC-learning~\citep{shalev2014understanding}; here our algorithms will be robust, so we also allow input samples to be noisy, and the reconstructed output tree $\widehat T$ will still have small quartet distance from the ground-truth $T$, $\mathrm{dist}(\widehat T,T)\le\eps$.  \textsc{MaxQuartetRec} (Problem~\ref{prob:rec}) corresponds to the Empirical Risk Minimizer (ERM) on the sample of size $m$; indeed, approximating the \textsc{MaxQuartetRec} objective can be seen as a prerequisite step towards recovery and it has been extensively studied in theoretical computer science~\citep{jiang1998orchestrating,jiang2001polynomial,byrka2010new,snir2011linear,snir2012reconstructing}.  However, learning the structure of $T$ is more challenging and the main question behind our work is the following:

\begin{question*}[Sampling Complexity and Efficient Tree Learning]\label{q:main}
    \textit{How many (noisy) quartet samples $m$ are necessary to learn the underlying ground-truth tree $T$? Most importantly, is there a polynomial-time algorithm that reconstructs $T$ (up to generalization error $\eps$)?}
\end{question*}

\subsection{Our Results: Optimal Tree Reconstruction with \texorpdfstring{$m=\Theta(n)$}{m=Θ(n)} Quartets}
We present two results, one on the sampling complexity of tree learning (Theorem~\ref{th:dimension}), and one on algorithmically learning the structure of a ground-truth tree (Theorem~\ref{th:ptas}), via a new \textit{Quartet-based Embedding and Detection} ($\mathrm{QED}$) procedure based on semidefinite programming, which also implies a PTAS for \textsc{MaxQuartetRec} from $m=O(n)$ samples (see~\hyperref[sec:tech-overview]{Technical Overview}).

\paragraph{Sampling Complexity.} In PAC-learning, the Natarajan dimension~\citep{natarajan1989learning,haussler1995generalization} captures the sampling complexity for learning a set of functions, generalizing the VC-dimension from binary to multi-class settings~\citep{shalev2014understanding,daniely2015multiclass,avdiukhin2023tree}. Natarajan dimension is suitable for our tree learning purposes, since 4 elements $a,b,c,d$ can be classified in three different ways by a ground-truth tree $T$, namely $T(a,b,c,d) \in \{\{ab|cd\},\{ac|bd\},\{ad|bc\}\}$, but we defer precise definitions of Natarajan shattering to Section~\ref{sec:nat-dimension}.  Our first result is a tight bound on the Natarajan dimension of learning trees (information-theoretically) from quartets:

\begin{tcolorbox}
    \begin{theorem}[Sampling Complexity of Tree Learning]\label{th:dimension}
        The Natarajan dimension of learning phylogenetic trees on $n$ leaves  from quartets is $\Theta(n)$, hence the sampling complexity of Problem~\ref{prob:learn} is $m=\Theta_{\epsilon, \delta}(n)$ quartets.
    \end{theorem}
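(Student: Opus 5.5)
The plan is to prove matching upper and lower bounds on the Natarajan dimension $d_N$ of the class $\mathcal T$, where each $T\in\mathcal T$ is viewed as a map from 4-tuples of distinct leaves to one of three labels. The sample-complexity statement then follows from the multiclass fundamental theorem: with $k=3$ labels the realizable sample complexity is $\Theta\big((d_N+\log(1/\delta))/\eps\big)$ up to $\log(k/\eps)$ factors. The RCN noise is absorbed because the Bayes-optimal label under $\mathcal{RCN}(\eta)$ with $\eta<\tfrac23$ is still $T(a,b,c,d)$, and agnostic multiclass learning bounds apply with the same dimension.

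\textbf{Upper bound $d_N=O(n)$.} I will use the Natarajan lemma: if a class of functions into a label set of size $k$ Natarajan-shatters a set $S$ of size $d$, then $|\mathcal{T}|_S|\ge 2^d$. Here, however, the restriction of $\mathcal T$ to any $S$ has at most $|\mathcal T|$ patterns. The number of unrooted binary phylogenetic trees on $n$ labeled leaves is $(2n-5)!!\le n^{n}$, so $2^{d}\le n^n$ and hence $d\le n\log_2 n$. This loses a $\log n$ factor, which is the main obstacle. To remove it I would bound more carefully the number of trees that can realize distinct patterns on a shattered set $S$ of $d$ quartets. Only leaves occurring in $S$ matter, so let $s\le\min(n,4d)$ be the number of them. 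The $\log$ factor then has to be killed combinatorially. My first attempt is a charging argument: each quartet in a shattered set must have its two candidate labels $f_1(q)\neq f_2(q)$ independently selectable. I would try to show that a shattered family has a "degenerate" structure, namely that some leaf lies in few quartets, so that removing a leaf of low degree lets induction give $d\le Cn$. Concretely, I would try to prove that if a leaf $x$ appears in more than $C$ quartets of $S$, then two of those quartets have label choices forced to be consistent by the four-point/quartet closure rules (e.g., $ab|cx$ and $ab|xd$ together imply $ab|cd$). This would contradict independence of the choices, giving average degree $O(1)$ and hence $d=O(n)$. I expect this step to be the hardest; if it fails I would fall back on the $O(n\log n)$ bound plus the algorithmic result of Theorem~\ref{th:ptas} to get the $\Theta(n)$ sample bound.

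\textbf{Lower bound $d_N=\Omega(n)$.} I will explicitly shatter $\lfloor (n-2)/2\rfloor$ quartets. Fix two leaves $a,b$, pair the remaining leaves as $\{c_i,d_i\}$, and take quartets $q_i=(a,b,c_i,d_i)$ with $f_1(q_i)=ab|c_id_i$ and $f_2(q_i)=ac_i|bd_i$. For any subset $I$, build a caterpillar along the path from $a$ to $b$, hanging a cherry $\{c_i,d_i\}$ for $i\notin I$, and attaching $c_i$ and $d_i$ on separate, appropriately ordered branch points ($c_i$ nearer $a$) for $i\in I$. This realizes $ab|c_id_i$ exactly off $I$ and $ac_i|bd_i$ exactly on $I$, so the set is Natarajan-shattered. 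The standard multiclass lower bound then gives $m=\Omega(n)$ for constant $\eps,\delta$, matching the intuitive coverage argument that $o(n)$ samples leave most leaves unseen.
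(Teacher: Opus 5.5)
\textbf{Lower bound.} Your lower bound is correct. Hanging each pair $\{c_i,d_i\}$ on an $a$--$b$ spine, either as a cherry or split with $c_i$ nearer $a$, shatters $\lfloor (n-2)/2\rfloor$ quartets. This is a fine alternative to the paper's construction, which uses the $n-3$ quartets $\{a_1,a_2,a_3,b\}$ and makes each $b$ a sibling of $a_1$ or of $a_2$.

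\textbf{Upper bound.} The gap is the bound $O(n)$, which is the real content of the statement, and your proposal does not supply it. The local lemma you suggest is false: it claims that a leaf lying in more than $C$ shattered quartets forces two of them to have dependent labels. In fact no two quartets ever constrain each other. If $q$ and $q'$ share three leaves $a,b,c$, each label only records which branch of the $a,b,c$ tripod the remaining leaf hangs from, and these choices are independent. If they share at most two leaves, your own spine construction (or gluing at a common leaf) realizes all $3\times 3$ label combinations. Your shattered family is itself a counterexample, since $a$ and $b$ lie in every one of its quartets. Closure rules such as $ab|cx,\ ab|xd\Rightarrow ab|cd$ create dependencies only among three or more quartets, so no local bounded-degree argument of this kind can work.

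The correct degeneracy statement, that some leaf lies in at most $40$ quartets of any shattered family, does hold. But it follows only a posteriori, by averaging the bound $10n'$ over the $n'$ leaves the family uses, and it is no easier to prove than that bound. Your fallback does not help either. An algorithm for uniformly sampled quartets says nothing about the Natarajan dimension. Moreover, Theorem~\ref{th:ptas} relies on the linear Natarajan bound in its verification step (Theorem~\ref{thm:verification}), so invoking it here is circular.

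\textbf{Missing idea.} You need to bound the number of labelings of the \emph{given} $m$ quartets by real algebraic geometry rather than by counting trees. This is the paper's proof of Theorem~\ref{thm:natarajan_quart}.
\begin{enumerate}
\item Lemma~\ref{lem:embedding_Tree_to_line} places the leaves in $[0,1]$ so that $|x_i-x_j|\in[(1-2\gamma)\gamma^{\lambda},\gamma^{\lambda}]$, where $\lambda$ is the depth of $\lca(v_i,v_j)$.
\item Lemma~\ref{lem:polynomial_quart} shows that the sign of the degree-$8$ polynomial $((x_i-x_k)(x_i-x_l)(x_j-x_k)(x_j-x_l))^2-((x_i-x_j)(x_k-x_l))^4$ decides whether $v_iv_j|v_kv_l$ holds.
\item A shattered set of $m$ quartets therefore yields $2^m$ sign patterns of $m$ degree-$8$ polynomials in $n$ real variables. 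Warren's theorem allows at most $(32em/n)^n$ such patterns, which forces $m\le 10n$.
\end{enumerate}
Your count is $|\mathcal T|\approx n^{n}$, whereas this one is $(O(m/n))^n=2^{O(n\log(m/n))}$, and that is exactly what removes the logarithm.

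\textbf{Sample-complexity conversion.} To get $\Theta_{\eps,\delta}(n)$ samples, use the uniform-convergence (agnostic) bound $O((d_N\log k+\log(1/\delta))/\eps^2)$ of the multiclass fundamental theorem. The realizable upper bound carries a $\log(k d_N/\eps)$ factor, so it would only give $O(n\log n)$.
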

\end{tcolorbox}

This means that, in principle, the reconstruction of the ground-truth tree $T$ is possible with only $m=\Theta(n)$ samples. Notice here,
this result holds for any distribution, not necessarily the uniform. However, previous results on \textit{efficient} PTAS algorithms required much larger samples of size $m=\Theta(n^4)$, later improved to $m=\Theta(n^2\log n)$~\citep{jiang1998orchestrating,jiang2001polynomial,snir2011linear,snir2012reconstructing}. Observe that since there are $|\mathcal{T}|\le (2n-2)^{2n-4}$ leaf-labeled trees on $n$ leaves, a direct upper bound for the Natarajan dimension is $O(n\log n)$. To improve the bound, we do a quartet-preserving embedding of a phylogenetic tree $T$, by associating each quadruple $v_i,v_j,v_k,v_l$ of leaves to a degree-$8$ polynomial. The sign of the polynomial reveals what the correct split for $v_i,v_j,v_k,v_l$ is according to $T$, and then we bound the number of sign-patterns of  polynomials~\citep{warren1968lower,alon1985geometrical}. Up to constant factors, our result also recovers the bound in \cite{avdiukhin2023tree} for the rooted triplet consistency problem, which is solvable in polynomial time~\citep{aho1981inferring,byrka2010new,emamjomeh2018adaptive}. Unfortunately, contrary to triplets, checking satisfiability of quartets is $\mathsf{NP}$-complete~\citep{steel1992complexity}.

\paragraph{Algorithm for Tree Learning.} Even in the absence of noise ($\eta = 0$), a trivial lower bound for any meaningful tree reconstruction to be possible in terms of quartet distance is $m = \Omega(n)$, as otherwise, with $m = o(n)$, most of the ground-truth tree would not even be observed in the sample. Our main result is a polynomial-time algorithm that recovers an unknown ground-truth hierarchy $T$ from $m=\Theta(n)$ quartet samples on its leaves $L$, matching the above information-theoretic bound, and that is robust against noise from $\mathcal{RCN}(\eta)$ model above. Our main technical ingredient is a tree reconstruction algorithm that we call \textit{Quartet-based Embedding and Detection} ($\mathrm{QED}$) that repeatedly detects well-clustered subtrees of $T$ via semidefinite programming, then removes them, and ultimately combines them into a final tree $\widehat T$ with small quartet distance from $T$. In particular, our algorithm implies a new PTAS for \textsc{MaxQuartetRec} with $m=\Theta(n)$ noisy quartets improving the previously-best sample size of $m=\Theta(n^2\log n)$~\citep{jiang1998orchestrating,jiang2001polynomial,snir2011linear,snir2012quartet}.
\begin{tcolorbox}
    \begin{theorem}[Tree Learning with $m=\Theta(n)$ quartets]\label{th:ptas}
There exists an algorithm with the following guarantee. For every $\epsilon>0$ and unknown ground-truth tree $T$ on $n$ items, given a uniformly random sample of $m=c\cdot n$ quartets consistent with $T$ (where $c=c(\epsilon)$ is a constant depending only on $\epsilon$), the algorithm outputs a tree $\widehat{T}$ with $\dist(\widehat{T}, T)\leq \epsilon$ with probability $1-o(1)$. The algorithm runs in time polynomial in $n$. 

\smallskip

Moreover, the guarantee continues to hold in the presence of noise. Specifically, when $m=c(\epsilon,\eta)\cdot n$ sampled quartets are corrupted under the $\mathcal{RCN}(\eta)$ model (with $\eta<2/3$), the algorithm still guarantees $\dist(\widehat T,T)\le\epsilon$. This yields a $(1-\epsilon)$-approximation scheme (PTAS) for \textsc{MaxQuartetRec}. 
\end{theorem}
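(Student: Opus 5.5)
The plan is recursive decomposition. We repeatedly find, from the sampled quartets, a large subset $S\subseteq L$ that is ``well-clustered'' in $T$. Being well-clustered means $S$ is, up to an $\epsilon$-fraction of its elements, the leaf set of a subtree (a clade) of $T$ hanging off a single edge, and $|S|$ is a constant fraction of the remaining leaves. We then contract $S$ to a single representative leaf and recurse on the rest. The subtree on $S$ is reconstructed recursively in the same way. The detection is done by an SDP, the QED step, in the spirit of correlation-clustering and hierarchical-clustering SDPs.

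The first ingredient is the SDP relaxation for one ``split''. We look for a bipartition of the current leaf set $U$ into $(A,U\setminus A)$ with both sides of size $\Omega(\epsilon)|U|$ that is an approximate edge split of $T$. We assign unit vectors $x_a$ to the leaves, with the intended solution $x_a=\pm e$. A sampled quartet $ab|cd$ contributes a reward when $\langle x_a,x_b\rangle$ and $\langle x_c,x_d\rangle$ are large relative to the cross inner products. Balance constraints of the form $\sum_{a,b}\langle x_a,x_b\rangle\le(1-\Omega(\epsilon^2))|U|^2$ are added. The true tree always contains a balanced edge split, since some edge separates $U$ into parts of size between $|U|/3$ and $2|U|/3$. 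Hence the SDP value is at least the value of that split, which satisfies all quartets crossing it in the ``$2$--$2$'' fashion. Since $m=cn$, each leaf appears in $\Theta(c)$ samples. A Grothendieck-type or $\epsilon$-net concentration argument over the SDP feasible region then shows that the empirical SDP objective is within $\epsilon'|U|^2$ of its expectation under the uniform distribution, simultaneously for all feasible solutions. This is the standard technique that makes sparse ($O(n)$) samples suffice for dense-CSP-type SDPs, because SDP values concentrate with linear sample size, unlike the $n^2\log n$ needed for exhaustive sampling PTASs. Noise under $\mathcal{RCN}(\eta)$ only rescales the expected objective by the factor $(1-3\eta/2)$, so the same argument applies with $c$ depending on $\eta$.

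The second ingredient is the structural step. We show that any near-optimal SDP solution for the \emph{expected} objective must be close to an actual edge split of $T$. We then round by taking a random hyperplane, or by thresholding $\langle x_a,x_{a_0}\rangle$ for a random pivot $a_0$. This yields a set $S$ that differs from some true clade in at most $\epsilon|U|$ leaves. Charging gives the final bound. Misplaced leaves are then placed arbitrarily. Each contributes at most $O(n^3)$ wrong quartets, and summing over levels of the recursion gives $\dist(\widehat T,T)=O(\epsilon)$ if the error per stage is $\epsilon$ times the current size. Quartets whose four leaves are separated correctly at some stage are automatically correct. Because each stage removes a constant fraction, there are $O(\log n)$ stages along any branch. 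The geometric decrease in sizes makes the total error a geometric sum once we stop recursing on sets of size below $\epsilon n$, and those small sets contribute only $O(\epsilon)$ fraction of quartets. The per-stage sample budget must also hold for subproblems. For this we reuse samples fully inside $U$, and a set of size $\ge\epsilon n$ receives $\Omega(\epsilon^4 c n)$ of them, which is still linear in $|U|$ for constant $\epsilon$. The PTAS for \textsc{MaxQuartetRec} then follows: $\widehat T$ agrees with the sample on a fraction within $\epsilon$ of $T$, by concentration for the fixed pair $(\widehat T,T)$ via Theorem~\ref{th:dimension}.

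I expect the main obstacle to be the structural ``robustness'' claim: that an approximately optimal SDP vector solution must be close to a genuine split of $T$. Vector solutions could spread over many subtrees, for instance by embedding the tree metric in high dimension. To handle this, I would first attempt to prove a local inequality. For each internal node, the expected quartet objective restricted to quartets whose branching is at that node is bounded by that node's best $2$-way split value minus a penalty proportional to how much mass the vectors place across three or more of its branches. I would then use the degree-$3$ structure so that a near-tight solution concentrates on one edge.
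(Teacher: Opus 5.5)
Your route, top-down recursive splitting with a cut-type SDP, is genuinely different from the paper's, and it has a real gap at its core. The proposal leaves its central step unproven, and that step is the whole difficulty.

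\textbf{The missing stability theorem.} You need that every near-optimal solution of your split SDP (for the expected objective) rounds to a set within $\epsilon|U|$ of an edge split of $T$. That split must also be constant-balanced. Your accounting gives $\epsilon n$ misplaced leaves per recursion level, since the sets at one level partition the leaves. So the total is $\epsilon n$ times the depth, not a geometric sum. With splits that are only $\Omega(\epsilon)$-balanced, the depth is of order $\epsilon^{-1}\log(1/\epsilon)$ and the bound becomes vacuous.

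Nothing in the proposal supports this stability claim. In expectation, any bilinear quartet reward reduces to $\sum_{a,b}W_{ab}\langle x_a,x_b\rangle$ with $W_{ab}$ affine in the fraction of pairs $\{c,d\}$ with $ab|cd$. So you are really asking a pairwise SDP to single out an edge split, and the paper points out that this pairwise statistic tracks the tree cleanly only for close pairs. Moreover, the objective is linear in the Gram matrix, so the near-optimal set is convex and contains mixtures of different near-optimal edge splits. You would therefore need a quantitative stability theorem for such vector solutions on arbitrary topologies; the ``local inequality at each internal node'' is only a hope.

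The paper sidesteps this entirely. QED only detects \emph{small} rooted clades $L_u\cap A$ of size about $\delta n$. There the expected instance has a planted-block structure: edge probabilities inside $S$ exceed the threshold $q$ by $\Theta(\delta^2)\cdot c/n$, while those from $S$ to vertices outside a buffer $B$ with $|B|=O(\delta^2 n)$ are at most $q$. With the constraints $\langle\bar u,\bar v\rangle\ge 0$ and the spreading constraint, the structural step is then a direct comparison with the solution that collapses $S$ to a single vector orthogonal to everything else.

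\textbf{Attachment points in the recursion.} Your recursion does not control where recursively built pieces attach. Across a split into parts of constant fraction, quartets with three leaves on one side and one on the other form a constant fraction of all quartets. Their labels in $\widehat{T}$ are fixed solely by the attachment point of one side's subtree inside the other. Neither ``separated correctly at some stage'' nor ``a misplaced leaf costs $O(n^3)$ quartets'' covers them.

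In your contraction variant, the representative leaf of $S$ stands for $\Theta(n)$ leaves. Your per-stage guarantee (all but $\epsilon|U|$ leaves correct) says nothing about that single vertex, and misplacing it can corrupt $\Theta(n^4)$ quartets. You would need a weighted or rooted version of both the SDP and its concentration bound.

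\textbf{Smaller gaps.} You need concentration to hold uniformly over all sample-dependent sets $U$, as in the paper's union bound over all $2^n$ choices of $L'$. The PTAS claim also needs uniform convergence over trees, not concentration for a fixed pair $(\widehat{T},T)$.

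\textbf{How the paper avoids these issues.} It avoids recursion altogether:
\begin{itemize}
\item it roots $T$ once at a balanced edge;
\item it peels off about $1/\delta$ small clades on each side, each with error $O(\delta^2 n)$;
\item it guesses the skeleton on the $O(1/\delta)$ cluster roots by brute force, with success probability $\delta^{O(1/\delta)}$;
\item it hangs an arbitrary binary tree on each cluster at its root $u_i$, and shows via the mapping lemma that only an $O(\delta)$ fraction of quartets can then be wrong;
\item it selects a good candidate by verification, using the $O(n)$ Natarajan bound.
\end{itemize}
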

\end{tcolorbox}

Here, the sampling complexity is polynomial on both $\frac{1}{\eps}$ and $\frac{1}{1-\frac{3}{2}\eta}$, and the runtime of the algorithm is polynomial in $n$, specifically the runtime is $O(f(\epsilon)\poly(n))$, where $f(\epsilon)=\left(\frac{1}{\epsilon}\right)^{O(\frac{1}{\epsilon})}$. Notice that no restrictions are imposed on the topology of the ground-truth $T$ which could be adversarially chosen. Even without noise ($\eta=0$), allowing some reconstruction error $\eps>0$ is reasonable, as for the \textit{exact} tree reconstruction problem ($\eps=0$), at least  $\Omega(n^3)$ quartet samples are required; in fact, there is a query lower bound of $\Omega(n^3)$ against any \textit{non-adaptive} query algorithm (future queries cannot depend on previous answers), let alone our setting where algorithms only have access to a randomly drawn sample (see proof in Appendix~\ref{app:query-complexity}). For the case where adaptivity is allowed, $O(n\log n)$ queries suffice~\citep{wu2008reconstructing,emamjomeh2018adaptive}. Furthermore, even for the noiseless setting ($\eta=0$), our result significantly improves prior works which gave a PTAS with $m=\Theta(n^2\log n)$ randomly sampled quartets labeled correctly by $T$~\citep{snir2011linear,snir2012reconstructing}. Observe that if errors were completely adversarial, any algorithm would suffer quartet distance $\mathrm{dist}(\widehat T,T) > \eta$ and would not generalize. Earlier works for \textsc{MaxQuartetRec}, gave a PTAS with much larger sample size of $m=\Theta(n^4)$ quartets~\citep{jiang1998orchestrating,jiang2001polynomial,snir2011linear}, using techniques from~\cite{arora1995polynomial,arora2002new} for dense graph problems. Moreover, recall that for worst-case sample distributions, hardness-of-approximation results prevent us from approximating \textsc{MaxQuartetRec} beyond the trivial baseline of $\tfrac13$ achieved by a random tree~\citep{chatziafratis2023triplet}.

Taken together, Theorems~\ref{th:dimension},~\ref{th:ptas} answer the \hyperref[q:main]{Main Question} for learning trees (Problem~\ref{prob:learn}).

\subsection{Further Related Work}
Quartets have played an important role in phylogenomics, statistics and machine learning for recovering latent tree structures~\citep{daskalakis2006optimal,daskalakis2011evolutionary,anandkumar2011spectral,avni2015weighted,avni2019new,kandiros2023learning}, and also in logic and algebra due to closure operations on trees~\citep{grunewald2007closure,bodirsky2010complexity,bodirsky2012complexity,bodirsky2017complexity}. More technically distant works on recovering trees from \textit{pairwise} information on the leaves have also been widely-studied in computer science and metric embeddings. For example, a related problem arising in phylogenetics is given an $n\times n$ matrix describing dissimilarity on $n$ taxa, find the tree metric or ultrametric that best fits the matrix~\citep{agarwala1998approximability,ailon2005fitting,charikar2024improved,cohen2024fitting,cohen2025fitting}. Algorithms for community detection in the Stochastic Block Model have been extensively studied in the literature; see e.g.,~\cite{boppana1987eigenvalues,feige2001heuristics,mcsherry2001spectral,coja2010graph,massoulie2014community,
makarychev2012approximation,
makarychev2014constant,
makarychev2016learning,
   mossel2015consistency,feldman2015subsampled,
    guedon2016community,
abbe2018community}. For hierarchical clustering, these settings were studied by~\cite{cohen2017hierarchical,cohen2019hierarchical,manghiuc2021hierarchical} in beyond-worst-case models with a planted tree, where they show a constant-factor approximation for the similarity-based objective introduced by~\cite{dasgupta2016cost}. These works differ from our setting as they do not address quartets.

\paragraph{Roadmap.} Section~\ref{sec:tech-overview} provides the technical overview for our reconstruction algorithm, and Section~\ref{sec:tree-reconstruction} provides the details of the \emph{clustering}, \emph{reconstruction}, and \emph{verification} procedures assuming black box access to $\mathrm{QED}$ procedure. Section~\ref{sec:community-detection} describes in detail the $\mathrm{QED}$ procedure: we formulate and analyze the SDP for recovering the leaves of a subtree of $T$. Finally, Section~\ref{sec:nat-dimension} provides the proof for the bound on the Natarajan dimension. Throughout the paper, we use $\binom{S}{k}$ to denote the set of $k$-subsets of $S$. $T$ denotes the ground-truth tree on $L$ leaves, and $\widehat T$ is the reconstructed tree. We use $\mathcal{Q}(T)$ or simply $\mathcal{Q}$ for the set of $\binom{n}{4}$ quartets that $T$ satisfies. Trees can be rooted arbitrarily, as per the definition of quartets, this will not affect satisfaction of quartets or quartet distance. For a rooted tree $T$, we use $T_u$ for the subtree rooted at a vertex $u\in T$, and $T_{uv}$ for the subtree rooted at the lowest common ancestor of $u$ and $v$, $\lca(u,v)$.

\section{Technical Overview}\label{sec:tech-overview}

In this section, we provide an overview of the tree reconstruction algorithm.
We assume that there exists a ground-truth tree $T$ with $n$ leaves.
The algorithm receives a collection of $m = O(n)$ random quartets $(ab|cd)$.
Each quartet is obtained by sampling four distinct leaves $a,b,c,d$ uniformly at random from the set of all 4-tuples of leaves for which the relation $(ab|cd)$ holds in the ground-truth tree $T$.
To model noise, each quartet is independently replaced with one of the two unsatisfied quartets, $(ac|bd)$ or $(ad|bc)$, each with probability $\eta/2$ (see Section~\ref{sec:Intro:LearnTree} for details).
Thus, $\eta$ represents the probability that a sampled quartet is misclassified, and $\eta = 0$ corresponds to the noiseless setting.

Our goal is to reconstruct the underlying ground-truth tree $T$ from this collection of (possibly noisy) quartet constraints.
Given such samples, we aim to recover a tree $\widehat{T}$ whose structure is close to $T$ in the quartet distance.
We show that, with high probability, the reconstructed tree $\widehat{T}$ differs from $T$ on at most an $\eps$-fraction of quartets while using only polynomial time and a number of samples linear in $n$ (see Theorem~\ref{th:ptas}).

In this overview, we focus on the noiseless classification setting, where $\eta = 0$.
Note that even when $\eta = 0$, we cannot directly use an algorithm for finding a satisfying assignment, since this problem is $\mathsf{NP}$-complete and hence no polynomial-time algorithm exists (unless $\mathsf{P} = \mathsf{NP}$).
Later, we will build on the ideas presented in this section when describing the algorithm that applies to the case $\eta \in [0, \tfrac23)$.

The ground-truth tree $T$ is initially unrooted.
For convenience, we introduce a root $r$ by subdividing one of the edges of $T$ so that the two subtrees adjacent to $r$ contain between one-third and two-thirds of the leaves; that is, their sizes differ by at most a factor of $2{:}1$.
From this point onward, we view $T$ as a rooted tree.
Note that the quartet relation $(ab|cd)$ is independent of the choice of the root: it evaluates to true if and only if the paths between $a$ and $b$ and between $c$ and $d$ are vertex-disjoint.

The starting point of our discussion is the following observation: the probability that, for a given pair of vertices $(a,b)$, there exists a quartet constraint $(ab|cd)$ or $(cd|ab)$ in the random sample is slightly higher when the vertices $a$ and $b$ are close to each other rather than far apart.
More precisely, this probability is proportional to
\begin{equation}\label{eq:prob-edge-ab}
    1 - \frac{2|T_{ab}|}{n} + O\Big((|T_{ab}|/n)^2\Big),
\end{equation}
provided that $|T_{ab}|/n$ is sufficiently small, where $T_{ab}$ denotes the subtree rooted at the least common ancestor of $a$ and $b$.

To take advantage of this observation, we construct an auxiliary graph $G$ on the leaves of $T$.
For each quartet constraint $(ab|cd)$ in our random sample, we add either the edge $ab$ or $cd$ to $G$, each with probability $1/2$.
In this way, we obtain a random graph $G$ on the vertex set $L$ (the set of leaves of the tree), where the presence of all edges is independent, and the probability that an edge between $a$ and $b$ appears depends on the positions of $a$ and $b$ in the ground-truth tree $T$.

The random graph arises from a stochastic model similar to the Hierarchical Stochastic Block Model introduced by~\cite*{cohen2019hierarchical}.
Hence, one might consider applying the algorithms developed by~\cite{cohen2019hierarchical} in our setting.
Unfortunately, this is not possible for the following reasons:
(i) the general algorithm in the paper by~\cite{cohen2019hierarchical} provides only a constant-factor approximation, which is insufficient for our purposes; and
(ii) the probability bound~\eqref{eq:prob-edge-ab} holds only for leaves $a$ and $b$ that are sufficiently close to each other.

Nevertheless, we can use the graph $G$ to extract information about the structure of the underlying tree.
Building on techniques for learning communities in the Stochastic Block Model with $k$ communities and adversarial noise \citep*{makarychev2016learning},
we develop a procedure that identifies a set of leaves belonging to a subtree of size $\delta n$, where $\delta$ is a sufficiently small constant.

The idea behind this learning step is as follows.
Let $u$ be a vertex in the tree, and let $L_u$ denote the set of leaves in the subtree rooted at $u$.
Suppose $|L_u| \approx \delta n$.
Then, the probability that two leaves $a,b \in L_u$ are connected by an edge in $G$ is slightly higher than the probability of an edge $ac$, where $a \in L_u$ and $c \notin L_u$.
However, note that
(i) for leaves $c$ that are close to $L_u$, the probability of the edge $ac$ may be very close to that of $ab$; and
(ii) for leaves $c$ that are very far from $a$, this probability can actually be quite large. Thus, we cannot identify the set $L_u$ exactly but can only recover it approximately: we find a set $R_u$ such that
\[
    |L_u \triangle R_u| \leq O(\delta^2 n).
\]

Towards identifying such a set $R_u$, we employ a semidefinite program (SDP),
specified in Eqs.~(\ref{eq:sdpobj})-(\ref{eq:sdp3}), as a community
detection procedure. This SDP embeds the graph into a Euclidean space, with the hope that
the vectors corresponding to vertices within $S$ form a cluster --
that is, they are tightly grouped together
and well separated from the remaining vertices. However, as discussed above, the probability that
an edge between two vertices $u,v \in S$ is present is nearly identical to the probability of an edge
between $u \in S$ and a vertex $w \notin S$ when $w$ is sufficiently close to $u$.
Moreover, these probabilities may depend on the unknown topology of the underlying  tree $T$.
To account for this, we introduce a buffer set $B$ surrounding $S$,
consisting of intermediate vertices that separate $S$ from the remainder of the
ground-truth tree~$T$.
Our analysis shows that $|B| \le O(\delta^2 n)$.
Finally, to ensure that the SDP does not return an excessively large set
that could deviate from the intended tree structure of~$T$,
we include an appropriate \emph{spreading constraint}
(see Eq.~(\ref{eq:sdp3})).

Equipped with this recovery procedure, we can approximately recover the leaves of all subtrees $T_u$ containing approximately $\delta n$ leaves.
However, this information alone is not sufficient to approximately reconstruct the entire tree, since many, or even most, leaves may not belong to such subtrees $T_u$.

To address this, we consider an iterative algorithm.
The algorithm maintains a set of active leaves, denoted by $A_i$.
Initially, all leaves are active, i.e., $A_0 = L$.
At iteration $i$, the algorithm approximately identifies the set of leaves belonging to the subtree rooted at some vertex $u_i$ of $T$ that are still active at the beginning of that iteration, that is, the set $L_{u_i} \cap A_i$.
Denote the identified set by $R_i$.
We then remove $R_i$ from $A_i$ by setting
$
A_{i+1} = A_i \setminus R_i,
$
and proceed to the next iteration.

After running this iterative process, we obtain a partition of the leaves into clusters $R_1, \dots, R_k$.
Each cluster has size approximately $\delta n$, and the total number of clusters satisfies $k \approx 1 / \delta$.
(For certain technical reasons, we need to recover the left and right parts of $T$ separately; however, we omit this and other low-level details from this overview.)
Each set $R_i$ approximately corresponds to a set $L_{u_i}\cap A_i$, which denotes the leaves active at step $i$ in the subtree rooted at some vertex $u_i$.
Let $U = \{u_1, \dots, u_k\}$.

We now define sets $S_i$ as follows: each $S_i$ consists of all leaves $x$ in $T$ such that the unique shortest path from $x$ to $u_i$ does not contain any other vertex $u_j \in U$.
We show that each $R_i$ approximates not only $L_{u_i}\cap A_i$ but also $S_i$, that is,
\[
    \sum_{i=1}^k |R_i \triangle S_i| \leq \varepsilon n,
\]
for some small $\varepsilon > 0$.
Importantly, while the sets $R_i$ and $A_i$ depend on the entire execution of our algorithm, the partition $S_1, \dots, S_k$ is uniquely determined by the set $U$ of vertices $u_1, \dots, u_k$.

We now reconstruct the tree by first \emph{guessing} the relative positions of the nodes $u_1, \dots, u_k$ in the ground-truth tree $T$.
We then build a binary skeleton consisting of the vertices $u_1, \dots, u_k$ and at most $k$ additional Steiner vertices. We finally attach each set $R_i$ to the corresponding vertex $u_i$.

At a high level, this reconstruction procedure is similar to the approach of~\cite{snir2012reconstructing}, who refer to such a skeleton as a \emph{model} tree (see also~\cite*{jiang1998orchestrating,jiang2001polynomial,chatziafratis2023triplet}).
However, our method differs in several key respects.
Rather than \emph{guessing} the model in advance, we first identify the vertices $u_1, \dots, u_k$ using the recovery procedure and only then \emph{guess} their relative positions.
This strategy both simplifies the reconstruction process and makes it more robust to misclassified nodes produced by the recovery step.
In particular, it ensures that we never need to recover sets $L_{u_i}\cap A_i$ that are very small i.e., of size less than $\delta n$.

We next show that the reconstructed tree $\widehat{T}$ is close to the ground-truth tree $T$ in terms of the quartet distance.
The argument proceeds in two steps:
(a) if we were to attach the true sets $S_i$ to the reconstructed skeleton, the resulting tree would provide a good approximation of $T$; and
(b) since the recovered sets $R_i$ are close to the corresponding $S_i$, the same guarantee holds for our reconstructed tree $\widehat{T}$.

Throughout the reconstruction process, the algorithm makes several nondeterministic choices, which may lead to multiple candidate trees $\widehat{T}$.
We know that at least one of these candidates is a good approximation of the ground-truth tree $T$, but we do not initially know which one.
To identify a near-correct tree, we evaluate all candidate trees on the random instance of quartets $Q$ and select the one that achieves the highest consistency.
This validation step is justified by a uniform convergence argument based on our bound on the Natarajan dimension of the quartet consistency problem (see Theorem~\ref{thm:natarajan_quart}, Section~\ref{sec:nat-dimension}).

\section{Tree Reconstruction}\label{sec:tree-reconstruction}

In this section, we present the \emph{Quartet-based Embedding and Detection (QED)} tree reconstruction algorithm and prove Theorem~\ref{th:ptas}.
The construction follows the technical overview provided in Section~\ref{sec:tech-overview}.
Recall that we assume the ground-truth tree $T$ is rooted, and that both its left and right subtrees contain at least one third of all leaves.

The main building block of the algorithm is the community detection procedure. In Section~\ref{sec:algo_grothendieck} we describe a procedure to generate a graph based on the quartet set $Q$ and introduce the notion of a \emph{good graph} $G$, defined as a graph that satisfies the conditions of Lemma~\ref{lem:Actual_to_expected_bound}.
Lemma~\ref{lem:Actual_to_expected_bound} shows that, with probability exponentially close to~$1$ (specifically, with probability at least
$1 - \operatorname{poly}(1/\delta) \exp(-1.5n)$) over the randomness in the quartet set and the internal randomness of the algorithm, the graph constructed is good. 
Throughout this section, we shall assume that~$G$ is a good graph. We fix a sufficiently small parameter $\delta = \Theta(\epsilon)$, where $\epsilon$ denotes the target accuracy.

Below, we use the notation $T_u$ for the subtree rooted at vertex $u$ and $L_u$ for the set of leaves of $T_u$.
We prove the following theorem
in Section~\ref{sec:community-detection} (see Theorem~\ref{thm:qed-guarantee}).

\begin{theorem}[QED Procedure]
    \label{thm:QED}
    There exists a randomized polynomial-time algorithm that, given a random sample of quartets $Q$ for an unknown tree $T$, a subset of leaves $A \subseteq L$, and a side indicator $\sigma \in \{\Left, \Right\}$ (where $\Left$ and $\Right$ denote the left and right sides, respectively), returns a set $R \subseteq A$ such that $|R| \in [\delta n, 2\delta n]$.

    The algorithm satisfies the following guarantee.
    With probability at least $p_{\delta}$ (over the internal randomness of the algorithm),
    there exists a vertex $u$ in the $\sigma$-subtree of $T$ such that
    \[
        |R \triangle (L_u \cap A)| \le C \delta^2 n,
    \]
    where $p_{\delta} > 0$ is a constant depending only on $\delta$,
    provided that the following conditions hold:
    (i) $|Q| \ge \alpha_{\delta,\eta} n$ and $G$ is a good graph (see above);
    (ii) at least $48\delta n$ leaves of $A$ lie in the $\sigma$-side of $T$; and
    (iii) at least $n/4$ leaves of $A$ lie in the opposite subtree of $T$,
    where $\alpha_{\delta,\eta}$ is a constant depending only on $\delta$ and $\eta$.
\end{theorem}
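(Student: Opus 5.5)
We plan to prove the QED guarantee in three stages: compute edge probabilities, build an SDP on the active set, and round it by random selection. The guarantee only needs to hold with constant probability $p_\delta$, so we are free to guess unknown quantities, such as a good seed vertex, uniformly at random.

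\textbf{Stage 1 (the edge-probability model).} Using the good-graph conditions of Lemma~\ref{lem:Actual_to_expected_bound}, we will write the expected adjacency between leaves $a,b$ in terms of their tree positions. Up to normalization it equals $p_0(1-2|T_{ab}|/n+O((|T_{ab}|/n)^2))$ when $a,b$ are close. In general, pairs on opposite sides of the root have probability bounded away from within-side pairs. Condition (iii) guarantees many active leaves on the opposite side, and these serve as an anchor: they make the weight of a side-$\sigma$ set measurable against a fixed reference.

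\textbf{Stage 2 (choosing the target vertex and the SDP).} We will pick a target vertex $u$ in the $\sigma$-subtree by walking down from the $\sigma$-child of the root. We stop at the lowest $u$ with $|L_u\cap A|\ge \delta n$. Each child of $u$ then has fewer than $\delta n$ active leaves, so $|L_u\cap A|\in[\delta n,2\delta n]$. Condition (ii), at least $48\delta n$ active leaves on side $\sigma$, ensures that such a $u$ exists well inside the side rather than at its root.

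We will then write the SDP (vectors $\bar v_a$ for $a\in A$, unit norm, nonnegative inner products, objective maximizing $\sum_{ab\in E(G)}\langle \bar v_a,\bar v_b\rangle$ minus a centered term, plus a spreading constraint $\sum_b\langle\bar v_a,\bar v_b\rangle\le 2\delta n$). We will show that the intended integral solution, the indicator of $L_u\cap A$, is feasible. We will also show that its objective value exceeds that of any solution placing mass $\gg\delta^2 n$ outside $S\cup B$, where $B$ is a buffer of leaves within tree distance "one more level" of $u$, with $|B|=O(\delta^2 n)$.

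The comparison proceeds in two steps. First, replace $G$ by its expectation; the good-graph property gives a Grothendieck-type bound $|\sum (A_G-\mathbb E A_G)_{ab}\langle \bar v_a,\bar v_b\rangle|\le \delta^3 n$ once $|Q|\ge\alpha_{\delta,\eta}n$, in the style of \citet{makarychev2016learning}. Second, analyze the expected objective, which is a weighted function of $|T_{ab}|$ and which we bound by the tree structure. Noise only rescales the signal by $(1-\tfrac32\eta)$, which is absorbed into $\alpha_{\delta,\eta}$.

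\textbf{Stage 3 (rounding).} We will pick a random active leaf $a$; with probability $\Omega(\delta)$ it lands in $L_u\cap A$ and is well embedded. We then take $R$ to be the leaves $b$ with $\langle\bar v_a,\bar v_b\rangle\ge 1/2$, trimmed or padded to size in $[\delta n,2\delta n]$. A Markov argument shows that $|R\triangle(L_u\cap A)|\le C\delta^2 n$.

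\textbf{Main obstacle.} We expect the hardest step to be the structural analysis of the expected objective. It must show that the only near-optimal SDP solutions concentrate on a set of the form $L_u\cap A$ up to $O(\delta^2 n)$ error, even though leaves in sibling subtrees close to $u$ have nearly identical edge probabilities. Our first attempt will be a charging argument over the path from $u$ to the root. For each leaf $c\notin L_u$, the deficit in $\Pr[ac]$ relative to $\Pr[ab]$ is at least proportional to $(|T_{ac}|-|T_{ab}|)/n$. Leaves whose deficit is below $\delta^2$ fall into the buffer $B$, and we will argue that $|B|=O(\delta^2 n)$.
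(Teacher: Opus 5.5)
Your ingredients match the paper's: a graph built from quartets, an SDP with nonnegativity and a spreading constraint, Grothendieck concentration, an $O(\delta^2 n)$ buffer, a random seed, and the threshold $\langle\bar v_a,\bar v_b\rangle\ge 1/2$. However, the central structural step is formulated in a way that cannot work. The SDP optimum is not close to the indicator of $S=L_u\cap A$. Every subtree with about $\delta n$ active leaves is an equally valid planted community, and the optimum clusters all of them at once. A solution that also clusters the other $\Theta(1/\delta)$ such subtrees beats the indicator of $S$ by their total contribution. That contribution dwarfs what it loses by letting the vectors of $S$ overlap a neighbouring subtree just outside the buffer, where the expected weight is only $-\Theta(\delta^2)\frac{c}{n}$. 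So no comparison of global objective values against the intended integral solution can single out your arbitrarily chosen $u$; the SDP does not know $u$, and near-optimal solutions do not ``concentrate on'' $L_u\cap A$.

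The missing idea is a \emph{local exchange} applied to the actual optimum $\{\bar v_a\}$. Replace every vector of $S$ by one fresh unit vector $e$ orthogonal to all others, and leave all other vectors unchanged. This stays feasible: the row sum for $e$ is $|S|$, within the spreading bound, and the other row sums only decrease. Now compare on the expected matrix $\bar A$:
\begin{itemize}
\item Pairs inside $S$ gain at least $\Omega(\delta^2)\frac{c}{n}(1-\frac{3\eta}{2})|S|^2(1-\alpha)$, where $\alpha$ is the average inner product inside $S$.
\item Pairs between $S$ and leaves beyond the buffer can only gain, because there $\bar A\le 0$ and the nonnegative inner products are reset to $0$.
\item Pairs between $S$ and $B$ lose at most $O(\delta^2)\frac{c}{n}|S||B|=O(\delta^5)\frac{c}{n}|A|^2$.
\item All other pairs are unchanged.
\end{itemize}
Optimality on the observed instance, plus concentration, then forces $1-\alpha=O(\delta)$ regardless of what the SDP does elsewhere. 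For this the concentration error must be $O(\delta^5 n)$ after normalizing by $c$. Your $\delta^3 n$ is too weak, though that only affects $\alpha_{\delta,\eta}$.

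Second, the centering threshold and the spreading bound must both be calibrated to $|S|$, and your plan guesses only the seed, never $|S|$. The paper guesses $\hat\delta\ge\delta_S:=|S|/|A|$ from a $\poly(1/\delta)$ grid with $\hat\delta-\delta_S\ll\delta^3$. It then sets $q=\rho^-(\hat\delta+4\hat\delta^2)$ and imposes $\sum_b\langle\bar v_a,\bar v_b\rangle\le\hat\delta|A|$.

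With one threshold serving every $|S|\in[\delta n,2\delta n]$, the set of leaves with nonnegative expected weight to $S$ can exceed $S$ by $\Theta(\delta n)$. For $|S|\approx\delta n$ it contains every $b$ whose $\lca$ with $S$ has up to about $2\delta n$ active leaves, so $|B|=O(\delta^2 n)$ fails. Your ``one more level'' description of $B$ has the same defect, since the sibling of $u$ can have $\Theta(\delta n)$ active leaves.

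Likewise, the fixed bound $2\delta n$ leaves slack $2\delta n-|S|$. The spreading constraint alone then only gives $\beta=\Avg_{a\in S,\,b\in A\setminus S}\langle\bar v_a,\bar v_b\rangle=O(\delta)$, and Markov bounds $|R\setminus S|$ only by $O(\delta n)$. With the tight bound, summing the constraint over $a\in S$ gives $\beta\le(\delta_S(1-\alpha)+\delta_S^2)/(1-\delta_S)=O(\delta^2)$. Together, $1-\alpha=O(\delta)$ and $\beta=O(\delta^2)$ are exactly what your seed-and-threshold rounding needs.

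Finally, conditions (ii) and (iii) do not act as an ``anchor.'' They enter through the cross-root bound $1-2\alpha'\beta'$ with $\alpha'\ge 6\delta_S$ and $\beta'\ge 1/4$, which gives $\bar A\le 0$ across the root and keeps the buffer on side $\sigma$.
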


We postpone the detailed discussion of the QED procedure until Section~\ref{sec:community-detection} and now describe the overall tree reconstruction algorithm. The algorithm proceeds in three main phases: the \emph{clustering step}, the \emph{reconstruction step}, and the \emph{verification step}.

\begin{algorithm}[ht]\label{fig:alg:clustering}
\caption{\textsc{ClusteringStep}$(Q, L, \rho_{\Left}, \rho_{\Right}, \delta)$}
\label{fig:alg:clustering-step}
\Input{Quartet sample $Q$, leaves $L$, estimates $(\rho_{\Left}, \rho_{\Right})$, parameter $\delta$}
\Output{Clusters $\{R^{\Left}_i\}$, $\{R^{\Right}_i\}$, and $R^{\circ}$}

Let $\delta = \Theta(\epsilon)$ be a sufficiently small parameter.

\ForEach{$\sigma \in \{\Left, \Right\}$\textup{:}}{

    $A_1^{\sigma} = L$ \Comment{Initialize active leaves}
    $i = 1$

    \While{$ \sum_{i'<i}
        |R_{i'}^{\sigma}|
        \leq \rho_{\sigma} - 49\delta n$\textup{:}}{
        $R^{\sigma}_i = \textsc{QED}(Q, A_i^{\sigma}, \sigma)$
        \Comment{Apply QED on the current side}

        $A^{\sigma}_{i+1} = A^{\sigma}_i \setminus R^{\sigma}_i$ \Comment{Remove clustered leaves}
        $i = i + 1$
    }
    $k_{\sigma}=i-1$ \Comment{Number of clusters for $\sigma$}
}

$R^{\circ} = L \setminus
(\bigcup_{\sigma,i} R^{\sigma}_i)$
\Comment{Place all unassigned leaves into $R^{\circ}$}

\Return $\{R^{\Left}_1, \dots, R^{\Left}_{k_{\Left}}\}$,
        $\{R^{\Right}_1, \dots, R^{\Right}_{k_{\Right}}\}, R^{\circ}$
\end{algorithm}

\subsection{Clustering Step}
The algorithm begins by guessing the sizes  of the left and right subtrees of the unknown tree $T$ up to an additive error of $\delta n$.
We denote this guess by $(\rho_{\Left}, \rho_{\Right})$.
This guess is correct with probability at least $\delta$; that is, with probability at least $\delta$, we have
$\big|\rho_{\Left} - |L_{r_{\Left}}|\big| \le \delta n$,
where $r_{\Left}$ is the left child of the root $r$, and $L_{r_{\Left}}$ denotes the set of leaves in the subtree rooted at $r_{\Left}$.

Having fixed the estimate $(\rho_{\Left}, \rho_{\Right})$, the algorithm recursively partitions the leaves on the left and right sides of the tree.
The algorithm first selects a side $\sigma \in \{\Left, \Right\}$; throughout, we use $\Left$ and $\Right$ to denote the left and right sides, respectively. It then iteratively partitions the leaves on the $\sigma$-side of the tree.
The algorithm maintains a set of active leaves $A^{\sigma}_i \subseteq L$, consisting of the leaves that remain unclustered at the beginning of iteration~$i$.
In each iteration, it invokes the QED procedure (Theorem \ref{thm:QED}) with the $\sigma$-side indicator and the current active set $A^{\sigma}_i$.
The QED procedure returns a subset $R^{\sigma}_i \subseteq A^{\sigma}_i$ of size between $\delta n$ and $2\delta n$.
After obtaining $R^{\sigma}_i$, the algorithm removes these leaves from the active set, setting
\[
    A^{\sigma}_{i+1} = A^{\sigma}_i \setminus R^{\sigma}_i,
\]
and repeats the process.
This iterative procedure continues as long as the total number of leaves in the recovered sets $R^{\sigma}_1, \dots, R^{\sigma}_i$ remains below $\rho_{\sigma} - 49\delta n$.

The clustering step is applied independently to the left and right subtrees of the tree, producing collections
$R^{\Left}_1, \dots, R^{\Left}_{k_{\Left}}$ and
$R^{\Right}_1, \dots, R^{\Right}_{k_{\Right}}$.
We note that the recovered clusters on the left and right sides may overlap slightly, and a small fraction of leaves may remain unassigned.
We place these unassigned leaves into a separate set $R^{\circ}$.
The pseudocode for the clustering step is provided in Algorithm~\ref{fig:alg:clustering-step}.

We now state and prove the main theorem concerning the clustering step.
To this end, we first introduce the following definition (see Figure~\ref{fig:phi-partition}).

\begin{definition}\label{def:phi_partition}
    Let $T$ be a rooted tree, and let $U$ be a subset of its vertices that includes the root.
    Define the mapping $\phi_U : V(T) \to U$ as follows: for each vertex $v \in V(T)$,
    $\phi_U(v)$ is the first vertex in $U$ encountered on the unique path from $v$ to the root.
    Then, for each $u \in U$, define
    \[
        S_U(u) = \{v \in L : \phi_U(v) = u\}.
    \]
    In other words, $S_U(u)$ is the set of leaves $v$ such that the path from $v$ to $u$ does not contain any other vertices from $U$.
\end{definition}

\begin{figure}[ht]
    \centering
    \begin{tikzpicture}[
        dot/.style={circle,fill,inner sep=1.6pt},
        uset/.style={draw,fill=black,regular polygon,regular polygon sides=4,inner sep=2.2pt},
        >=stealth
    ]
        % ---- coordinates ----
        % Root (U-vertex)
        \coordinate (r)   at (0, 4.2);
        % Level 1
        \coordinate (a)   at (-2.4, 3.2);   % left child of root (internal)
        \coordinate (b)   at ( 2.4, 3.2);   % right child of root (internal)
        % Level 2
        \coordinate (u1)  at (-3.2, 2.2);   % U-vertex in left subtree
        \coordinate (c)   at (-1.6, 2.2);   % internal
        \coordinate (u2)  at ( 1.6, 2.2);   % U-vertex in right subtree
        \coordinate (d)   at ( 3.2, 2.2);   % internal
        % Level 3
        \coordinate (e)   at (-3.6, 1.2);   % internal under u1
        \coordinate (f)   at (-2.8, 1.2);   % internal under u1
        \coordinate (g)   at ( 1.2, 1.2);   % internal under u2
        \coordinate (h)   at ( 2.0, 1.2);   % internal under u2
        % Leaves
        \coordinate (l1)  at (-3.9, 0.2);
        \coordinate (l2)  at (-3.3, 0.2);
        \coordinate (l3)  at (-3.1, 0.2);
        \coordinate (l4)  at (-2.5, 0.2);
        \coordinate (l5)  at (-1.9, 0.2);
        \coordinate (l6)  at (-1.3, 0.2);
        \coordinate (l7)  at ( 0.9, 0.2);
        \coordinate (l8)  at ( 1.5, 0.2);
        \coordinate (l9)  at ( 1.7, 0.2);
        \coordinate (l10) at ( 2.3, 0.2);
        \coordinate (l11) at ( 2.9, 0.2);
        \coordinate (l12) at ( 3.5, 0.2);

        % ---- background regions for S_U sets ----
        \begin{scope}[on background layer]
            % S_U(u1) — leaves l1..l4
            \fill[blue!12, rounded corners=6pt]
                (-4.2, -0.15) rectangle (-2.2, 0.55);
            % S_U(u2) — leaves l7..l10
            \fill[red!12, rounded corners=6pt]
                (0.6, -0.15) rectangle (2.6, 0.55);
            % S_U(u3) — leaves l5,l6
            \fill[olive!12, rounded corners=6pt]
                (-2.15, -0.15) rectangle (-1.05, 0.55);
            % S_U(r) — leaves l11,l12
            \fill[green!12, rounded corners=6pt]
                (2.65, -0.15) rectangle (3.75, 0.55);
        \end{scope}

        % ---- edges ----
        \draw (r) -- (a);
        \draw (r) -- (b);
        \draw (a) -- (u1);
        \draw (a) -- (c);
        \draw (b) -- (u2);
        \draw (b) -- (d);
        \draw (u1) -- (e);
        \draw (u1) -- (f);
        \draw (e) -- (l1);
        \draw (e) -- (l2);
        \draw (f) -- (l3);
        \draw (f) -- (l4);
        \draw (c) -- (l5);
        \draw (c) -- (l6);
        \draw (u2) -- (g);
        \draw (u2) -- (h);
        \draw (g) -- (l7);
        \draw (g) -- (l8);
        \draw (h) -- (l9);
        \draw (h) -- (l10);
        \draw (d) -- (l11);
        \draw (d) -- (l12);

        % ---- nodes ----
        % U-vertices (squares)
        \node[uset,label={above:$r$}] at (r)  {};
        \node[uset,label={left:$u^{\Left}_1$}] at (u1) {};
        \node[uset,label={left:$u^{\Right}_1$}] at (u2) {};
        \node[uset,label={left:$u^{\Left}_2$}] at (a) {};

        % Internal nodes (dots)
        \foreach \p in {b,c,d,e,f,g,h}{\node[dot] at (\p) {};}

        % Leaves (dots)
        \foreach \p in {l1,l2,l3,l4,l5,l6,l7,l8,l9,l10,l11,l12}{
            \node[dot] at (\p) {};
        }

        % ---- phi_U arrows: all leaves (gray) ----
        % S_U(u1): l1,l2,l3,l4 -> u1
        \draw[->, dashed, thin, gray]
            (l1) .. controls ++(-0.3, 0.5) and ++(-0.3, -0.2) .. ([xshift=-4pt]u1);
        \draw[->, dashed, thin, gray]
            (l2) .. controls ++(-0.15, 0.6) and ++(-0.1, -0.4) .. ([xshift=-1.5pt]u1);
        \draw[->, dashed, thin, gray]
            (l3) .. controls ++(0.1, 0.6) and ++(0.1, -0.4) .. ([xshift=1.5pt]u1);
        \draw[->, dashed, thin, gray]
            (l4) .. controls ++(0.3, 0.5) and ++(0.3, -0.2) .. ([xshift=4pt]u1);
        % S_U(u2): l7,l8,l9,l10 -> u2
        \draw[->, dashed, thin, gray]
            (l7) .. controls ++(-0.3, 0.5) and ++(-0.3, -0.2) .. ([xshift=-4pt]u2);
        \draw[->, dashed, thin, gray]
            (l8) .. controls ++(-0.1, 0.6) and ++(-0.1, -0.4) .. ([xshift=-1.5pt]u2);
        \draw[->, dashed, thin, gray]
            (l9) .. controls ++(0.1, 0.6) and ++(0.1, -0.4) .. ([xshift=1.5pt]u2);
        \draw[->, dashed, thin, gray]
            (l10) .. controls ++(0.3, 0.5) and ++(0.3, -0.2) .. ([xshift=4pt]u2);
        % S_U(u3): l5,l6 -> a
        \draw[->, dashed, thin, gray]
            (l5) .. controls ++(-0.5, 1.2) and ++(0.3, -0.6) .. ([xshift=2pt]a);
        \draw[->, dashed, thin, gray]
            (l6) .. controls ++(0.3, 1.2) and ++(1.0, -0.6) .. ([xshift=4pt]a);
        % S_U(r): l11,l12 -> r
        \draw[->, dashed, thin, gray]
            (l11) .. controls ++(0.1, 2.2) and ++(1.0, -0.8) .. ([xshift=2pt]r);
        \draw[->, dashed, thin, gray]
            (l12) .. controls ++(0.3, 2.2) and ++(3.2, -0.8) .. ([xshift=4pt]r);

        % ---- S_U labels (braces) ----
        \draw[decorate, decoration={brace, amplitude=5pt, mirror}]
            (-4.1, -0.25) -- (-2.3, -0.25)
            node[midway, below=6pt, font=\small] {$S_U(u^{\Left}_1)$};

        \draw[decorate, decoration={brace, amplitude=5pt, mirror}]
            (0.7, -0.25) -- (2.5, -0.25)
            node[midway, below=6pt, font=\small] {$S_U(u^{\Right}_1)$};

        \draw[decorate, decoration={brace, amplitude=5pt, mirror}]
            (-2.1, -0.25) -- (-1.1, -0.25)
            node[midway, below=6pt, font=\small] {$S_U(u^{\Left}_2)$};

        \draw[decorate, decoration={brace, amplitude=5pt, mirror}]
            (2.7, -0.25) -- (3.7, -0.25)
            node[midway, below=6pt, font=\small] {$S_U(r)$};

        % ---- legend ----
        \node[uset, label={right:\small $\in U$}] at (4.5, 4.2) {};
        \node[dot,  label={right:\small $\notin U$}] at (4.5, 3.6) {};
        \draw[->, dashed, thin, gray] (4.2, 2.8) -- (4.8, 2.8)
            node[right, font=\small] {$\phi_U$};

    \end{tikzpicture}
    \caption{Illustration of Definition~\ref{def:phi_partition}.
    Vertices in $U = \{r, u^{\Left}_1, u^{\Left}_2, u^{\Right}_1\}$ are shown as filled squares.
    For each leaf, $\phi_U$ maps it to the nearest ancestor in $U$ 
    (dashed arrows).
    The colored regions indicate the partition of leaves into $S_U(u^{\Left}_1)$, $S_U(u^{\Left}_2)$, $S_U(u^{\Right}_1)$, and $S_U(r)$.}
    \label{fig:phi-partition}
\end{figure}
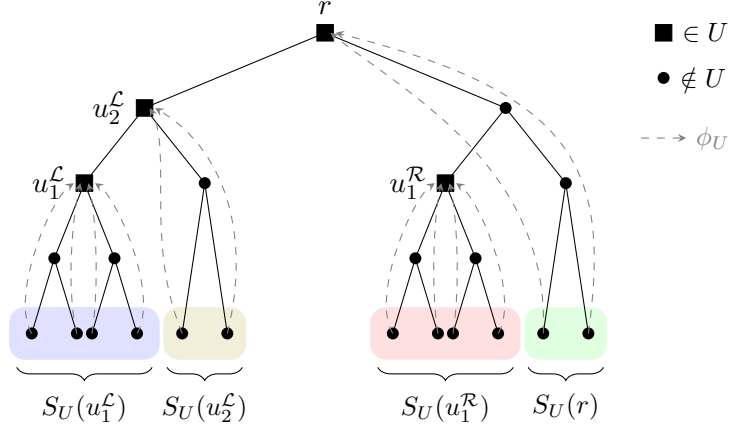

\begin{theorem}[Correctness of the Clustering Step]
    \label{thm:clustering-correctness}
    There exists a constant $p_{\delta}=\delta^{O(\frac{1}{\delta})} > 0$ (the \emph{success probability}) such that,
    with probability at least $p_{\delta}$, the clusters
    $R^{\Left}_1, \dots, R^{\Left}_{k_{\Left}}$,
    $R^{\Right}_1, \dots, R^{\Right}_{k_{\Right}}$, and
    $R^{\circ}$
    produced by the clustering step satisfy the following property.
    There exists a set of vertices
    \[
        U =
        \Big\{
            r,\,
            u^{\Left}_1, \dots, u^{\Left}_{k_{\Left}},
            u^{\Right}_1, \dots, u^{\Right}_{k_{\Right}}
        \Big\}
    \]
    in the ground-truth tree $T$ such that
    \begin{equation}
        \label{eq:thm:clustering}
        \sum_{\sigma \in \{\Left, \Right\}}
        \sum_{i=1}^{k_{\sigma}} \bigl| R^{\sigma}_i \triangle S_U(u^{\sigma}_i) \bigr|
        + \bigl| R^{\circ} \triangle S_U(r) \bigr|
        \le C \delta n,
    \end{equation}
    for some universal constant $C > 0$.
\end{theorem}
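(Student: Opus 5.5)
We condition on three events, then do a charging argument over the iterations. First, the guess $(\rho_{\Left},\rho_{\Right})$ is correct up to additive $\delta n$, which happens with probability at least $\delta$. Second, the graph $G$ is good. Third, every call to QED succeeds in the sense of Theorem~\ref{thm:QED}. Each call returns at least $\delta n$ leaves, and the loop stops once $\rho_\sigma-49\delta n$ leaves have been collected, so each side makes at most $k_\sigma\le 1/\delta+1$ calls. The calls are independent given the quartet sample, so all succeed with probability at least $p_\delta^{O(1/\delta)}$. Multiplying, the total success probability is of the form $\delta^{O(1/\delta)}$ after adjusting $p_\delta$.

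Before invoking the per-call guarantee we check its hypotheses at every iteration. Throughout the $\sigma$-loop, the leaves removed so far number at most $\rho_\sigma-49\delta n+2\delta n$. Given the correct guess, at least $|L_{r_\sigma}|-\rho_\sigma+47\delta n\ge 46\delta n$ leaves of $A_i^\sigma$ remain on the $\sigma$-side, plus leaves outside it. This needs (ii) to be read with a slightly smaller constant, or a slightly shifted threshold. Condition (iii) holds because the $\sigma$-loop starts from $A_1^\sigma=L$ and removes at most about $2/3\cdot n$ leaves, mostly from the $\sigma$-side. The opposite side therefore retains at least $n/3-O(\delta^2 n)\cdot k_\sigma\ge n/4$ leaves, using that each $R_i^\sigma$ has at most $C\delta^2 n$ leaves outside $L_{u_i}$. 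Hence each call yields a vertex $u_i^\sigma$ in the $\sigma$-subtree with $|R_i^\sigma\triangle(L_{u_i^\sigma}\cap A_i^\sigma)|\le C\delta^2 n$. Summed over $O(1/\delta)$ iterations, these errors total $O(\delta n)$.

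The main step is converting $L_{u_i}\cap A_i$ into $S_U(u_i)$. Up to the accumulated symmetric-difference errors, $A_i^\sigma=L\setminus\bigcup_{j<i}R_j^\sigma$ equals $L$ minus $\bigcup_{j<i}L_{u_j}$. So $L_{u_i}\cap A_i$ is approximately the set of leaves of $L_{u_i}$ not lying below an earlier $u_j$. This differs from $S_U(u_i)$ in two ways:
\begin{itemize}
\item[(a)] $S_U(u_i)$ excludes leaves below later vertices $u_j$, $j>i$, that are descendants of $u_i$;
\item[(b)] an earlier $u_j$ may be an ancestor of $u_i$, making $L_{u_i}\cap A_i$ nearly empty, which contradicts $|R_i|\ge\delta n$.
\end{itemize}
For (b), if $u_j$ is an ancestor of $u_i$ with $j<i$, then $L_{u_i}\cap A_i$ is essentially empty, so $|R_i|\le C\delta^2 n+O(\text{error})<\delta n$, a contradiction. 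Errors from earlier rounds are absorbed by charging them to those rounds.

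For (a), suppose $j>i$ and $u_j$ is a descendant of $u_i$. Then $L_{u_j}\cap A_j\subseteq L_{u_i}\cap A_j$. Up to error, $L_{u_i}\cap A_j$ is empty, since $R_i$ already removed it. So $|R_j|$ would be small, again a contradiction. Thus, up to errors, the sets $L_{u_i}\cap A_i$ are exactly the $S_U(u_i)$.

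I expect the hardest part is making these error charges rigorous without compounding. I would define $E=\sum_j|R_j\triangle(L_{u_j}\cap A_j)|\le O(\delta n)$ and bound each discrepancy linearly in $E$, avoiding any recursion in which errors multiply. The extra symmetric difference from leaves mapped to $r$ or to the wrong side is handled the same way. Finally, $R^\circ$ consists of leaves missed on both sides. Since $\sum_i|R_i^\sigma|\ge\rho_\sigma-49\delta n\ge|L_{r_\sigma}|-50\delta n$ at termination, $R^\circ$ differs from $S_U(r)$, the leaves under no $u_i^\sigma$, by $O(\delta n)$. This holds after also accounting for overlap between the left and right clusters, which is bounded by the $C\delta^2 n$ per-call leakage across sides. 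Summing all contributions gives~\eqref{eq:thm:clustering}.
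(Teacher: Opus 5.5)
Your outline follows the paper's route: condition on a correct size guess, a good graph, and every QED call succeeding; check the QED hypotheses; show that no earlier-chosen $u_j$ is an ancestor of a later $u_i$; then pass from $L_{u_i}\cap A_i$ to $S_U(u_i)$. However, the step that actually yields the $O(\delta n)$ bound is left as a plan, and the plan as stated fails. Write $D_j=R_j\triangle(L_{u_j}\cap A_j)$ and $E=\sum_j|D_j|\le C\delta n$. You propose to bound ``each discrepancy linearly in $E$''. A bound of $O(E)$ for each of the $k_\sigma=\Theta(1/\delta)$ clusters only sums to $O(E/\delta)=O(n)$, which is vacuous. This is exactly what per-round accounting produces: your approximation $A_i\approx L\setminus\bigcup_{j<i}L_{u_j}$ has error $\sum_{j<i}|D_j|$, so an error made in round $j$ is charged again in every later round. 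Your worry about ``recursion in which errors multiply'' does not address this additive $1/\delta$ blow-up.

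The missing idea is a per-leaf count. The sets $R^\sigma_i$ are pairwise disjoint, and so are the sets $S_U(u^\sigma_i)$, so every leaf lies in at most two of the sets $R_i\triangle S_U(u_i)$. Hence $\sum_i|R_i\triangle S_U(u_i)|\le 2\bigl|\bigcup_i (R_i\triangle S_U(u_i))\bigr|$, and it suffices to show $\bigcup_i(R_i\triangle S_U(u_i))\subseteq\bigcup_i D_i$. The paper proves this by taking the least $i^*$ with $x\in R_{i^*}\triangle S_U(u_{i^*})$, showing $x\in A_{i^*}$, and then showing $x\in D_{i^*}$ via the no-earlier-ancestor property. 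Your own decomposition also gives the containment on a fixed side. Once that property holds, $S_U(u_i)=L_{u_i}\setminus\bigcup_{j<i}L_{u_j}$ exactly, and $A_i\triangle\bigl(L\setminus\bigcup_{j<i}L_{u_j}\bigr)\subseteq\bigcup_{j<i}D_j$, so $R_i\triangle S_U(u_i)\subseteq\bigcup_{j\le i}D_j$. You never combine this with the at-most-twice count, and that combination is what the theorem needs.

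Smaller points:
\begin{itemize}
\item Your cases (a) and (b) are the same configuration. Ruling it out needs no charging of earlier errors, since $\delta n-C\delta^2 n\le|L_{u_i}\cap A_i|\le|L_{u_j}\cap A_{j+1}|\le C\delta^2 n$.
\item Hypothesis (ii) holds with exactly $48\delta n$. The loop test runs immediately before each call, so at most $\rho_\sigma-49\delta n$ leaves have been removed at that moment; no constant needs adjusting.
\item The calls are not independent, since each input depends on earlier outputs. The chain rule with the per-call guarantee still gives the product bound.
\item For $R^\circ$ you need neither the termination condition nor overlap accounting, because $R^\circ\triangle S_U(r)=\bigl(\bigcup R^\sigma_i\bigr)\triangle\bigl(\bigcup S_U(u^\sigma_i)\bigr)\subseteq\bigcup\bigl(R^\sigma_i\triangle S_U(u^\sigma_i)\bigr)$.
\end{itemize}
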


\begin{proof}
    The clustering procedures for the left and right subtrees are independent, and thus we will analyze them separately.
    We say that the procedure \emph{succeeds} for side $\sigma \in \{\Left, \Right\}$ if there exist vertices
    $u^{\sigma}_1, \dots, u^{\sigma}_{k_{\sigma}}$, all lying on the $\sigma$-side of the tree $T$, such that
    \begin{equation}
        \label{eq:thm:clustering-one-side}
        \sum_{i=1}^{k_{\sigma}} \bigl| R^{\sigma}_i \triangle
        S_{U^{\sigma}}(u^{\sigma}_i) \bigr|
        \le C' \delta n,
    \end{equation}
    where $U^{\sigma} = \{r, u^{\sigma}_1, \dots, u^{\sigma}_{k_{\sigma}}\}$ and
    $C'$ is a universal constant.

    We show that the algorithm succeeds for each side of the tree
    with a positive probability $p'_{\delta}$ that depends only on $\delta$.
    For clarity of exposition, we focus on the left side of the tree;
    the argument for the right side is entirely analogous. We remind the reader that $|L_{r_{\Left}}|$ denotes the number of leaves
    in the left subtree.

    \begin{lemma}
        \label{lem:one-side-success}
        For a fixed constant $\delta > 0$, there exists a positive constant
        $p'_{\delta} > 0$ such that, conditioned on the event
        $\bigl|\rho_{\Left} - |L_{r_{\Left}}|\bigr| \le \delta n$,
        the clustering procedure applied to the left side of the tree
        succeeds with probability at least $p'_{\delta}=\delta^{O(\frac{1}{\delta})}$.
        In other words, with conditional probability at least $p'_{\delta}$,
        there exist vertices
        $u^{\Left}_1, \dots, u^{\Left}_{k_{\Left}}$
        on the left side of the tree $T$ satisfying~\eqref{eq:thm:clustering-one-side}.
    \end{lemma}
    We first establish the following claim.
    \begin{claim}
        \label{cl:lem:one-side-success}
        For a fixed constant $\delta > 0$, there exists a positive constant
        $p'_{\delta} > 0$ such that, conditioned on the events
        $\bigl|\rho_{\Left} - |L_{r_{\Left}}|\bigr| \le \delta n$ and that the constructed graph $G$ is good (see Section \ref{sec:algo_grothendieck} and earlier discussion),
        all invocations of the QED procedure succeed with probability
        at least $p'_{\delta}=\delta^{O(\frac{1}{\delta})}$.
    \end{claim}

    \begin{proof}[Proof of Claim~\ref{cl:lem:one-side-success}]
        Observe that the number of invocations of the QED procedure is at most $1/\delta$.
        Indeed, each call to QED removes at least $\delta n$ leaves from the active set $A^{\Left}_i$,
        and since there are at most $n$ leaves in total, the procedure can iterate at most $1/\delta$ times.

        Next, we verify that the preconditions required for invoking the QED procedure
        remain valid throughout the process, as long as the algorithm continues to succeed.
        Specifically, we must ensure that, at every iteration~$i$,
        the left side of the tree contains at least $48\delta n$ active leaves
        and the right side contains at least $n/4$ active leaves,
        assuming that all previous QED invocations have succeeded.

        Recall that the algorithm continues clustering the left side as long as
        $
        \sum_{i'<i}
        |R_{i'}^{\sigma}|
        \leq \rho_{\sigma} - 49\delta n.
        $
        Assuming that $\rho_{\sigma}$ was guessed within an additive error of $\delta n$, we have
        \[
            |L_{r_{\Left}}| - \sum_{i'<i}
            |R_{i'}^{\sigma}|
            \ge (\rho_{\sigma} - \delta n) - (\rho_{\sigma} - 49\delta n)
            \ge 48\delta n.
        \]
        Thus, at each iteration, the left side continues to contain enough active leaves
        to satisfy the conditions of Theorem~\ref{thm:QED}.

        Each successful QED call may misclassify or remove at most $C\delta^2 n$ leaves
        from the right side. Therefore, after all iterations, the total number of leaves
        erroneously removed from the right side is at most
        $C\delta^2 n \cdot (1/\delta) = C\delta n$.
        Since the right subtree initially contains at least $n/3$ leaves by assumption on the root placement,
        it follows that, even after accounting for misclassifications,
        the right side still contains at least $n/4$ leaves for sufficiently small~$\delta$.

        Finally, by Theorem~\ref{thm:QED}, each call of the QED procedure succeeds with probability at least $p''_{\delta}=\Omega(\poly(\delta))$, conditioned on the successful completion of all previous calls. Since there are at most $1/\delta$ invocations in total,
        the probability that all of them succeed simultaneously is at least
        $
        p'_{\delta} = (p''_{\delta})^{1/\delta}=\delta^{O(\frac{1}{\delta})} > 0,
        $
        which depends only on~$\delta$.
        This completes the proof of the claim.
    \end{proof}
    \begin{proof}[Proof of Lemma~\ref{lem:one-side-success}]
        By Claim~\ref{cl:lem:one-side-success}, the probability that all invocations of the QED procedure on the left side succeed is at least $p'_{\delta}$.
        We now condition on this event and show that, under this assumption, the clustering algorithm for the left side succeeds.

        At iteration $i$ of the clustering algorithm, the QED procedure returns a set $R_i^{\Left}$.
        Since we assume that the QED procedure succeeds at every step, including iteration $i$, there exists a vertex $u_i^{\Left}$ such that       \[|R_i^{\Left} \triangle (L_{u_i^{\Left}} \cap A_i^{\Left})| \le C \delta^2 n,\]       
        where $L_{u_i^{\Left}} \cap A_i^{\Left}$ is the set of active leaves at the beginning of iteration~$i$ in the subtree rooted at $u_i^{\Left}$.      This implies that at least $\delta n - C \delta^2 n$ leaves belonging to the true subtree
        $L_{u_i^{\Left}}$ are active before iteration $i$, since $|R_i^{\Left}| \ge \delta n$,
        and that at most $C \delta^2 n$ leaves from $L_{u_i^{\Left}}$ remain active after iteration $i$.

        We now make a useful observation. Consider two iterations $i$ and $j$ with $i < j$.
        If $u_i^{\Left}$ were an ancestor of $u_j^{\Left}$,
        then $L_{u_j^{\Left}} \subseteq L_{u_i^{\Left}}$ and hence
        \[
            \delta n - C \delta^2 n
            \le
            |L_{u_j^{\Left}} \cap A_j^{\Left}|
            \le
            |L_{u_i^{\Left}} \cap A_j^{\Left}|
            \le C \delta^2 n.
        \]
        Since the right-hand side is strictly smaller than the left-hand side for sufficiently small $\delta$,
        it follows that $u_i^{\Left}$ cannot be an ancestor of $u_j^{\Left}$ whenever $i < j$.

        We now bound the left-hand side of~\eqref{eq:thm:clustering-one-side} for $\sigma = \Left$.
        Consider a leaf $x$ that belongs to the symmetric difference
        $R_i^{\Left} \triangle S_{U^{\Left}}(u_i^{\Left})$ for some $i$.
        Let $i^*$ be the smallest such index. Then $x$ must belong to the active set $A^{\Left}_{i^*}$,
        since otherwise there would exist an index $j^* < i^*$ such that
        $x \in R_{j^*}^{\Left}$.
        By the minimality of $i^*$, this would also imply that
        $x \in S_{U^{\Left}}(u_{j^*}^{\Left})$.
        Consequently, $x \notin R_{i^*}^{\Left}$ and
        $x \notin S_{U^{\Left}}(u_{i^*}^{\Left})$,
        as the sets $R_{j}^{\Left}$ are pairwise disjoint and the sets
        $S_{U^{\Left}}(u_{j}^{\Left})$ are pairwise disjoint as well.
        This leads to a contradiction with the assumption that
        $x \in R_{i^*}^{\Left} \triangle S_{U^{\Left}}(u_{i^*}^{\Left})$.
        The same argument also shows that $x \notin S_{U^{\Left}}(u_{j}^{\Left})$ for $j < i^*$. We thus conclude that $x \in A^{\Left}_{i^*}$.

        Since 
        $x\in 
        R_{i^*}^{\Left} \triangle S_{U^{\Left}}(u_{i^*}^{\Left})$ 
        and
        $x\in A^{\Left}_{i^*}$, we have
        \[
            x \in
            \bigl(R_{i^*}^{\Left} \triangle S_{U^{\Left}}(u_{i^*}^{\Left})\bigr)\cap
            A^{\Left}_{i^*}
            =
            R_{i^*}^{\Left} \triangle \bigl(S_{U^{\Left}}(u_{i^*}^{\Left})\cap
            A^{\Left}_{i^*}\bigr),
        \]
        where we used 
        $R_{i^*}^{\Left}\subset A_{i^*}^{\Left}$.
        We claim that
        $x\in S_{U^{\Left}}(u_{i^*}^{\Left})\cap A^{\Left}_{i^*}$ if and only if $x\in        L_{u_{i^*}^{\Left}} \cap A_{i^*}^{\Left}$.
        Note that $S_{U^{\Left}}(u_{i^*}^{\Left})\subseteq L_{u_{i^*}^{\Left}}$, and therefore if
        $x\in S_{U^{\Left}}(u_{i^*}^{\Left})\cap A^{\Left}_{i^*} $, then $x\in       L_{u_{i^*}^{\Left}} \cap A_{i^*}^{\Left}$.
For the reverse direction, assume $x\in L_{u_{i^*}^{\Left}}\cap A_{i^*}^{\Left}$
        and consider the path from $x$ to the root.
        Let $u_j^{\Left}$ be the first vertex from $U^{\Left}$ encountered along this path.
        As argued above, $j \le i^*$.
        If $j \ne i^*$, then $j < i^*$, which would imply that
        $x \in S_{U^{\Left}}(u_j^{\Left})$ with $j<i^*$, but this would contradict the minimality of $i^*$.
        
        We thus conclude that
        $x\in R_{i^*}^{\Left} \triangle \bigl(L_{u_{i^*}^{\Left}} \cap A_{i^*}^{\Left}\bigr)$.
        This means that if
        $x\in R_i^{\Left} \triangle S_{U^{\Left}}(u_i^{\Left})$, then
        $x\in R_{i}^{\Left} \triangle \bigl(L_{u_{i}^{\Left}} \cap A_{i}^{\Left}\bigr)$.
        Since any $x$ may belong to $R_i^{\Left} \triangle S_{U^{\Left}}(u_i^{\Left})$
        for at most two distinct values of $i$, we obtain
        \begin{equation*}
            \sum_{i=1}^{k_{\Left}}
            \bigl| R^{\Left}_i \triangle S_{U}(u^{\Left}_i) \bigr|
            \le
            2
            \sum_{i=1}^{k_{\Left}}
            \bigl|R_{i}^{\Left} \triangle \bigl(L_{u_{i}^{\Left}} \cap A_{i}^{\Left}\bigr)\bigr|
            \le 2k_{\Left}\, C\delta^2 n,
        \end{equation*}
        where the last inequality follows from the guarantee of Theorem~\ref{thm:QED}.
        Recalling that $k_{\Left}\le 1/\delta$ completes the proof.
    \end{proof}

    In Lemma~\ref{lem:one-side-success}, we established that, with constant probability,
    the algorithm successfully clusters the left side of the tree,
    conditioned on the event~$\mathcal{E}'$ that the sizes of the left and right subtrees
    were guessed within an additive error of~$\delta n$.
    An identical guarantee holds for the right side.
    The probability of the event~$\mathcal{E}'$ is at least~$\delta$,
    since there are at most $1/\delta$ possible values for the size of the left subtree
    on a grid with step size~$\delta n$ in the range~$[n/3, 2n/3]$.
    Conditioned on~$\mathcal{E}'$, the events that the algorithm successfully clusters
    the left and right sides are independent.
    Consequently, the probability that both sides are successfully clustered
    is at least~$\delta \cdot (p'_{\delta})^2=\delta^{O(\frac{1}{\delta})}$,
    which is a constant depending only on~$\delta$.
    To complete the proof of Theorem~\ref{thm:clustering-correctness},
    we next show that if both subtrees are clustered correctly,
    then the entire tree is clustered successfully.

    \begin{claim}\label{cl:cluster-left-right}
        If the algorithm succeeds for both the left and right sides, then the combined clustering satisfies the desired inequality~\eqref{eq:thm:clustering}.
    \end{claim}
    \begin{proof}
        Suppose that the algorithm succeeds for both the left and right sides.
        Then there exist sets $U^{\Left}$ and $U^{\Right}$, where each vertex
        $u^{\sigma}_i$ lies on side $\sigma \in \{\Left, \Right\}$,
        and inequality~\eqref{eq:thm:clustering-one-side} holds for both sides.
        Define $U = U^{\Left} \cup U^{\Right}$.

        Consider a vertex $u^{\Left}_i \in U^{\Left}$ and the corresponding sets
        $S_{U}(u^{\Left}_i)$ and $S_{U^{\Left}}(u^{\Left}_i)$.
        Both sets contain only leaves from the left side of the tree, since each
        $u^{\Left}_i$ lies on that side and any path from a vertex on the right side to the root
        cannot pass through $u^{\Left}_i$.
        Therefore, no leaf on the right side can be assigned to $u^{\Left}_i$.

        Now consider a leaf $x$ on the left side of the tree.
        The path from $x$ to the root does not pass through any vertex
        $u^{\Right}_j$.
        Hence, $x$ is assigned to $S_{U}(u^{\Left}_i)$
        if and only if $u^{\Left}_i$ is the first vertex on the path from $x$ to the root
        that belongs to $U^{\Left}$.
        This is precisely the same rule used for assigning $x$ to
        $S_{U^{\Left}}(u^{\Left}_i)$.
        We conclude that
        $
        S_{U^{\Left}}(u^{\Left}_i) = S_{U}(u^{\Left}_i)$
        for all $i$, and similarly,
        $S_{U^{\Right}}(u^{\Right}_j) = S_{U}(u^{\Right}_j)
        \quad \text{for all } j
        $.
        The bound~\eqref{eq:thm:clustering-one-side} now implies
        \begin{equation}
            \sum_{i=1}^{k_{\sigma}}
            \bigl| R^{\sigma}_i \triangle S_{U}(u^{\sigma}_i) \bigr|
            \le C' \delta n.
        \end{equation}
        Combining the two bounds for the left and right sides, we obtain
        \[
            \sum_{\sigma \in \{\Left, \Right\}}
            \sum_{i=1}^{k_{\sigma}}
            \bigl| R^{\sigma}_i \triangle S_U(u^{\sigma}_i) \bigr|
            \le 2C' \delta n.
        \]

        To complete the proof of the claim, it remains to bound the term
        $\bigl|R^{\circ} \triangle S_U(r)\bigr|$.
        Recall that $R^{\circ}$ represents the set of leaves that remain unassigned
        after the clustering procedures on both sides have completed.
        By definition,
        \[
            R^{\circ}
            = L \setminus \bigcup_{\sigma \in \{\Left, \Right\}}
            \bigcup_{i=1}^{k_{\sigma}} R^{\sigma}_i,
            \qquad\text{and}\qquad
            S_U(r)
            = L \setminus \bigcup_{\sigma \in \{\Left, \Right\}}
            \bigcup_{i=1}^{k_{\sigma}} S_U(u^{\sigma}_i).
        \]
        Hence,
        \begin{align*}
            \bigl| R^{\circ} \triangle S_U(r) \bigr|
            &=
            \Bigl|
            \Bigl( L \setminus \bigcup_{\sigma,i} R^{\sigma}_i \Bigr)
            \triangle
            \Bigl( L \setminus \bigcup_{\sigma,i} S_U(u^{\sigma}_i) \Bigr)
            \Bigr| =
            \Bigl|
            \Bigl( \bigcup_{\sigma,i} R^{\sigma}_i \Bigr)
            \triangle
            \Bigl(\bigcup_{\sigma,i} S_U(u^{\sigma}_i) \Bigr)
            \Bigr| 
            \\[4pt]
            &\leq
            \Bigl|
            \bigcup_{\sigma,i} \bigl( R^{\sigma}_i \triangle S_U(u^{\sigma}_i) \bigr)
            \Bigr|
            \le
            \sum_{\sigma,i}
            \bigl| R^{\sigma}_i \triangle S_U(u^{\sigma}_i) \bigr|
            \le 2C' \delta n.
        \end{align*}
        Therefore,
        \[
            \sum_{\sigma \in \{\Left, \Right\}}
            \sum_{i=1}^{k_{\sigma}}
            \bigl| R^{\sigma}_i \triangle S_U(u^{\sigma}_i) \bigr|
            + \bigl| R^{\circ} \triangle S_U(r) \bigr|
            \le C \delta n,
        \]
        for some universal constant $C > 0$,
        which completes the proof of the claim.
    \end{proof}
    This concludes the proof of Theorem~\ref{thm:clustering-correctness}.
\end{proof}

\subsection{Reconstruction Step}

We next describe how the algorithm reconstructs the tree from the clusters
obtained in the previous phase. At the end of the clustering step, the algorithm outputs the collections of sets
\[
    R^{\circ},R^{\Left}_1, \dots, R^{\Left}_{k_{\Left}}, \qquad
    R^{\Right}_1, \dots, R^{\Right}_{k_{\Right}}.
\]
If the clustering procedure has succeeded, then by
Theorem~\ref{thm:clustering-correctness}
there exist vertices
\[
    u^{\Left}_1, \dots, u^{\Left}_{k_{\Left}}, \;
    u^{\Right}_1, \dots, u^{\Right}_{k_{\Right}}
\]
in the ground-truth tree $T$ satisfying the bound~\eqref{eq:thm:clustering}.
Of course, the algorithm itself does not know whether the clustering succeeded;
in any case, it proceeds to reconstruct the tree based solely on the obtained clusters.

The reconstruction phase aims to infer the relative positions of the
vertices $u^{\sigma}_i$ within the underlying tree structure.
Let $U$ denote the collection of all such vertices together with the root $r$,
that is,
\[
    U = \{r \}
    \cup \{u^{\Left}_1, \dots, u^{\Left}_{k_{\Left}}\}
    \cup \{u^{\Right}_1, \dots, u^{\Right}_{k_{\Right}}\}.
\]

We define the \emph{skeleton tree} $T_U$ as the tree obtained from $T$
by iteratively removing all leaves not in $U$, and suppressing all vertices
not in $U$ that have exactly one child, until no such vertices remain.
The skeleton tree $T_U$ contains at most $2|U| \le 4\lceil 1/\delta \rceil$ vertices,
since every leaf of $T_U$ belongs to $U$, and every vertex not in $U$ has exactly two children,
which implies that the number of internal (non-$U$) vertices is at most $|U| - 2$.
Thus, the algorithm can guess the skeleton $T_U$ with a small but constant
(depending only on $\delta$) positive probability $p^{T_U}_{\delta}=\delta^{O(\frac{1}{\delta})}$.

% --- Original version ---

After guessing the skeleton $T_U$, the algorithm constructs a tree $\widehat{T}$ based on $T_U$ as follows. For every vertex $u_i^{\sigma} \in V(T_U)$, we create, in an arbitrary manner, a binary rooted tree whose leaves are exactly the vertices of the corresponding set $R_i^{\sigma}$. This tree is then attached as an additional (rightmost) child of $u_i^{\sigma}$.

To ensure that $\widehat{T}$ remains binary, we locally modify the structure at $u_i^{\sigma}$ if necessary. If after the attachment $u_i^{\sigma}$ has three children, we introduce a new vertex as the left child of $u_i^{\sigma}$, and reattach the original two children of $u_i^{\sigma}$ (i.e., those inherited from $T_U$) as children of this new vertex. Consequently, $u_i^{\sigma}$ has exactly two children: the newly created vertex and the root of the attached tree. See Figure~\ref{fig:reconstruction-step}.

Note that all sets $R_{j}^{\Left}$ are pairwise disjoint, and likewise all sets $R_{j}^{\Right}$ are pairwise disjoint.
However, a leaf may belong to both some $R_{j'}^{\Left}$ and $R_{j''}^{\Right}$.
In such cases, the leaf is included in only one of the corresponding subtrees -- either the one attached to $u_{j'}^{\Left}$ or to $u_{j''}^{\Right}$.
This assignment is made arbitrarily.

\begin{figure}[ht]
    \centering
    %
    % ---- Panel (a): Skeleton tree T_U with |C_u| annotations ----
    %
    \begin{subfigure}[c]{0.49\textwidth}
        \centering
        \begin{tikzpicture}[
            scale=0.75,
            dot/.style={circle,fill,inner sep=1.4pt},
            uset/.style={draw,fill=black,regular polygon,regular polygon sides=4,inner sep=2pt},
            cbox/.style={font=\tiny,text=gray!70,draw=gray!40,
                         rounded corners=1.5pt,inner sep=1.5pt,fill=white},
            >=stealth
        ]
        % ---- skeleton coordinates ----
        \coordinate (r)   at (0, 3);
        \coordinate (u2)  at (-2, 1.5);
        \coordinate (u1)  at (-3, 0);
        \coordinate (u3)  at (2, 1.5);
        
        % ---- skeleton edges ----
        \draw[line width=1.5pt] (r)  -- (u2);   % r -> u_2
        \draw[line width=1.5pt] (u2) -- (u1);   % u_2 -> u_1
        \draw[line width=1.5pt] (r)  -- (u3);   % r -> u_3
        
        % ---- skeleton nodes ----
        \node[uset, label={above:$r$}]    at (r)  {};
        \node[uset, label={left:$u^{\Left}_2$}]   at (u2) {};
        \node[uset, label={left:$u^{\Left}_1$}]   at (u1) {};
        \node[uset, label={right:$u^{\Right}_1$}]  at (u3) {};

        \end{tikzpicture}
        \caption{Guessed skeleton $T_U$.}
    \end{subfigure}
    \hfill
    %
    % ---- Panel (b): Reconstructed tree T-hat ----
    %
    \begin{subfigure}[c]{0.49\textwidth}
        \centering
        \begin{tikzpicture}[
            scale=0.75,
            dot/.style={circle,fill,inner sep=1.4pt},
            uset/.style={draw,fill=black,regular polygon,regular polygon sides=4,inner sep=2pt},
            hleaf/.style={circle,inner sep=2.2pt,thick},
            >=stealth
        ]
            % =============================================
            % Correct binary tree structure of T-hat:
            %   r: left child = w (Steiner), right child = rc_r (tree on R^circ)
            %   w: left child = u_2^L, right child = u_1^R
            %   u_2^L: left child = u_1^L, right child = rc2 (tree on R_2^L)
            %   u_1^L: no left child, right child = rcL1 (new Steiner)
            %     rcL1: children are xL, yL (reshuffled pairing on R_1^L)
            %   u_1^R: no left child, right child = rcR1 (new Steiner)
            %     rcR1: children are xR, yR (reshuffled pairing on R_1^R)
            % =============================================

            % ---- coordinates ----
            % Level 0: root
            \coordinate (r) at (0, 6.0);
            % Level 1: r's two children
            \coordinate (w) at (-1.5, 5.0);       % Steiner node (left child of r)
            \coordinate (rcr) at (2.0, 5.0);      % root of R^circ tree (right child of r)
            % Level 2: w's children and R^circ leaves
            \coordinate (u2) at (-3.0, 4.0);     % left child of w
            \coordinate (u3) at (0.0, 4.0);      % right child of w
            \coordinate (l11) at (1.7, 4.0);
            \coordinate (l12) at (2.3, 4.0);
            % Level 3: u_2^L's children and u_1^R's right child
            \coordinate (u1) at (-4.5, 3.0);     % left child of u_2^L
            \coordinate (rc2) at (-1.5, 3.0);     % right child of u_2^L (tree on R_2^L)
            \coordinate (rcR1) at (0.7, 3.0);     % right child of u_1^R (new Steiner, offset right)
            % Level 4: u_1^L's right child, rc2's children, rcR1's children
            \coordinate (rcL1) at (-3.5, 2.0);    % right child of u_1^L (new Steiner, offset right)
            \coordinate (l5)  at (-1.8, 2.0);
            \coordinate (l6)  at (-1.2, 2.0);
            \coordinate (xR) at (0.2, 2.0);       % left child of rcR1
            \coordinate (yR) at (1.2, 2.0);       % right child of rcR1
            % Level 5: rcL1's children (xL, yL), R_1^R leaves
            \coordinate (xL) at (-4.0, 1.0);      % left child of rcL1
            \coordinate (yL) at (-3.0, 1.0);      % right child of rcL1
            \coordinate (l7)  at (-0.1, 1.0);
            \coordinate (l9)  at (0.5, 1.0);
            \coordinate (l8)  at (0.9, 1.0);
            \coordinate (l10) at (1.5, 1.0);
            % Level 6: R_1^L leaves
            \coordinate (l1) at (-4.3, 0.0);
            \coordinate (l3) at (-3.7, 0.0);
            \coordinate (l2) at (-3.3, 0.0);
            \coordinate (l4) at (-2.7, 0.0);

            % ---- background regions (around each cluster's leaves) ----
            \begin{scope}[on background layer]
                % R_1^L: l1,l3,l2,l4 at y=0.0
                \fill[blue!12, rounded corners=5pt]
                    (-4.55, -0.3) rectangle (-2.45, 0.3);
                % R_2^L: l5,l6 at y=2.0
                \fill[olive!12, rounded corners=5pt]
                    (-2.05, 1.7) rectangle (-0.95, 2.3);
                % R_1^R: l7,l9,l8,l10 at y=1.0
                \fill[red!12, rounded corners=5pt]
                    (-0.35, 0.7) rectangle (1.75, 1.3);
                % R^circ: l11,l12 at y=4.0
                \fill[green!12, rounded corners=5pt]
                    (1.45, 3.7) rectangle (2.55, 4.3);
            \end{scope}

            % ---- skeleton backbone (bold) ----
            \draw[line width=1.5pt] (r) -- (w);
            \draw[line width=1.5pt] (w) -- (u2);
            \draw[line width=1.5pt] (w) -- (u3);
            \draw[line width=1.5pt] (u2) -- (u1);

            % ---- cluster subtree edges (normal weight) ----
            % R^circ at r (right child)
            \draw (r) -- (rcr);
            \draw (rcr) -- (l11);
            \draw (rcr) -- (l12);

            % R_2^L at u_2^L (right child)
            \draw (u2) -- (rc2);
            \draw (rc2) -- (l5);
            \draw (rc2) -- (l6);

            % u_1^L -> rcL1 (right child only, no left child)
            \draw (u1) -- (rcL1);
            % rcL1's children (reshuffled binary tree on R_1^L)
            \draw (rcL1) -- (xL);
            \draw (rcL1) -- (yL);
            \draw (xL) -- (l1);
            \draw (xL) -- (l3);
            \draw (yL) -- (l2);
            \draw (yL) -- (l4);

            % u_1^R -> rcR1 (right child only, no left child)
            \draw (u3) -- (rcR1);
            % rcR1's children (reshuffled binary tree on R_1^R)
            \draw (rcR1) -- (xR);
            \draw (rcR1) -- (yR);
            \draw (xR) -- (l7);
            \draw (xR) -- (l9);
            \draw (yR) -- (l8);
            \draw (yR) -- (l10);

            % ---- nodes ----
            % U-vertices (filled squares)
            \node[uset, label={above:$r$}, font=\small] at (r) {};
            \node[uset, label={[yshift=2pt]left:$u^{\Left}_2$}, font=\small] at (u2) {};
            \node[uset, label={left:$u^{\Left}_1$}, font=\small] at (u1) {};
            \node[uset, label={right:$u^{\Right}_1$}, font=\small] at (u3) {};
            % Steiner / internal nodes (filled circles)
            \foreach \p in {w,rcr,rc2,rcL1,rcR1,xL,yL,xR,yR}{\node[dot] at (\p) {};}

            % Cluster leaves
            \foreach \p in {l1,l2,l3,l4}{
                \node[hleaf, draw=blue!60!black, fill=blue!20] at (\p) {};
            }
            \foreach \p in {l5,l6}{
                \node[hleaf, draw=olive!60!black, fill=olive!20] at (\p) {};
            }
            \foreach \p in {l7,l8,l9,l10}{
                \node[hleaf, draw=red!60!black, fill=red!20] at (\p) {};
            }
            \foreach \p in {l11,l12}{
                \node[hleaf, draw=green!60!black, fill=green!20] at (\p) {};
            }

            % ---- R_u labels (braces) ----
            \draw[decorate, decoration={brace, amplitude=5pt, mirror}]
                (-4.4, -0.3) -- (-2.6, -0.3)
                node[midway, below=4pt, font=\small] {$R_1^{\Left}$};

            \draw[decorate, decoration={brace, amplitude=5pt, mirror}]
                (-1.9, 1.7) -- (-1.1, 1.7)
                node[midway, below=4pt, font=\small] {$R_2^{\Left}$};

            \draw[decorate, decoration={brace, amplitude=5pt, mirror}]
                (-0.2, 0.7) -- (1.6, 0.7)
                node[midway, below=4pt, font=\small] {$R_1^{\Right}$};

            \draw[decorate, decoration={brace, amplitude=5pt, mirror}]
                (1.6, 3.7) -- (2.4, 3.7)
                node[midway, below=4pt, font=\small] {$R^{\circ}$};

        \end{tikzpicture}
        \caption{Reconstructed tree $\widehat{T}$.}
    \end{subfigure}
    \caption{Illustration of the reconstruction step.
    (a)~With a small but constant positive probability $\delta^{O(1/\delta)}$, the algorithm correctly guesses the skeleton tree $T_U$ on $U = \{r, u_1^{\Left}, u_2^{\Left}, u_1^{\Right}\}$.
    (b)~For each vertex $u \in U$, the algorithm constructs an arbitrary binary tree on its cluster $R_u$, attaches it as an additional rightmost child of $u$, and resolves any duplicate leaf memberships arbitrarily.}
    \label{fig:reconstruction-step}
\end{figure}
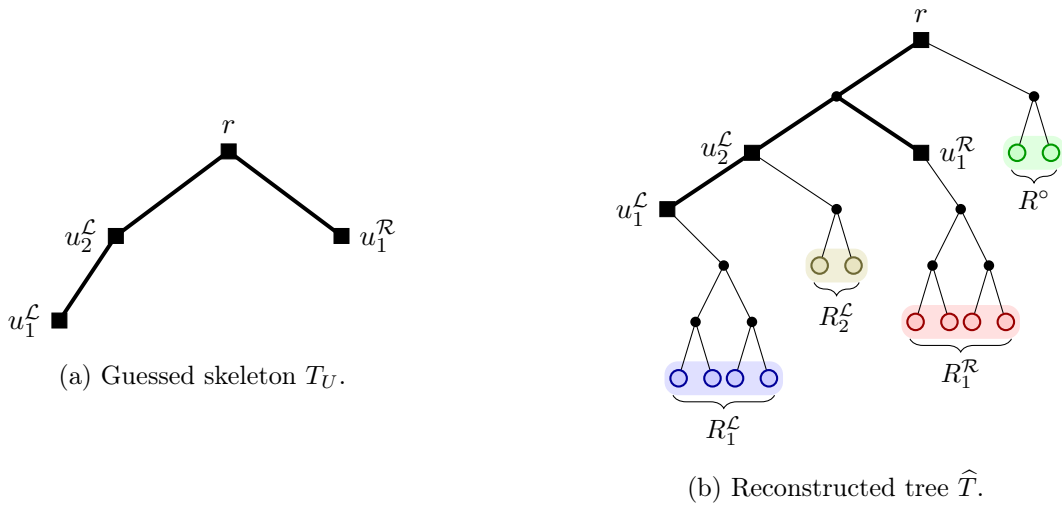

We now show that if the clustering phase succeeds and the reconstruction phase correctly guesses the skeleton,
then the resulting tree $\widehat{T}$ is a good approximation of the true tree $T$.
Specifically, for a randomly chosen 4-tuple $(a, b, c, d)$ of leaves,
the quartet relation $(a b \mid c d)$ takes the same value in $T$ and $\widehat{T}$
with probability $1 - O(\delta)$.
Equivalently, the quartet distance between $T$ and $\widehat{T}$ is $O(\delta)$.

\begin{theorem}
    \label{thm:reconstruction-accuracy}
    Suppose that the clustering phase succeeds and that the reconstruction phase
    correctly guesses the skeleton tree $T_U$.
    Let $\widehat{T}$ denote the reconstructed tree.
    Then, for a uniformly random 4-tuple $(a, b, c, d)$ of leaves,
    the quartet relation $(a b \mid c d)$ takes the same value in $T$ and $\widehat{T}$
    with probability $1 - O(\delta)$ (over the random choice of $(a, b, c, d)$).
\end{theorem}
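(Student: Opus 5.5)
The plan is to compare $T$ and $\widehat{T}$ through an intermediate tree $T'$. This is the tree obtained from the guessed skeleton $T_U$ by attaching, at each $u\in U$, an arbitrary binary tree on the \emph{true} set $S_U(u)$ instead of $R_u$; here $R_{u^\sigma_i}=R^\sigma_i$ and $R_r=R^{\circ}$. We then use the triangle inequality for quartet disagreement:
\[
\Pr[T(a,b,c,d)\ne \widehat T(a,b,c,d)]\le \Pr[T\ne T']+\Pr[T'\ne \widehat T].
\]

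\textbf{Step 1: $\Pr[T'\ne\widehat T]=O(\delta)$.} The trees $T'$ and $\widehat T$ have the same backbone, and every leaf $x$ is attached in a way determined by which cluster contains it. Call $x$ \emph{misplaced} if the $U$-vertex at which $x$ hangs in $\widehat T$ differs from $\phi_U(x)$. A leaf that is not misplaced lies in $R_u\cap S_U(u)$ for its own $u$. Every misplaced leaf lies in some $R_u\triangle S_U(u)$, so the duplicate-resolution step adds nothing beyond this count. By Theorem~\ref{thm:clustering-correctness}, the number of misplaced leaves is at most $C\delta n$.

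Now take a quartet whose four leaves are all non-misplaced. If its leaves lie in pairwise distinct clusters, or in any configuration except three or four leaves in a single cluster, then its topology is determined by the backbone, and $T'$ and $\widehat T$ agree. This holds because each cluster hangs as a pendant subtree at its $u$ in both trees. Hence disagreement requires either a misplaced leaf, with probability at most $4C\delta$, or three leaves falling into a common cluster. For clusters other than $r$'s, $|S_U(u)|\le 2\delta n+C\delta n$, so the probability of the latter event is at most $\sum_u(|S_U(u)|/n)^3=O(\delta^2)\cdot O(1/\delta)=O(\delta)$.

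The cluster $S_U(r)$ is the one exception: its size is only guaranteed to be small because the while-loop stops after removing roughly $\rho_\sigma-49\delta n$ leaves per side. Combining this stopping rule with the accuracy of the guess $\rho_\sigma$ gives $|S_U(r)|\le |R^\circ|+C\delta n=O(\delta n)$. So this cluster contributes $O(\delta^3)$, which is negligible.

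\textbf{Step 2: $\Pr[T\ne T']=O(\delta)$.} Here we use the correctly guessed skeleton. For a quartet $a,b,c,d$ with $\phi_U$-images in distinct clusters, the restriction of $T$ to these leaves is obtained by following each leaf up to $\phi_U(x)$ and then moving inside $T_U$. To verify this, note that the path from $x$ to $\phi_U(x)$ contains no other $U$-vertex. The paths from two leaves with different $\phi_U$-values can still meet below $\phi_U(x)$, though, namely when they merge before reaching $U$. I expect this to be the main obstacle: a leaf in $S_U(u)$ whose branching point lies strictly inside $T_u$ and off the skeleton. The topology can then differ from $T'$, where everything in $S_U(u)$ hangs off a single pendant edge at $u$.

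I would handle this by showing that such a configuration forces two of the four leaves into a common region. Specifically, if $\phi_U(a)=u$ and $\phi_U(b)=u'\neq u$, then $u'$ is a descendant of $u$, so $a$ and $b$ both lie in $L_u$ and the quartet places at least two leaves inside $L_u$. If in addition the skeleton-level split differs, one more leaf must lie in a pendant piece between consecutive skeleton vertices. Charging these events to pairs and triples within subtrees of size $O(\delta n)$, i.e.\ the size of the $S_U(u)$ plus the $O(\delta n)$ leftover, gives $O(\delta)$ again. If this fails, I would instead argue directly that $T$ and $T'$ agree whenever the four $\phi_U$-images are pairwise distinct and no three leaves share a cluster, and then bound the remaining quartets as in Step 1. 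Combining Steps 1 and 2 yields $1-O(\delta)$.
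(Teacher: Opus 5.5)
Your Step~1 is sound, and you derive $|S_U(r)|=O(\delta n)$ from the stopping rule explicitly. Step~2, however, has a genuine gap. You never isolate the event that controls disagreement between $T$ and $T'$, and the fallback you propose is false.

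Take $T$ with root $r$ whose children are $w'$ and $u_3$. Let $w'$ have children $w$ and $u_2$, and let $w$ have children a leaf $a$ and $u_1$. Set $U=\{r,u_1,u_2,u_3\}$ and pick leaves $b\in S_U(u_1)$, $c\in S_U(u_2)$, $d\in S_U(u_3)$. The four $\phi_U$-images $r,u_1,u_2,u_3$ are pairwise distinct, and no two of the four leaves share a cluster. Yet $T$ gives $ab|cd$. Meanwhile $w$ is suppressed in $T_U$ and $a$ hangs at $r$ in $T'$, so $T'$ gives $bc|ad$.

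Your primary plan does not cover this case either. The event that two leaves lie in $L_u$ need not have small probability, since $L_u$ can be nearly all of $L$ (here $u=r$). And this disagreement cannot be charged to two or three of the quartet's leaves sharing an $O(\delta n)$-size set, because no two of them do.

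The missing idea is the paper's bad event: for some distinct $x,y,z\in\{a,b,c,d\}$, $\phi_U(x)=\phi_U(\lca(y,z))$. In the example, $\phi_U(a)=r=\phi_U(\lca(b,c))$. For fixed $y,z$ this forces $x$ into the single set $S_U(\phi_U(\lca(y,z)))$ of size $O(\delta n)$, so a union bound over the $24$ ordered triples gives probability $O(\delta)$.

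Outside this event you need the structural Lemma~\ref{lem:disjoint-f}: if $P(a,b)$ and $P(c,d)$ are vertex-disjoint in $T$, then so are $P(\phi_U(a),\phi_U(b))$ and $P(\phi_U(c),\phi_U(d))$. The paper proves it by taking the lowest vertex $v$ in the intersection of the two image paths and assuming $v\notin P(a,b)$. It then shows $\phi_U(b)=\phi_U(v)=\phi_U(\lca(a,c))$, which contradicts the excluded event. Disjointness then persists in $T_U$, because contraction maps $U$-to-$U$ paths to subpaths. The pendant attachments finish the argument, including the case $\phi_U(a)=\phi_U(b)$.

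The paper folds the misplaced leaves into the same bad event and compares $T$ with $\widehat{T}$ directly, so it needs no intermediate tree. Your $T'$ decomposition would also work once Step~2 is replaced by this argument.
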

\begin{proof}
    Consider a uniformly random 4-tuple $(a,b,c,d)$ of leaves.
    We define a \emph{bad event} $\mathcal{E}$ as follows.
    The event $\mathcal{E}$ occurs if either one of the vertices $a, b, c$, or $d$ lies in
    \begin{equation}
        \label{eq:union}
        \bigcup_{\sigma \in \{\Left, \Right\}}
        \bigcup_{i=1}^{k_{\sigma}} \bigl( R^{\sigma}_i \triangle S_U(u^{\sigma}_i) \bigr)
        \;\cup\;
        \bigl( R^{\circ} \triangle S_U(r) \bigr),
    \end{equation}
    or if there exists a distinct triplet $x,y,z \in \{a,b,c,d\}$ such that
    $\phi_U(x) = \phi_U(\lca(y,z))$.

    We now show that $\Pr(\mathcal{E}) = O(\delta)$.
    Indeed, by Theorem~\ref{thm:clustering-correctness} and our assumption that the clustering step succeeded,
    the size of the union in~\eqref{eq:union} is $O(\delta n)$.
    Consequently, the probability that any of $a,b,c,$ or $d$ lies in this set is $O(\delta)$.
    Moreover, since each set $R^{\sigma}_i$ has size at most $2\delta n$,
    every corresponding set $S_U(u^{\sigma}_i)$ also has size $O(\delta n)$.
    The same bound holds for $S_U(r)$.

    Now fix distinct $x,y,z \in \{a,b,c,d\}$.
    The vertex $\phi_U(\lca(y,z))$ equals some $u \in U$, and the event
    $\phi_U(x) = u$ occurs if and only if $x \in S_U(u)$.
    Thus,
    $$\Pr\Bigl(\phi_U(x) = \phi_U(\lca(y,z))\Bigr) = O(\delta).$$
    Applying the union bound over all $24$ distinct ordered triplets $(x,y,z)$
    from $\{a,b,c,d\}$ gives the claimed bound $\Pr(\mathcal{E}) = O(\delta)$.

We now show that if the bad event $\mathcal{E}$ does not occur, then the quartet relation $(ab \mid cd)$ agrees in both $T$ and $\widehat{T}$.
    Recall that for any four distinct leaves $a,b,c,d$, there is exactly one partition of $\{a,b,c,d\}$ into two unordered pairs $\{a,b\}$ and $\{c,d\}$ for which the relation $(ab \mid cd)$ holds. Without loss of generality, assume that $(ab \mid cd)$ is the quartet relation in the true tree $T$. We will show that the same relation holds in the reconstructed tree $\widehat{T}$.
    Equivalently, assuming that the paths from $a$ to $b$ and from $c$ to $d$ are vertex-disjoint in $T$, we prove that they are vertex-disjoint in $\widehat{T}$.

    We denote by $P(x,y)$ the path from $x$ to $y$ in the true tree $T$, and by $\widehat{P}(x,y)$ the corresponding path in the reconstructed tree $\widehat{T}$.
    Lemma~\ref{lem:disjoint-f} in Appendix~\ref{app:mapping-lemma} states that if
    (i) the paths $P(a,b)$ and $P(c,d)$ are vertex-disjoint in $T$, and
    (ii) for every distinct triple $x,y,z \in \{a,b,c,d\}$, we have $\phi_U(x) \ne \phi_U(\mathrm{LCA}(y,z))$,
    then the images $P(\phi_U(a),\phi_U(b))$ and $P(\phi_U(c),\phi_U(d))$ are also vertex-disjoint.
    Since both conditions (i) and (ii) hold in our setting (the former by assumption, and the latter because the bad event $\mathcal{E}$ does not occur),
    we conclude that the paths $P(\phi_U(a),\phi_U(b))$ and $P(\phi_U(c),\phi_U(d))$ in $T$ are vertex-disjoint.

    We now claim that the corresponding paths $\widehat{P}(\phi_U(a),\phi_U(b))$ and $\widehat{P}(\phi_U(c),\phi_U(d))$ in the reconstructed tree $\widehat{T}$
    are also vertex-disjoint.
    It suffices to show that the paths between $\phi_U(a)$ and $\phi_U(b)$ and between $\phi_U(c)$ and $\phi_U(d)$
    remain vertex-disjoint in the skeleton tree $T_U$.
    Indeed, the skeleton construction is obtained from $T$ by iteratively removing degree-one vertices
    and contracting vertices of degree two that do not belong to $U$.
    This process maps every path between two vertices $u', u'' \in U$ in $T$
    to a subpath of the original path connecting $u'$ and $u''$.
    Consequently, it cannot convert two vertex-disjoint paths in $T$
    into intersecting paths in $T_U$.

    Assume for the moment that the vertices $\phi_U(a)$, $\phi_U(b)$, $\phi_U(c)$, and $\phi_U(d)$ are all distinct.
    Consider the path $\widehat{P}(a,b)$ in $\widehat{T}$: it consists of three segments—one from $a$ up to $\phi_U(a)$,
    one along $\widehat{P}(\phi_U(a),\phi_U(b))$ connecting the corresponding cluster representatives in the skeleton,
    and one from $\phi_U(b)$ down to $b$.
    The segments from $a,b,c,d$ to $\phi_U(a)$, $\phi_U(b)$, $\phi_U(c)$, and $\phi_U(d)$, respectively, are pairwise vertex-disjoint,
    since each such path lies entirely within the distinct subtree attached to the right child of its corresponding $\phi_U(\cdot)$.
    Furthermore, the paths $\widehat{P}(\phi_U(a),\phi_U(b))$ and $\widehat{P}(\phi_U(c),\phi_U(d))$ in the skeleton do not intersect,
    as established above.
    Consequently, the full paths $\widehat{P}(a,b)$ and $\widehat{P}(c,d)$ are vertex-disjoint,
    and hence the quartet relation $(ab\mid cd)$ also holds in the reconstructed tree~$\widehat{T}$.

It remains to consider the case where some of the vertices in $\{a,b,c,d\}$ are mapped by $\phi_U$ to the same vertex $u \in U$.  
Since the paths $P(\phi_U(a),\phi_U(b))$ and $P(\phi_U(c),\phi_U(d))$ are vertex-disjoint, we must have 
$\phi_U(a)\neq \phi_U(c)$, $\phi_U(a)\neq \phi_U(d)$, $\phi_U(b)\neq \phi_U(c)$, and $\phi_U(b)\neq \phi_U(d)$. 
Thus, the only possibilities are that $\phi_U(a)=\phi_U(b)=u$ or $\phi_U(c)=\phi_U(d)=u$.  
Assume without loss of generality that $\phi_U(a)=\phi_U(b)=u$. Then $\phi_U(c)\neq u$ and $\phi_U(d)\neq u$. 
The path $\widehat{P}(a,b)$ lies entirely within the subtree attached to the right child of $u$ in $\widehat{T}$. 
Since neither $c$ nor $d$ lies in this subtree, the path $\widehat{P}(c,d)$ does not enter it. 
Consequently, $\widehat{P}(c,d)$ does not intersect $\widehat{P}(a,b)$. 
\end{proof}

\subsection{Verification Step}
So far, we have described the \emph{clustering} and \emph{reconstruction} phases of the algorithm and established that each of these phases succeeds with a small but constant probability depending only on~$\delta$. In particular, the two phases succeed with probability at least $\delta^{O(\frac{1}{\delta})}$.
Conditioned on their success, the resulting tree~$\widehat{T}$ is a good approximation of the true tree~$T$:
specifically, by Theorem~\ref{thm:reconstruction-accuracy}, the quartet distance between~$T$ and~$\widehat{T}$ is~$O(\delta)$. We set $\delta$ to be a sufficiently small constant, $\delta = \Theta(\epsilon)$, so that the quartet distance between~$T$ and~$\widehat{T}$ is at most~$\epsilon/2$ whenever the \emph{clustering} and \emph{reconstruction} phases succeed.

However, the algorithm itself does not know whether the clustering and reconstruction steps have succeeded in a particular run.
To overcome this, the algorithm repeats both phases a sufficiently large number of times (say, $\frac{1}{\delta}^{O(\frac{1}{\delta})}\cdot n$ times),
so that the probability that all attempts to cluster the leaves and reconstruct the tree fail becomes exponentially small.
For each reconstructed tree~$\widehat{T}$, it invokes a \emph{verification step} that estimates the quartet distance between~$T$ and~$\widehat{T}$ using the quartet dataset $Q$, i.e., it calculates the fraction of quartets in $Q $ that tree $\widehat{T}$ fails to satisfy \footnote{Strictly speaking, the algorithm estimates the probability that the quartet label of a sample from $\mathcal{RCN}(\eta)$ is different from the quartet label according to $\widehat{T}$. For example, this probability is $\eta$ for the ground truth tree $T$.}. We denote this empirical estimate by~$\widehat{\mathcal{R}}_{Q}(\widehat{T})$. If this estimate is at most a fixed acceptance threshold $\eta +\frac{3}{4}\eps(1-\frac{3\eta}{2})$, the verifier declares success, and the algorithm outputs the corresponding~$\widehat{T}$ as the final reconstruction.
Otherwise, the algorithm discards the candidate tree and repeats the procedure.

There are two possible scenarios under which the algorithm may fail.
First, with a small probability, the algorithm may fail to obtain a verified tree, in which case it terminates and reports failure.
Second, the verifier may erroneously accept an inaccurate reconstruction.
Observe that the event that the algorithm fails is a subset of  $\mathcal{A}\cup \mathcal{B}$, where $\mathcal{A}$ and $\mathcal{B}$ are defined as follows.
Event $\mathcal{A}$ is the event that the algorithm fails to produce a tree of small distance, at most $\frac{\eps}{2}$, to the ground truth tree after running the \emph{clustering} and \emph{reconstruction} step a polynomial number of times. Note that the probability of that event is $o(1)$.
Event $\mathcal{B}$ is the event that there exists a tree $T'\in \mathcal{T}$ on which the verification step fails. We say that the verification step fails on tree $T'$ if (i) $\dist(T, T')\le\frac{\eps}{2}$ and $\hat{\mathcal{R}}_Q(T')> \eta +\frac{3}{4}\eps(1-\frac{3\eta}{2})$ (the tree is close to $T$ but the estimate is above the threshold) or (ii) $\dist(T,T')>\eps$ and $\hat{\mathcal{R}}_Q(T')\leq  \eta +\frac{3}{4}\eps(1-\frac{3\eta}{2})$ (the tree is far from $T$ but the estimate is at most the threshold). We prove the following theorem, which by the union bound implies the main result, Theorem~\ref{th:ptas}.
\begin{theorem}[Verification guarantee]\label{thm:verification}
    There exists an absolute constant $C$ such that if $Q$ is a quartet dataset with i.i.d. samples from $\mathcal{RCN}(\eta)$ model with $|Q|\geq C\frac{n}{(\eps(1-\frac{3\eta}{2}))^2}$ then the probability, over the randomness in $Q$, that there exists a tree $T'$ for which the verification procedure fails is $o(1)$.
\end{theorem}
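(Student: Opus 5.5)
Our plan is to reduce the statement to uniform convergence of the empirical risk over the class $\mathcal{T}$, via the Natarajan dimension bound (Theorem~\ref{thm:natarajan_quart}), after computing the population risk of any tree $T'$ in terms of its quartet distance to $T$.

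\emph{Population risk.} For $T'\in\mathcal{T}$, let $\mathcal{R}(T')=\Pr[T'(a,b,c,d)\neq q]$, where $(a,b,c,d,q)$ is drawn from $\mathcal{RCN}(\eta)$. Condition on the uniformly random $4$-set. If $T'$ agrees with $T$ on it, which happens with probability $1-\dist(T,T')$, then $T'$ errs exactly when the label is flipped, i.e.\ with probability $\eta$. If $T'$ disagrees with $T$, then $T'$ errs unless the label is the particular wrong label that $T'$ predicts; that label has probability $\eta/2$, so $T'$ errs with probability $1-\eta/2$. Hence
\[
\mathcal{R}(T') = \eta + \dist(T,T')\Bigl(1-\tfrac{3\eta}{2}\Bigr).
\]
Write $\gamma=\eps(1-\frac{3\eta}{2})>0$, so the threshold is $\eta+\frac34\gamma$. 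If $\dist(T,T')\le\eps/2$, then $\mathcal{R}(T')\le\eta+\frac12\gamma$. If $\dist(T,T')>\eps$, then $\mathcal{R}(T')>\eta+\gamma$. Either failure mode (i) or (ii) therefore forces $|\widehat{\mathcal{R}}_Q(T')-\mathcal{R}(T')|>\gamma/4$. It suffices to show
\[
\Pr_Q\Bigl[\sup_{T'\in\mathcal{T}}\bigl|\widehat{\mathcal{R}}_Q(T')-\mathcal{R}(T')\bigr|>\gamma/4\Bigr]=o(1).
\]

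\emph{Uniform convergence.} View each tree as a multiclass hypothesis mapping a $4$-tuple to one of three labels. The samples $(a,b,c,d,q)$ are i.i.d.\ from a fixed joint distribution, which is the agnostic setting, and the loss is the $0$-$1$ loss. By Theorem~\ref{thm:natarajan_quart}, the Natarajan dimension is $d=O(n)$. The multiclass fundamental theorem (Natarajan's lemma plus symmetrization, e.g.\ Shalev-Shwartz--Ben-David) then gives
\[
\sup_{T'}\bigl|\widehat{\mathcal{R}}_Q(T')-\mathcal{R}(T')\bigr|\le\gamma/4
\]
with probability at least $1-\delta'$, once
\[
|Q|\ge C_0\,\frac{d\log k+\log(1/\delta')}{\gamma^2},
\]
where $k=3$ labels. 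With $d=O(n)$ and the choice $\delta'=e^{-n}$, the requirement is $|Q|\ge Cn/\gamma^2$, and the failure probability is $e^{-n}=o(1)$.

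\emph{Main obstacle.} The delicate step is getting a bound with no $\log n$ factor. The standard Natarajan-based bound contains a $\log$ factor involving $d\log(k)$ and $\log$ of the sample size. Using the growth-function form $\tau(m)\le (mk^2)^d$ would give $|Q|=O(n\log(n)/\gamma^2)$, which is too weak. I would instead invoke the sharper agnostic bound $O((d\log k+\log(1/\delta'))/\gamma^2)$ via chaining or the Daniely et al.\ multiclass bounds, where the $\log$ depends only on $k=3$. Alternatively, I would argue directly. Through the degree-8 polynomial sign-pattern argument behind Theorem~\ref{thm:natarajan_quart}, the number of labelings the class induces on $m$ points is at most $(Cm/n)^{O(n)}$. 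With $m=\Theta(n/\gamma^2)$ this is $e^{O(n\log(1/\gamma))}$. A union bound against Hoeffding tails $e^{-\Omega(m\gamma^2)}$ after symmetrization then works when $C$ is large enough. If the constant is required to be absolute, I would absorb the $\log(1/\gamma)$ via the chaining version.
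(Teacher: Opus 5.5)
Your proposal is correct and follows the paper's proof. You compute $\mathcal{R}_\eta(T')=\eta+(1-\frac{3\eta}{2})\dist(T,T')$, reduce both failure modes to a uniform deviation exceeding $\frac{\eps}{4}(1-\frac{3\eta}{2})$, and conclude from the $O(n)$ Natarajan bound together with the Multiclass Fundamental Theorem. The log-factor obstacle you raise is handled in the paper simply by citing Theorem 29.3 of Shalev-Shwartz--Ben-David, whose uniform-convergence bound is already $C_2\frac{d\log k+\log(1/\delta)}{\epsilon^2}$ with $k=3$, so you do not need a chaining argument of your own.
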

\begin{proof}
    Let $\mathcal{R}_{\eta}(T')$ be the probability that tree $T'$ fails to classify correctly a quartet from the $\mathcal{RCN}(\eta)$ model. We can relate this to the quartet distance from the ground truth tree $T$: $\mathcal{R}_{\eta}(T')=(1-\frac{3\eta}{2})\dist(T,T')+\eta$.
    Applying Theorem \ref{thm:natarajan_quart} and the Multiclass Fundamental Theorem (Theorem 29.3 in \cite{shalev2014understanding}), we have that there exist a $C$ such that for every error parameter $\varepsilon>0$, if $Q$ is a set of i.i.d. samples from $\mathcal{RCN}(\eta)$ model with $|Q|\geq C\frac{n}{\varepsilon^2} $ then with probability $1-o(1)$, for every $T'\in \mathcal{T}$: $|\hat{\mathcal{R}}_{Q}(T')-\mathcal{R}(T')|\leq \varepsilon$. We apply this with error parameter $\frac{\eps}{4}\left(1-\frac{3\eta}{2}\right)$ and we have that with probability at least $1-o(1)$, for every $T'\in \mathcal{T}$: $|\hat{\mathcal{R}}_{Q}(T')-\mathcal{R}_{\eta}(T')|\leq \frac{\eps}{4}\left(1-\frac{3\eta}{2}\right)$.
    We show that if this event happens then the verification procedure succeeds on every $T'\in \mathcal{T}$. We divide between cases for $T'$. Recall that the verification step can fail on $T'$ only if $\dist(T,T')\leq \frac{\eps}{2}$ or if $\dist(T,T')>\eps$. Thus, the verification step trivially succeeds on $T'$ if $\frac{\eps}{2}<\dist(T,T')\leq \eps$. If $\dist(T,T')> \eps$ then we have that:
    \[
        \hat{\mathcal{R}}_{Q}(T')\geq \mathcal{R}_{\eta}(T')-\frac{\eps}{4}\left(1-\frac{3\eta}{2}\right)
        >\frac{3\eps}{4}\left(1-\frac{3\eta}{2}\right)+\eta,
    \]
    and the verification procedure succeeds. On the other hand, if $\dist(T,T')\leq \frac{\eps}{2}$ then we have:
    \[
        \hat{\mathcal{R}}_{Q}(T')\leq \mathcal{R}_{\eta}(T')+\left(1-\frac{3\eta}{2}\right)\frac{\eps}{4}
        \leq \frac{3\eps}{4}\left(1-\frac{3\eta}{2}\right)+\eta,
    \]
    and the verification procedure would succeed on $T'$.
\end{proof}

\subsection{Runtime of the Algorithm}
For the runtime of the algorithm, we observe that both the \emph{clustering} and \emph{reconstruction} procedures run in time polynomial in $n$, and the same of course holds for the \emph{verification} procedure. The algorithm then repeats all three of the procedures $\frac{1}{\delta}^{O(\frac{1}{\delta})}\cdot n $ times, thus yielding a runtime of 
\[O\left(\frac{1}{\delta}^{O(\frac{1}{\delta})}\poly(n)\right)=O\left(\frac{1}{\epsilon}^{O(\frac{1}{\epsilon})}\cdot \poly(n)\right).\]
\section{QED Procedure and Analysis}\label{sec:community-detection}

We now present the SBM-inspired algorithm that recovers the leaves corresponding to a subtree $T_u$ of the tree $T$. At a high level, the algorithm iteratively recovers such sets to reconstruct a tree close to the ground truth. To achieve this, it constructs a graph $G'(L',E')$ using the quartet dataset, $L'$ here corresponds to the set of active leaves. The algorithm then uses a semidefinite program to approximately recover the subset of $L'$ that are leaves of that subtree. The size of the recovered subtree is approximately $\delta n$, where $\delta$ is a small constant dependent on the accuracy parameter $\varepsilon$. Our algorithm recovers the intended set up to an error of $O(\delta^2 n)$. We introduce some notation that we use throughout this section.

\paragraph{Notation.} For $a,b\in L$, $\mathcal{Q}_{ab}$ denotes the quartet constraints satisfied by $T$ that place $a$ and $b$ together. $\mathcal{Q}_{L'}$ is used for the subset of $\mathcal{Q}$ that only involves leaves from $L'$, and $\mathcal{Q}_{ab|L'}$ is the subset of $\mathcal{Q}_{ab}$ that only contains elements of $L'$.  Finally, as mentioned in the Technical Overview even though the ground-truth tree $T$ is initially unrooted, we can introduce a root $r$ by subdividing one of the edges of $T$ so that the two subtrees adjacent to $r$ contain between one-third and two-thirds of the leaves. From this point onward, we view $T$ as a rooted tree, and we use $\mathcal{L}$ and $\mathcal{R}$ to denote the leaves of the left and right subtrees of $T$. For vertices $u,v\in L'$ we use $\delta_{uv|L'}$ to denote $|L_{uv}\cap L'|/|L'|$ and for a set $Z\subseteq L'$ we use $\delta_{Z|L'}=|L'\cap L_{\lca(Z)}|/|L'|$, where $\lca(Z)$ is the lowest common ancestor of all nodes in a set $Z$.

\subsection{The \texorpdfstring{$Q(n,\lambda,\eta)$}{Q(n,λ,η)}-model}
\label{sec:Qnpeta definition}
In this section we give a slightly different random model for the problem. A set of quartets $Q\sim Q(n, \lambda, \eta)$ is generated according to the following procedure. For every set $\set{a,b,c,d}\subseteq \binom{L}{4}$, the number of times a quartet on these leaves is included, is determined by an independent Poisson random variable with parameter $\lambda$. Each quartet involving set $\set{a,b,c,d}$ is labeled as $T(a,b,c,d)$ with probability $1-\eta$ and has a random incorrect label with the remaining probability. Note that similarly to the RCN model, the generated set $Q$ could in fact be a multi-set, this is not an issue for us but also the probability of that event happening is $o(1)$ for the sparse parameter regime that is most interesting to us\footnote{If we are in a denser regime, the algorithm can always subsample so that the input is sparse as studied here.}.  The RCN model we are considering corresponds to $Q(n,\lambda ,\eta)$ with $\lambda=\frac{c'}{n^3}$. It is not hard to show that the two models are equivalent, in the sense that an algorithm can use an instance from one to simulate an instance from the other (up to $o(1)$ total variation distance).

Here, for completeness, we describe how an algorithm can use samples from the $\mathcal{RCN}(\eta)$ model to simulate an instance from the $Q(n,\lambda, \eta)$ model. Let for convenience, $\bar{m}={\binom{n}{4}}\cdot\lambda$ (the expected number of quartets in an instance generated from $Q(n,\lambda, \eta)$. We draw $2\bar{m}$ samples from the $\mathcal{RCN}(\eta)$ model and generate a Poisson random variable $X$ with parameter $\lambda \cdot{\binom{n}{4}}$. Note that by Chernoff bound (e.g. Exercise 2.3.5 in \cite{vershynin}), with probability $1-o(1)$, $X\leq 2\bar{m}$. If that is the case then we sample random $X$ constraints among $2\bar{m}$ and provide them to the algorithm, otherwise we give all $2\bar{m}$ constraints to the algorithm. Using Poisson conditioning (e.g. Theorem 3.7.8 in \cite{durrett2019probability}) we get that the quartet sample set generated from the above procedure has $o(1)$ total variation distance from an instance generated according to $Q(n,\lambda, \eta)$.

\subsection{Constructing a Graph}\label{sec:algo_graph}

In this section, we describe how the algorithm constructs a multi-graph $G'=(L',E')$ from an instance $Q$ of quartets drawn from the $Q(n, \frac{c'}{n^3}, \eta)$ model described in \Cref{sec:Qnpeta definition}.
The vertex set is a set of leaves, $L'\subseteq L$, has size $|L'| = \gamma n$, and, roughly speaking, corresponds to leaves that have not yet been recovered by the algorithm, i.e. they are still active. Throughout, we maintain the invariant that $\gamma \geq \frac{1}{4}$.

\paragraph{Construction of the graph.}The construction of $G'$ \iffalse\dnotem{We can construct the graph upfront and then remove the edges that correspond to leaves that are no longer relevant} \fi is straightforward: For every $ab|cd\in Q$ with $\set{a,b,c,d}\subseteq L'$, we add an edge  connecting $a$ and $b$ with probability $1/2$ and an edge connecting $c,d$ otherwise. Notice that the algorithm can construct the graph once and then remove edges coming from constraints that are no longer relevant. In fact, we can construct an initial graph $G$, using all quartets. The graph $G'$ that corresponds to a set of leaves $L'$ can be derived from $G$ by removing all edges coming from a quartet involving at least one item that is not in $L'$.  The following lemma shows that the graph $G'$ constructed has independent edges and gives estimates for the probability of an edge between $a$ and $b$ appearing in terms of the relative position of $a$ and $b$ in the tree.
\begin{lemma}
    \label{lem:edge_prob2}
    The distribution over graphs generated from the above procedure has the following properties.
    \begin{enumerate}
        \item Each edge is included in the graph independently
        \item For $c=\frac{\gamma^2 c'}{4}$, up to $o\left(\frac{1}{n}\right)$ terms:\footnote{We say that $x\leq y$ up to $o(\frac{1}{n})$ terms if $x\leq y+o(\frac{1}{n})$}
            \begin{equation}
                \label{eq:generic_lower_bounds_for_edge_prob}
                \frac{c}{n}\left(\left(1-2\delta_{ab|L'}+\delta^2_{ab|L'}\right)\left(1-\frac{3\eta}{2}\right)+ \frac{\eta}{2}\right)\leq \prob{(a, b)\in E'}.
            \end{equation}
            and also (again up to $o\left(\frac{1}{n}\right)$ terms):
            \begin{equation}
                \label{eq:generic_upperr_bounds_for_edge_prob}
                \prob{(a, b)\in E'}\leq \frac{c}{n}\left(\left(1-2\delta_{ab|L'}+2\delta^2_{ab|L'}\right)\left(1-\frac{3\eta}{2}\right)+ \frac{\eta}{2}\right).
            \end{equation}
        \item If $a\in \mathcal{L}\cap L'$ and $b\in \mathcal{R}\cap L'$ then for $\alpha=\frac{|\mathcal{L}\cap L'|}{|L'|}$ and $\beta=\frac{|\mathcal{R}\cap L'|}{|L'|}$ then up to $o\left(\frac{1}{n}\right)$ terms:
            \begin{equation}
                \label{eq:edge_prob_root_general}
                \prob{(a,b)\in E'}\leq \frac{c}{n}\left((1-2\alpha\beta)\left(1-\frac{3\eta}{2}\right)+\frac{\eta}{2}\right).
            \end{equation}
    \end{enumerate}
\end{lemma}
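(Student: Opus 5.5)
The plan is to use the Poisson structure of the $Q(n,\lambda,\eta)$ model and reduce the whole lemma to counting, for each pair $\{a,b\}$, how many pairs $\{c,d\}$ give a correctly satisfied quartet $ab|cd$.

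\textbf{Independence (part 1).} Fix $L'$. For each $4$-set $\{a,b,c,d\}\subseteq L'$, the number of quartets on it is $\mathrm{Poisson}(\lambda)$. Each such quartet is then independently assigned a label (one of the three splits) and a coin flip choosing one of the two pairs of that split. So each quartet is routed to exactly one of the $6$ pairs contained in the $4$-set, with fixed probabilities. By Poisson thinning (splitting), the numbers of quartets routed to the different pairs of one $4$-set are independent Poisson variables. Different $4$-sets are independent by definition. Hence the edge multiplicity $N_{ab}$ is a sum of independent Poissons over the $4$-sets containing $a,b$, and the family $\{N_{ab}\}$ is independent, since every routed quartet contributes to exactly one pair. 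So $N_{ab}\sim\mathrm{Poisson}(\mu_{ab})$ independently, and $\Pr[(a,b)\in E']=1-e^{-\mu_{ab}}=\mu_{ab}+O(\mu_{ab}^2)$. Because $\mu_{ab}=O(1/n)$, it suffices to compute $\mu_{ab}$ up to $o(1/n)$.

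\textbf{The mean $\mu_{ab}$.} For a pair $\{c,d\}\subseteq L'\setminus\{a,b\}$, the label $ab|cd$ is produced with probability $1-\eta$ if $ab|cd$ holds in $T$, and with probability $\eta/2$ otherwise. The edge $ab$ is then chosen with probability $1/2$. Let $N_{\mathrm{good}}$ be the number of pairs $\{c,d\}$ for which $ab|cd$ holds in $T$, and let $N_{\mathrm{tot}}=\binom{\gamma n-2}{2}=\tfrac{\gamma^2n^2}{2}(1+o(1))$. Then
\[
\mu_{ab}=\frac{\lambda}{2}\Bigl[N_{\mathrm{good}}\bigl(1-\tfrac{3\eta}{2}\bigr)+\tfrac{\eta}{2}N_{\mathrm{tot}}\Bigr].
\]
With $\lambda=c'/n^3$ and $c=\gamma^2c'/4$, this gives $\mu_{ab}=\frac{c}{n}\bigl[f(1-\frac{3\eta}{2})+\frac{\eta}{2}\bigr]+o(1/n)$, where $f=N_{\mathrm{good}}/N_{\mathrm{tot}}$. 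Everything therefore reduces to bounding the fraction $f$.

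\textbf{Combinatorial bounds on $f$ (parts 2 and 3).} Write $w=\lca(a,b)$ and $\delta=\delta_{ab|L'}$.
\begin{itemize}
\item If both $c,d\notin L_w$, the $c$–$d$ path avoids $T_w$, while the $a$–$b$ path lies inside $T_w$. So the quartet $ab|cd$ holds, which gives $f\ge(1-\delta)^2-o(1)$ and hence the lower bound.
\item If exactly one of $c,d$ lies in $L_w$, the $c$–$d$ path passes through $w$, as does the $a$–$b$ path, so $ab|cd$ fails.
\item If both $c,d\in L_w$, the quartet may or may not hold.
\end{itemize}
Hence $f\le(1-\delta)^2+\delta^2+o(1)=1-2\delta+2\delta^2+o(1)$. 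The $o(1)$ terms come from excluding $a,b$ and replacing $\binom{k}{2}$ by $k^2/2$; they contribute only $o(1/n)$.

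For part 3, take $a\in\mathcal L$ and $b\in\mathcal R$. Then the $a$–$b$ path passes through the root $r$. Any pair with $c\in\mathcal L$ and $d\in\mathcal R$ (or vice versa) also has its path through $r$, so it is bad. The number of such pairs is $\alpha\beta|L'|^2(1+o(1))$, so $f\le 1-2\alpha\beta+o(1)$, which yields \eqref{eq:edge_prob_root_general}.

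The main obstacle is making the independence claim airtight in part 1: one must observe that each sampled quartet is assigned to exactly one pair, so that Poisson splitting applies across all pairs simultaneously. The rest is bookkeeping of the $o(1/n)$ errors from $1-e^{-\mu}$ versus $\mu$ and from the finite-size pair counts, together with the fact that restricting quartets to those inside $L'$ just corresponds to taking the sub-family of Poissons.
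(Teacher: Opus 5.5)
Your proposal is correct and follows essentially the same route as the paper. It uses Poisson thinning to get independent Poisson edge multiplicities and their means, and then the same three combinatorial observations: both of $c,d$ outside $L_{\lca(a,b)}$ forces $ab|cd$; exactly one inside forbids it; and a left/right pair $\{c,d\}$ forbids it when $a\in\mathcal{L}$ and $b\in\mathcal{R}$.

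The only cosmetic difference is in how the mean is obtained. The paper first splits $Q$ into independent clean and uniformly-random sub-instances, with $q=\tfrac{3}{2}\eta$, and adds the two resulting rates. You instead do a single six-way thinning per 4-set and read off the same mean $\tfrac{\lambda}{2}\bigl[N_{\mathrm{good}}(1-\tfrac{3\eta}{2})+\tfrac{\eta}{2}N_{\mathrm{tot}}\bigr]$ directly.
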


At a high level, the second item of Lemma \ref{lem:edge_prob2} states that while $\delta_{ab|L'}$ is small the probability of an edge between $a$ and $b$ decreases as the size of the subtree rooted at their lowest common ancestor increases. Note however that this does not hold when $\delta_{ab|L'}$ is large and in fact the upper bound can be very loose in that case. This is especially problematic when the lowest common ancestor of $a$ and $b$ is the root. We complement our bound with \Cref{eq:edge_prob_root_general}. We prove Lemma \ref{lem:edge_prob2} in \Cref{app:distribution of graph}.

\subsection{The SDP }\label{sec:algo_sdp}
In this section, we describe the SDP  used by the algorithm to recover a set $S$ that corresponds to a subtree of the current planted tree $T'$, where $T'$ is the irreducible tree with leaves $L'\subseteq L $. 
We construct the graph $G'=(L',E')$ as described in the previous section, by removing from graph $G$ the edges that correspond to quartets containing at least one element not in $L'$.

We will fix a set $S$ of size $\delta n\leq |S|\leq 2\delta n$ and show that the algorithm recovers a set $\hat{S}$
that is close to $S$ up to error $O(\delta^2n) $ with constant probability at least $p_\delta={\poly(\delta)}$.
We assume that we know the size of $|S|$ up to an additive error of $\poly(\delta)\cdot n $, by successfully guessing it with probability ${\poly(\delta)}$. In other words we assume that we have some estimate $\hat{\delta}_{S}$ (selected from a grid of size $\poly(\frac{1}{\delta})$ and step $\poly(\delta)$) of the value $\delta_{S|L'}=\frac{|S|}{|L'|}$ such that $\delta_{S|L'}\leq \hat{\delta}_{S}$ and $\deltaest-\delta_{S|L'}\ll \delta_{S|L'}^3$.
We can assume, without loss of generality, that $S\subseteq \mathcal{L}$. We furthermore assume that $|\mathcal{L}\cap L'|\geq 48\delta n$ as well as that $|L'\cap \mathcal{R}|\geq \frac{n}{4} $. Note that the last inequality gives us that $\delta\leq\delta_{S|L'}\leq 8\delta$, where $\delta$ and thus $\delta_{S|L'}$ and $\deltaest$ are small enough constants.

We will assume that $n$ is large enough so that the $o(\frac{1}{n})$ terms in Lemma \ref{lem:edge_prob2} are bounded in absolute value by some small enough error term. In particular, we let $\err\ll (1-\frac{3\eta}{2})\delta_{S|L'}^3$ and define functions $\rho^{+}(x)$ and $\rho^{-}(x)$ as follows.
We let $\rho^{+}(x)=\frac{c}{n}\left(\left(1-2x+ 2 x^2\right)(1-\frac{3\eta}{2})+\frac{\eta}{2}+ \err \right)$ and $\rho^{-}(x)=\frac{c}{n}\left(\left(1-2x+  x^2\right)(1-\frac{3\eta}{2})+\frac{\eta}{2}- \err \right))$. We assume that $n$ is large enough so that by Lemma \ref{lem:edge_prob2}, for $a,b\in L'$, we have:
\begin{equation}   \label{eq:prob_in_terms_of_rho}
    \rho^{-}(\delta_{ab|L'})\leq \prob{(a,b)\in G'}\leq \rho^{+}(\delta_{ab|L'}).
\end{equation}

Observe that for small values of $x$, $\rho^{-}(x)\approx\rho^{+}(x)$, so for a high level understanding of the analysis, one can think of these two values applied on $\delta_{ab|L'}$ as being equal, and characterizing the probability that an edge between $a$ and $b$ appears in the graph.

Our solution will try to exploit the fact that the probability that two vertices in $S$ are connected is higher than the probability that a vertex in $S$ is connected with a vertex outside of $S$. In other words, the graph has structure similar to that of the stochastic block model.

We can now describe the SDP. In the SDP  there is one vector for every vertex in $G'$.
Our goal is to show that the vectors that correspond to set $S$ are well clustered together and far from most other vectors. We consider the following program:
\begin{align}
    &\text{max }\quad \sum_{(u,v)\in L'\times L'}\left(\ind{(u,v)\in E'}-q\right)\langle\bar{u}, \bar{v}\rangle\label{eq:sdpobj}\\
    &\text{subject to:}\notag\\
    &\quad \quad\quad\quad\langle\bar{u}, \bar{v}\rangle\geq 0 \hspace{1.9cm}\text{for all } u,v\in L'\label{eq:sdp1}\\
    &\quad \quad\quad\quad \|\bar{u}\|^2=1 \hspace{1.9cm}\text{ for all }u\in L'\label{eq:sdp2}\\
    & \quad \quad\quad\quad \sum_{v\in L'}\langle\bar{u},\bar{v}\rangle\leq \hat{\delta}_{S}\cdot |L'| \hspace{0.63cm}\text{ for all } u\in L'.\label{eq:sdp3}
\end{align}

Where we want $q$ to be smaller than the edge probability between any two vertices in $S$ by a small margin $\Theta(\deltaest^2)$ (see Lemma \ref{lem:entries_of_exp_for_vertices_in_S}). In particular, we let $q=\rho^{-}\left(\hat{\delta}_{S}+4\hat{\delta}_{S}^2\right)$, recall that $\rho^{-}(\delta_{S|L'})$ is a lower bound for the edge probabilities in $S$. We let $A\in \mathbb{R}^{|L'|\times |L'|}$ be a matrix such that $A_{u,v}=\ind{(u,v)\in E'}-q$. Using this notation, the objective function of the SDP can also be written as $\sum_{u,v\in L'\times L'}A_{u,v}\inner{\bar{u},\bar{v}}$. Notice that $A\in \mathbb{R}^{|L'|\times |L'|}$ is a random matrix, the randomness of $A$ comes from the randomness in $G$ which in turn comes from the randomness in the instance but also from the randomness of the algorithm in the selection of edges. We will use $\bar{A}\in \mathbb{R}^{|L'|\times |L'|}$ to denote the expected value of $A$.
Notice that $\bar{A}$ is a deterministic matrix, parameterized by the parameters of the model, $c'$ and $\eta$.
\subsection{Concentration: Expected Instance vs. Actual Instance}\label{sec:algo_grothendieck}
In this section we will use the randomness of the quartet set $Q$ to prove structural properties of the SDP. We want to use the same set of quartet constraints in every call of the QED recovery procedure to ensure polynomial in $1/\epsilon$ samples.
To do that, we will prove structural properties for all possible SDPs that the algorithm could encounter.
Observe that we get a different semidefinite program for every different valid set $L'$ and estimate $\deltaest$.
There are at most $2^n$ valid options for the set $L'$ while the different options for $\deltaest$ are bounded by $\poly({1}/{\delta})$. We show that with high probability over the randomness in $Q$ and the internal randomness of the algorithm, for all options of $L'$ and $\deltaest$, the value of any solution to the SDP on the observed (random) instance $A$ is close to its value on the deterministic ``expected'' instance $\bar{A}$. This will allow us to prove structural properties of the SDP by working with $\bar{A}$. Our analysis follows ideas from \cite{guedon2016community}. 

Recall here that $G$ is the multi-graph constructed from the entire set of quartets $Q$ where for every $ab|cd\in Q$ an edge is added between $a$ and $b$ with probability $1/2$ and between $c$ and $d$ with the remaining probability. To get graph $G'=(L', E')$, we remove every edge coming from a quartet involving at least one item not in $L'$. 

\begin{lemma}
    \label{lem:Actual_to_expected_bound}
    With probability at least $1-\poly(1/\delta)\exp(-1.5n)=1-o(1)$ over the randomness in $G$ which comes from the randomness in the algorithm and the randomness in $Q\sim Q(n,\frac{c'}{n^3}, \eta)$, the following holds. For every valid $L'$, for every valid $\deltaest$ and for every $\set{\bar{u}}_{u\in L'}$ satisfying constraints \ref{eq:sdp1}, \ref{eq:sdp2} and \ref{eq:sdp3} parameterized by $L'$ and $\deltaest$ (recall $q=\rho^{-}(\deltaest+4\deltaest^2)$):
    \begin{equation}            
        \left|\sum_{u,v\in L'}A^{L',\deltaest}_{u,v}\inner{\bar{u},\bar{v}}-\sum_{u,v\in L'}\bar{A}^{L',\deltaest}_{u,v}\inner{\bar{u},\bar{v}}\right|\leq 10 K_{G}\sqrt{c'}n,\label{eq:good_quart_condition}
    \end{equation}
    where $K_G\le 1.783$ is the Grothendieck constant and   $A^{L',\deltaest}$ is used to denote the matrix of the SDP parameterized by $L'$ and $\deltaest$.

\end{lemma}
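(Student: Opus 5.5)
The plan follows the Grothendieck-inequality approach of \cite{guedon2016community}. First, the difference of the two objectives does not depend on $q$. Indeed, $A^{L',\deltaest}-\bar A^{L',\deltaest}=W^{L'}$, where $W^{L'}_{u,v}=\ind{(u,v)\in E'}-\prob{(u,v)\in E'}$ (with the appropriate multiplicity counts for the multigraph). So $\deltaest$ only enters through the feasible set. Since every feasible solution has unit vectors, it suffices to bound
\[
\sup_{\|\bar u\|=1}\Bigl|\sum_{u,v\in L'}W^{L'}_{u,v}\inner{\bar u,\bar v}\Bigr|\le K_G\,\|W^{L'}\|_{\infty\to1},
\]
where the inequality is Grothendieck's. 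Here $\|W\|_{\infty\to1}=\max_{x,y\in\{\pm1\}^{L'}}x^\top W y$. The problem therefore reduces to showing that, with probability $1-\exp(-\Omega(n))$ per fixed $L'$, we have $\|W^{L'}\|_{\infty\to1}\le 10\sqrt{c'}n$.

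Next, fix $L'$ and a sign pair $x,y\in\{\pm1\}^{L'}$. By Lemma~\ref{lem:edge_prob2}(1), the edge indicators are independent, and each has mean $p_{uv}=O(c'/n)$. Then $x^\top W y=\sum_{u,v}x_uy_vW_{uv}$ is a sum of independent centered bounded variables. Its variance is $\sum_{u,v}p_{uv}\le \frac{c}{n}\cdot|L'|^2(1+o(1))\le c' n/4$, using $c=\gamma^2c'/4$ and counting ordered pairs. Bernstein's inequality (or the Poisson tail) then gives
\[
\prob{x^\top W y> t}\le \exp\Bigl(-\frac{t^2}{2(\sigma^2+t/3)}\Bigr).
\]
With $t=10\sqrt{c'}n$ and $c'$ a large constant, the exponent is at least of order $\min(100c'n^2/(c'n),\,\sqrt{c'}n)$. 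This is large enough to beat a union bound over $4^{n}$ sign pairs and $2^n$ sets $L'$, leaving failure probability $\exp(-1.5n)$. The final factor $\poly(1/\delta)$ comes from the union over the $\deltaest$ grid, which is harmless since the bound is independent of $\deltaest$ anyway.

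The main obstacle is the dependence created by passing from $G$ to $G'$. The edges of different subgraphs $G'$ are coupled, and within one $G'$ the edges arise from quartets, so each quartet contributes exactly one of two edges $ab$ or $cd$. I would handle this in two ways. First, work with the Poisson model $Q(n,c'/n^3,\eta)$. There, Poisson thinning makes the counts of quartets on each $4$-set and label, and the coin outcome, independent Poissons. Hence the multiplicity of edge $ab$ in $G'$ is a sum of independent Poissons over the $4$-sets inside $L'$ containing $a,b$, and these multiplicities are independent across pairs. This is the content of Lemma~\ref{lem:edge_prob2}(1). Second, the union bound over $L'$ is done \emph{after} fixing $L'$, so no joint independence across different $L'$ is needed. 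For the Poisson counts, I would use the Poisson moment generating function $\Exp{e^{\theta(X-\lambda)}}=e^{\lambda(e^\theta-1-\theta)}$ with $\theta=\Theta(1/\sqrt{c'})$ in place of Bernstein's inequality. This handles the unbounded multiplicities and gives the same bound.
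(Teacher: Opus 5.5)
Your proposal is correct and takes essentially the same route as the paper. You fix $L'$, use Grothendieck's inequality to reduce to $\max_{x,y\in\{\pm1\}^{L'}} x^\top W y$, apply Bernstein using the edge independence of Lemma~\ref{lem:edge_prob2}, and take a union bound over sign pairs, the $2^n$ sets $L'$, and the $\deltaest$ grid. The one slip is that $W_{uv}$ and $W_{vu}$ are the same random variable, so the ordered-pair terms are not independent. Group them as $\sum_{u<v}W_{uv}(x_uy_v+x_vy_u)$, as the paper does; this at most doubles your variance and range bounds and leaves the $\exp(-\Omega(n))$ tail, and hence the conclusion, intact.
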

\begin{proof} Fix $L'$ and $\deltaest$, $L'$ is a subset of $L$ with $|L'|=\gamma n$ and $\deltaest$ is in the grid of $\poly(\delta)$ possible options.
    For convenience we let $\Delta A_{u,v}=A_{u,v}-\bar{A}_{u,v}=\ind{(u,v)\in E'}-\prob{(u,v)\in E'}$. We want to bound the probability of the event 
    \[
    \mathcal{A}=\set{\exists \set{\bar{u}}_{u\in V}:\left|\sum_{u,v}\Delta A_{u,v}\inner{\bar{u},\bar{v}}\right |>10K_G\sqrt{c'}n }.
    \]
    By Grothendieck's inequality (Theorem \ref{thm: Grothendieck}) we get that the event $\mathcal{A}$ is a subset of the following event:
    \[
        \mathcal{B}=\set{\exists x,y\in \set{-1,1}^{\gamma n}:\left| \sum_{u,v}\Delta A_{u,v}x_uy_v\right |>10\sqrt{c'}n}.
    \]
    We now focus on bounding the probability of event $\mathcal{B}$. We fix $x,y$ and bound the probability that $\left|\sum_{u,v}\Delta A_{u,v}x_uy_v\right|> 10\sqrt{c'}n $. We will then take a union bound over all $x,y\in \set{-1,1}^{\gamma n}$. We write $\sum_{u,v} \Delta A_{u,v}x_uy_v=\sum_{u<v}\Delta A_{u,v}(x_uy_v+x_vy_u)$ and observe that this is a sum of $\binom{\gamma n}{2}$ independent random variables. We will apply Bernstein's inequality to this sum. For the expected value, we have that:
    \begin{align*}
        \Exp{\Delta A_{u,v}(x_uy_v+x_vy_u)}&=(x_uy_v+x_vy_u)\Exp{\Delta A_{u,v}}\\
        &=(x_uy_v+x_vy_u)\Exp{\ind{(u,v)\in E}-\prob{(u,v)\in E}}\\
        &=0.
    \end{align*}
    For the variance we have that:
    \begin{align*}
        \Var{\Delta A_{u,v}(x_uy_v+x_vy_u)}&=\left(x_uy_v+x_vy_u\right)^2\Var{\Delta A_{u,v}}\\
        &\leq 4\cdot \prob{(u,v)\in E'}\\
        &\leq \frac{4c'}{n},
    \end{align*}
    where we have used that $\Var{\Delta A_{u,v}}=\Var{\ind{(u,v)\in E'}}\leq \prob{(u,v)\in E'}$. We finally observe that $\left|\Delta A_{u,v}(x_uy_v+x_vy_u)\right|\leq 2$. We can now apply Bernstein's inequality, \Cref{thm:Bernstein}:
    \[
        \prob{\left|\sum_{u,v}\Delta A_{u,v}x_uy_v\right|\geq 10\sqrt{c'}n}\leq 2\exp\left(-\frac{50c'n}{4c'+20\sqrt{c'}/3}\right)
        \leq 2\exp\left(-4.5n \right),
    \]
    using that $c'\geq 1$. We now apply the union bound over all $x,y\in \set{-1,1}^{\gamma n}$ and that $\gamma\leq 1$:
    \[
        \prob{\exists x,y\in \set{-1,1}^{n}: \sum_{u,v}\Delta A_{u,v}x_uy_v>10\sqrt{c}n}\leq 2^{2n}2\exp\left(-4.5n\right).
    \]
    Taking a union bound over all $L'$ and $\deltaest$ we have that with probability at least $$1-\poly({1}/{\delta})2^{3n}\exp(-4.5n)\geq1-\poly(1/\delta)\exp(-1.5n)=1-o(1),$$ \Cref{eq:good_quart_condition} holds for all $L'$ and $q$.
\end{proof}
We say that a graph $G$ is good if it satisfies the condition of Lemma \ref{lem:Actual_to_expected_bound}. Notice that we don't ask for $G$ to be generated from the model of section \cref{sec:Qnpeta definition} in order for the graph constructed to be good, the parameters of the model, $c'$ and $\eta$  are implicit in the definition of matrices $\bar{A}^{L',q}$ which are deterministic matrices. To make this explicit, we will say that a graph $G$ is $(c',\eta)$-good.
%Thus, an (inefficient) algorithm can check, given enough time, whether a quartet instance is $(c',\eta, T)$-good without any knowledge of where the instance came from, by constructing every matrix $A^{L',q}$ and verifying that $ \left|\sum_{u,v\in L'}A^{L',q}_{u,v}\inner{\bar{u},\bar{v}}-\sum_{u,v\in L'}\bar{A}^{L',q}_{u,v}\inner{\bar{u},\bar{v}}\right|\leq 10 K_{G}\sqrt{c'}n$ for all SDP solutions. 
A graph $G$ generated from quartet set $Q\sim Q(n, \frac{c'}{n^3}, \eta)$ according to the procedure described above is $(c',\eta)$-good with very high probability and for the remaining of this section we will condition on that event.
\subsection{QED Description and Theorem}
In this section we describe in detail the QED procedure and state the main guarantee of the procedure. The algorithm is given a quartet instance $Q$, a subset of the leaves $L'$ and an estimate $\deltaest$ for $\frac{|S|}{|L'|}$ (which is successfully guessed with probability $\poly(\delta)$). The algorithm constructs the graph described in \Cref{sec:algo_graph} and solves the corresponding semidefinite program to obtain vectors $\set{\bar{v}}_{v\in L'}$. Finally the algorithm selects a vector $\bar{u}$ uniformly at random in $\set{\bar{v}}_{v\in L'}$ and returns $\hat{S}=\set{v\in L': \|\bar{u}-\bar{v}\|^2\leq 1}$. See Algorithm \ref{alg:recover_S} for pseudocode.
\begin{algorithm}[ht]
\caption{$\textsc{QED}(Q, L', \deltaest)$}
\label{alg:recover_S}
    \Input{Quartet instance: $Q$, subset of leaves: $L'\subseteq L$, estimate $\deltaest$ (the algorithm guesses this up to $\poly(\delta)$ error)}
    \Output{Estimate set $\hat{S}$}
    Construct the graph $G'$ and solve the SDP (Eq.~\ref{eq:sdpobj}, \ref{eq:sdp1}, \ref{eq:sdp2}, \ref{eq:sdp3}, see \Cref{sec:algo_graph}, ~\ref{sec:algo_sdp})\\

    Let $u$ be a uniformly random vertex in $L'$\\
    Let $\hat{S}=\set{v\in L': \|\bar{u}-\bar{v}\|^2\leq 1}$\\
    Return $\hat{S}$.
\end{algorithm}
The procedure enjoys the following guarantee, which roughly speaking says that if the graph $G$ is good then the algorithm recovers the set with $\Omega(\delta)$ probability.
\begin{theorem}[QED-guarantee]
\label{thm:qed-guarantee} 
    Let $G$ be a $(c',\eta)$-good graph (see \Cref{sec:algo_grothendieck}) with 
\[
c' \geq \left(\frac{40960 \cdot K_G}{\left(1-\frac{3\eta}{2}\right)\delta^5}\right)^2.
\]
Let $L' \subseteq L$ be a subset of the leaves, and let $S$ be a set satisfying $\delta n \leq |S| \leq 2\delta n$, where $\delta$ is smaller than a positive absolute constant $\delta^*$.
Assume that there exists a vertex $u \in T$ such that $S = L_u \cap L'$.
Furthermore, assume that $|L'\cap \mathcal{L}|\geq 48\delta n$ and that $|L'\cap \mathcal{R}|\geq \frac{n}{4}$ as well as that $\delta_{S|L'}=\frac{|S|}{|L'|}\leq \deltaest\leq \delta_{S|L'}+\poly(\delta)$ for a small enough polynomial in $\delta$. Then, for sufficiently large~$n$, Algorithm~\ref{alg:recover_S} returns, with probability at least $\frac{\delta_{S|L'}}{2} \geq \frac{\delta}{2}$, a set $\hat{S}$ such that
    \[
        |S\triangle \hat{S}|=O(\delta^2 n).
    \]
\end{theorem}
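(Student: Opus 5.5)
We follow the standard ``expected instance + Grothendieck concentration'' template, in the style of \cite{guedon2016community} and \cite{makarychev2016learning}.

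\textbf{Step 1: a good feasible solution.} We exhibit the intended solution $\bar u=e_1$ for $u\in S$ and mutually orthogonal unit vectors for $u\notin S$. This satisfies \eqref{eq:sdp1}--\eqref{eq:sdp3}, because the row sums equal $|S|\le \deltaest|L'|$. Its expected value is $\sum_{u,v\in S}\bar A_{u,v}$. For $u,v\in S$ we have $\delta_{uv|L'}\le\delta_{S|L'}\le\deltaest$. By \eqref{eq:prob_in_terms_of_rho} and the choice $q=\rho^-(\deltaest+4\deltaest^2)$, each such entry is at least
\[
\frac{c}{n}\Bigl(1-\frac{3\eta}{2}\Bigr)\Omega(\deltaest^2),
\]
since $\rho^-$ has derivative about $-2$. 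So the planted value is $\Omega\bigl((1-\tfrac{3\eta}{2})\,c\,\delta^4 n\bigr)$.

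\textbf{Step 2: structure of near-optimal solutions of $\bar A$.} Let $x_{uv}=\inner{\bar u,\bar v}\in[0,1]$, and split $\sum\bar A_{uv}x_{uv}$ according to where the pairs lie in the tree.

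For pairs with $\delta_{uv|L'}$ larger than about $\deltaest+4\deltaest^2$ but still small, $\bar A_{uv}\le \rho^+(\delta_{uv|L'})-q$ is negative, of order $-\frac{c}{n}(1-\tfrac{3\eta}{2})(\delta_{uv|L'}-\deltaest)$, up to the $2x^2$ vs.\ $x^2$ slack, which is $O(\deltaest^2)$. Cross-root pairs are negative by \eqref{eq:edge_prob_root_general}. Here we use $|L'\cap\mathcal R|\ge n/4$ and $|L'\cap\mathcal L|\ge 48\delta n$, so that $2\alpha\beta$ dominates $2\deltaest$. Intermediate pairs are handled by monotonicity in $\delta_{uv|L'}$ together with the upper bound \eqref{eq:generic_upperr_bounds_for_edge_prob}. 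Positive entries occur only for pairs whose LCA subtree has relative size at most $\deltaest+O(\deltaest^2)$.

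Using the spreading constraint \eqref{eq:sdp3}, the total positive mass $\sum_v x_{uv}$ attached to each $u$ is at most $\deltaest|L'|$. We then compare with the planted solution via an exchange/charging argument. Any solution whose value is within an additive $\mathrm{poly}(\delta)\cdot\frac{c}{n}(1-\tfrac{3\eta}{2})n^2$ of optimal must place most of its inner-product mass for $u\in S$ inside $S$. Concretely, we aim for
\[
\sum_{u\in S,\,v\notin S}x_{uv}+\sum_{u\notin S,\,v\in S}x_{uv}=O(\delta^3 n^2)
\]
and
\[
\sum_{u,v\in S}(1-x_{uv})=O(\delta^3 n^2).
\]
The buffer of vertices $w$ whose LCA with $S$ is only slightly above $u$ has size $O(\delta^2 n)$. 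Its mass is absorbed into the error, and this is where the $O(\delta^2n)$ final error originates.

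\textbf{Step 3: transfer to the actual instance.} By Lemma~\ref{lem:Actual_to_expected_bound}, both the SDP optimum and the planted solution differ between $A$ and $\bar A$ by at most $10K_G\sqrt{c'}n$. Recall that $c=\gamma^2c'/4$ with $\gamma\ge1/4$. The lower bound on $c'$ in the statement makes $10K_G\sqrt{c'}n$ at most a $\delta^5$-fraction of $c\,(1-\tfrac{3\eta}{2})\,n$. Therefore the actual SDP optimum is near-optimal for $\bar A$, and Step 2 applies to it.

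\textbf{Step 4: rounding.} Markov's inequality applied to the bounds of Step 2 shows that all but $O(\delta^2 n)$ vertices $u\in S$ are ``good'', in the following sense: $\|\bar u-\bar v\|^2\le1$ (that is, $x_{uv}\ge1/2$) holds for all but $O(\delta^2n)$ vertices $v\in S$, and fails for all but $O(\delta^2n)$ vertices $v\notin S$. A uniformly random $u$ is good with probability at least $|S|/(2|L'|)=\delta_{S|L'}/2$, and when it is good the ball $\hat S$ satisfies $|S\triangle\hat S|=O(\delta^2n)$.

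\textbf{Main obstacle.} The delicate part is Step 2. The lower and upper edge-probability bounds differ by $\delta^2$ terms, which are exactly the size of the margin separating $S$ from its neighbours. The charging argument therefore has to be done carefully along the path from $u$ to the root, peeling off subtrees of increasing size. I would first try to bound, for each $u$, the contribution $\sum_v \bar A_{uv}x_{uv}$ by a concave function of how the mass is distributed over the ancestors of $u$. The spreading constraint then forces the optimum to concentrate on $L_u\cap L'$, up to a buffer of size $O(\delta^2 n)$.
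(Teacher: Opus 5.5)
Your outer framework is sound: transferring to $\bar A$ via Lemma~\ref{lem:Actual_to_expected_bound}, then thresholding at $x_{uv}\ge 1/2$ around a random $u\in S$ and applying Markov. The two targets you set in Step~2, $\sum_{u,v\in S}(1-x_{uv})=O(\delta^3n^2)$ and $\sum_{u\in S,v\notin S}x_{uv}=O(\delta^3n^2)$, are exactly what the paper proves ($1-\alpha=O(\delta_{S|L'})$ and $\beta=O(\delta_{S|L'}^2)$).

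The gap is that Step~2, which carries the whole content of the theorem, is only asserted, and the mechanism you sketch would not close it. A row-by-row charging for $u\in S$ under the budget \eqref{eq:sdp3} cannot separate $S$ from the buffer. The available bounds (Lemmas~\ref{lem:entries_of_exp_for_vertices_in_S} and~\ref{lem:edge_prob_u_in_S_v_not_inS}) are:
\begin{itemize}
\item for $v\in S$ with $\lca(u,v)$ the root of $S$, only $\bar A_{uv}\ge\frac{c}{n}(1-\frac{3\eta}{2})\deltaest^2$;
\item for $v\in B$, only $\bar A_{uv}\le 10\frac{c}{n}(1-\frac{3\eta}{2})\deltaest^2$.
\end{itemize}
These windows overlap, so nothing row-wise forces $u$'s mass into $S$ rather than $B$. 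Comparing the optimum's value with the planted solution's value does not help either, since the optimum can earn its value elsewhere in the tree.

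You already have the buffer. What is missing is a single global exchange performed on the optimum itself: replace every $\bar u$, $u\in S$, by one fresh unit vector $e$ orthogonal to all others. This stays feasible, since row sums become $|S|\le\deltaest|L'|$ on $S$ and can only drop off $S$. The change in $\bar A$-value splits into blocks:
\begin{itemize}
\item $(L'\setminus S)^2$ is unchanged.
\item $S\times(L'\setminus(S\cup B))$ is nonnegative, because $\bar A\le 0$ there.
\item $S\times S$ gains at least $\frac{c}{n}(1-\frac{3\eta}{2})\deltaest^2\sum_{u\ne v\in S}(1-x_{uv})$.
\item The only loss is on $S\times B$, at most $20\frac{c}{n}(1-\frac{3\eta}{2})\deltaest^2|S||B|=O(\delta^5)\,c(1-\frac{3\eta}{2})n$. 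This uses $|B|\le 12\deltaest^2|L'|$, which holds because every $v$ with $\bar A_{uv}\ge0$ for some $u\in S$ satisfies $\delta_{uv|L'}\le\deltaest+11\deltaest^2$, and all such $v$ lie under one common ancestor.
\end{itemize}
So the $\rho^{+}/\rho^{-}$ slack is never resolved. It is confined to a set of size $O(\delta^2 n)$ whose cost, $O(\delta^5)$, is beaten by the gain $\Omega(\delta^4)(1-\alpha)$. Optimality for $A$ plus goodness then gives the cohesion bound; goodness must be applied to the optimum and to this exchanged solution, not to the planted one as in your Step~3.

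The outside-mass bound does not come from any exchange. Entries just outside $B$ may be negative but arbitrarily close to $0$, so they penalize nothing. It comes instead from the spreading constraint. Summing \eqref{eq:sdp3} over $u\in S$ gives
\[
\sum_{u\in S,v\notin S}x_{uv}\le\delta_{S|L'}\deltaest|L'|^2-\sum_{u,v\in S}x_{uv}.
\]
This is $O(\delta^3n^2)$ only after combining the cohesion bound with the hypothesis $\deltaest-\delta_{S|L'}\le\delta_{S|L'}^2$, which your outline never uses.

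Finally, in Step~4 Markov only shows that at least half of $S$ is good, not all but $O(\delta^2 n)$. Half is exactly what the $\delta_{S|L'}/2$ success probability requires, so this is a minor overstatement.
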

We prove the theorem in Section \ref{sec:algo_recover_S}. The algorithm  ``guesses" a successful estimate $\deltaest$ of $\delta_{S|L'}$ ($\delta_{S|L'}\leq \deltaest\leq \delta_{S|L'}+\poly(\delta)$) with probability ${\poly(\delta)}$ meaning that the success probability of QED procedure is $\Omega(\poly(\delta))$.

\subsection{A Probability Gap on the Expected Instance}\label{sec:algo_in_out_gap}

By Lemma \ref{lem:Actual_to_expected_bound}, with high probability, the value of every solution on the actual, sampled instance described by matrix $A$, will be close to its value on the ``expected'' instance, described by matrix $\bar{A}$. This allows us to argue about the expected instance, $\bar{A}$, for which we have a better handle. Specifically, to that end, we will use Lemma \ref{lem:edge_prob2} to understand the value of $\bar{A}_{u,v}$ for different leaves $u,v\in L'$. For the probabilities of edges between vertices of $S$, we have the following lemma:
\begin{lemma}
    \label{lem:entries_of_exp_for_vertices_in_S}
    Under the assumptions of \Cref{thm:qed-guarantee}, for every $u,v\in S$, we have that:
    \begin{align*}
        \frac{c}{n}\left(1-\frac{3\eta}{2}\right)\hat{\delta}_{S}^2\leq\bar{A}_{u,v}= \prob{(u,v)\in G'}-q.
    \end{align*}
\end{lemma}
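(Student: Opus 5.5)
For $u,v\in S$ I will first bound $\delta_{uv|L'}$, then compare $\rho^-$ at two nearby points. Since $S=L_u\cap L'$ for some vertex $u$ of $T$, the lowest common ancestor of any two leaves of $S$ is a descendant of that vertex. Hence $L_{uv}\cap L'\subseteq S$, and
\[
\delta_{uv|L'}\le \delta_{S|L'}\le \deltaest .
\]

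By \eqref{eq:prob_in_terms_of_rho}, $\prob{(u,v)\in G'}\ge \rho^-(\delta_{uv|L'})$. The quadratic $1-2x+x^2=(1-x)^2$ is decreasing on $[0,1]$, and we are in the small-$x$ regime. So $\rho^-$ is decreasing there, and
\[
\prob{(u,v)\in G'}\ge\rho^-(\deltaest).
\]
The same monotonicity gives $\bar{A}_{u,v}=\prob{(u,v)\in G'}-q \ge \rho^-(\deltaest)-\rho^-(\deltaest+4\deltaest^2)$. Note that the $\err$ terms cancel exactly, since they enter $\rho^-$ additively.

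It remains to compute this difference. Write $x=\deltaest$ and $y=x+4x^2$. Then
\[
\rho^-(x)-\rho^-(y)=\frac{c}{n}\left(1-\frac{3\eta}{2}\right)\bigl[(1-x)^2-(1-y)^2\bigr]=\frac{c}{n}\left(1-\frac{3\eta}{2}\right)(y-x)(2-x-y).
\]
Here $y-x=4x^2$ and $2-x-y=2-2x-4x^2\ge 1$ once $x$ is a small enough constant. Since $\deltaest\le 8\delta+\poly(\delta)$, this holds whenever $\delta<\delta^*$. The product is therefore at least $4x^2\ge x^2$, which gives the claimed bound $\frac{c}{n}(1-\frac{3\eta}{2})\deltaest^2$.

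The only subtle point is the first step: that the lowest common ancestor of two leaves of $S$ stays inside the subtree defining $S$, restricted to $L'$. This follows directly from $S$ being exactly $L_u\cap L'$. The rest is the one-line difference-of-squares computation above, where the factor $4$ leaves slack to absorb $2-x-y<2$.
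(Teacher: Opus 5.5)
Your proposal is correct and matches the paper's proof. Both bound $\delta_{uv|L'}\le\delta_{S|L'}\le\hat{\delta}_S$, use that $\rho^-$ is decreasing, and evaluate $\rho^-(\hat{\delta}_S)-\rho^-(\hat{\delta}_S+4\hat{\delta}_S^2)=\frac{c}{n}\left(1-\frac{3\eta}{2}\right)\left(8\hat{\delta}_S^2-O(\hat{\delta}_S^3)\right)$, which you compute explicitly via the difference of squares (with the factor $1-\frac{3\eta}{2}>0$ correctly kept and the $\err$ terms cancelling).
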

\begin{proof}
    Fix $u,v\in S$, by the definition of $\delta_{S|L'}$:
    \[        \delta_{uv|L'}\leq \delta_{S|L'}\leq \hat{\delta}_{S}.
    \]
    Using the fact that $\rho^{-}(x)$ is monotonically decreasing in the interval $(-\infty,1]$, we have that it suffices to show that $\rho^{-}(\hat{\delta}_{S})-q\geq \frac{c}{n}\hat{\delta}_{S}^2$. We have that:
    \begin{align*}
        \rho^{-}\left(\hat{\delta}_{S}\right)-q&=\rho^{-}\left(\deltaest\right)-\rho^-\left(\deltaest+4\deltaest^2\right)\\
        &=\frac{c}{n}\left(1-\frac{3\eta}{2}\right)\left(8\deltaest^2-O\left(\deltaest^3\right)\right)\\
        &\geq \frac{c}{n}\left(1-\frac{3\eta}{2}\right)\deltaest^2,
    \end{align*}
    where we have used that $\deltaest$ is a small enough constant.
\end{proof}

We now provide a lemma which upper bounds $\bar{A}_{u,v}$ for $u\in S$ and $v\notin S$:
\begin{lemma}
    \label{lem:edge_prob_u_in_S_v_not_inS}
    Let $u\in S$ and $v\in L'\setminus S$ and suppose that the assumptions of \Cref{thm:qed-guarantee} hold. Then, we have that:
    \begin{equation}
        \bar{A}_{u,v}=\prob{(u,v)\in G'}-q\leq 10\left(1-\frac{3\eta}{2}\right) \frac{c}{n}\deltaest^2.
    \end{equation}
    Furthermore, if it holds that $\delta_{uv|L'}\geq \deltaest+11\deltaest^2$ then we have that:
    \[
        \bar{A}_{u,v}\leq 0.
    \]
\end{lemma}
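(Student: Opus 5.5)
Both claims follow from the bounds \eqref{eq:prob_in_terms_of_rho} and \eqref{eq:edge_prob_root_general} together with $q=\rho^{-}(\deltaest+4\deltaest^2)$. I will split according to whether $\lca(u,v)$ is the root $r$ or lies strictly inside the left subtree; recall that $S\subseteq\mathcal{L}$ and $u\in S$.

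\emph{Case 1: $\lca(u,v)\neq r$.} Then $v\in\mathcal{L}$, and \eqref{eq:prob_in_terms_of_rho} gives $\bar A_{u,v}\le \rho^{+}(\delta_{uv|L'})-\rho^{-}(\deltaest+4\deltaest^2)$. Since $S=L_w\cap L'$ for a vertex $w$ and $v\notin S$, $\lca(u,v)$ is a proper ancestor of $w$. Hence $\delta_{uv|L'}\ge \delta_{S|L'}\ge \deltaest-\poly(\delta)$, where the $\poly(\delta)$ slack is $\ll\deltaest^3$. Write $x=\delta_{uv|L'}$ and $y=\deltaest+4\deltaest^2$. Then
\[
\rho^+(x)-\rho^-(y)=\tfrac{c}{n}\Bigl[\bigl(1-\tfrac{3\eta}{2}\bigr)\bigl(-2(x-y)+2x^2-y^2\bigr)+2\,\err\Bigr].
\]
For the first claim there are two subcases.
\begin{itemize}
\item If $x$ is small, say $x\le 1/4$, the function $-2x+2x^2$ is decreasing. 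So it suffices to plug in $x=\deltaest-o(\deltaest^3)$, which gives $8\deltaest^2+O(\deltaest^3)+2\err/(1-\tfrac{3\eta}{2})\le 10\deltaest^2$, using $\err\ll(1-\tfrac{3\eta}{2})\deltaest^3$.
\item If $x$ is large but $\lca(u,v)$ is still inside $\mathcal{L}$, then $x\le|\mathcal{L}\cap L'|/|L'|$. The expression $-2x+2x^2$ is at most $0$ on $[0,1]$, and so is bounded by its value near $x=\deltaest$ whenever $x\le 1-\deltaest$. I expect this to hold because $|\mathcal{R}\cap L'|\ge n/4$ forces $x\le 1-1/4$; there $-2x+2x^2\le -3/8$, which is far below the value at $\deltaest$.
\end{itemize}
For the second claim I take $x\ge \deltaest+11\deltaest^2$. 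Then $-2(x-y)\le -14\deltaest^2$ at the boundary point, while $2x^2-y^2\approx \deltaest^2$ there. This gives a total of $-13\deltaest^2+O(\deltaest^3)+\text{err}<0$. Beyond the boundary point, the same monotonicity split (decreasing on small $x$, and bounded by $-3/8$ plus $O(\deltaest)$ on $x\in[1/4,3/4]$) keeps the value negative.

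\emph{Case 2: $\lca(u,v)=r$,} so $v\in\mathcal{R}$. Here I use \eqref{eq:edge_prob_root_general} with $\alpha=|\mathcal{L}\cap L'|/|L'|$ and $\beta=1-\alpha$. We have $\beta\ge (n/4)/|L'|\ge 1/4$ and $\alpha\ge 48\delta n/|L'|\ge 48\delta\ge 6\,\delta_{S|L'}$, the last step using $\delta_{S|L'}\le 8\delta$. Thus $2\alpha\beta\ge$ a quantity of order $\deltaest$, roughly $3\deltaest$ after using $\beta\ge 1/4$ and $\beta\le 3/4$. Then
\[
\bar A_{u,v}\le \tfrac{c}{n}\Bigl[\bigl(1-\tfrac{3\eta}{2}\bigr)\bigl(-2\alpha\beta+2y-y^2\bigr)+\err\Bigr].
\]
This is negative provided $2\alpha\beta>2\deltaest+O(\deltaest^2)$. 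That settles both claims in this case, since the second claim's hypothesis is automatically compatible with the first.

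\textbf{Main obstacle.} The delicate step is the constant bookkeeping in Case 2, namely showing $2\alpha\beta\ge 2\deltaest(1+\Omega(1))$. This needs $\alpha\ge 48\delta$ together with $\deltaest\le 8\delta+\poly(\delta)$, which gives $\alpha\ge 6\deltaest-o(\deltaest)$, and also $\beta\ge 1/4$, which I must extract from $|\mathcal{R}\cap L'|\ge n/4$ and $|L'|\le n$. The product is then $2\alpha\beta\ge 3\deltaest$, which suffices. In Case 1 the secondary care point is making sure the loose quadratic upper bound $\rho^+$ does not defeat monotonicity for large $x$; the restriction $x\le 3/4$, which comes from the right side being nonempty in $L'$, handles this.
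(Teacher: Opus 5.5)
Your proposal is correct and follows essentially the paper's own route: split on $v\in\mathcal{R}$ (use \eqref{eq:edge_prob_root_general} with $\alpha\ge 48\delta\ge 6\delta_{S|L'}$ and $\beta\ge 1/4$) versus $v\in\mathcal{L}$ (monotonicity of $\rho^{+}$ on $(-\infty,1/2]$ together with $\delta_{uv|L'}\le 3/4$), including the same $-13\deltaest^2$ boundary computation; the paper merely proves the second claim first and then reuses it for the first. One small arithmetic slip: at $x\approx\deltaest$ you get $-2(x-y)+2x^2-y^2=9\deltaest^2+O(\deltaest^3)$, not $8\deltaest^2$, because you dropped the $2x^2-y^2\approx\deltaest^2$ term; this is still below $10\deltaest^2$, so the first claim holds.
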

\begin{proof}
    We first show the second part of the lemma. We distinguish between cases for $v$. Consider the case where $v\in \mathcal{R}$ (recall $u\in \mathcal{L}$), we show that under the assumptions of the lemma, $\bar{A}_{u,v}\leq 0$. For that we use Lemma \ref{lem:edge_prob2} and in particular \Cref{eq:edge_prob_root_general}. We have that $\alpha\geq 48\delta\geq 6\delta _{S|L'}$ and that $\beta\geq \frac{1}{4}$, the bound that we get is:
    \begin{align*}        \bar{A}_{u,v}&=\prob{(u,v)\in G'}-q\\
        &\leq \frac{c}{n}\left((1-2\alpha\beta)\left(1-\frac{3\eta}{2}\right)+ \frac{3\eta}{2}+o(1)\right)-\rho^{-}\left(\deltaest+4\deltaest^2\right)\\
        &\leq \frac{c}{n}\left(1-\frac{3\eta}{2}\right)\left(-\delta_{S|L'}+O(\deltaest^2)\right)\\
        &\leq 0,
    \end{align*}
    where we have assumed that $\delta_{S|L'}$  is a sufficiently small constant and $\deltaest$ is sufficiently close to $\delta_{S|L'}$. We now consider the case where $v\in \mathcal{L}$. Note that using our assumption that $|\mathcal{R}\cap L'|\geq \frac{n}{4}$, we have that $\delta_{uv|L'}\leq 3/4$. We will use the monotonicity of $\rho^{+}(x)$, namely, that $\rho^{+}(x)$ is monotonically decreasing in the interval $(-\infty, 1/2]$ and monotonically increasing in the interval $[1/2, +\infty)$. Using monotonicity, it suffices to show that (i) $\rho^{+}\left(\deltaest+11\deltaest^2\right)-q\leq 0$ and that (ii) $\rho^{+}(\frac{3}4)-q\leq 0$. To prove (i), we have that:
    \[
        \rho^{+}(\deltaest +11\deltaest^2)-q=\frac{c}{n}\left(-13\deltaest^2+O\left(\deltaest^3\right)\right)
        \leq 0.
\]
    To prove (ii) we have:
    \[
        \rho^+(3/4)-q=\frac{c}{n}\left(1-\frac{3\eta}{2}\right)\left(-0.375+O\left(\deltaest\right)\right)
        \leq 0.
    \]
    We now show that for any $u\in S$ and $v\notin S$, we have that:
    \[
        \bar{A}_{u,v} \leq 10\frac{c}{n}\left(1-\frac{3\eta}{2}\right)\deltaest^2.
    \]
    We have already shown that $\bar{A}_{u,v}\leq 0$ if $\delta_{uv|L'}\geq \frac{1}{2}$. Observe that for $u\in S$ and $v\notin S$ we have that $\delta_{uv|L'}\geq \delta_{S|L'}$ and using again the monotonicity of $\rho^{+}(x)$ in the interval $(-\infty, 1/2]$ we get that:
    \begin{align*}
        \bar{A}_{u,v}&\leq\rho^{+}(\delta_{uv|L'})-q\\&
        \leq \rho^{+}(\delta_{S|L'})-\rho^{-}(\deltaest+4\deltaest^2)\\
        &=\frac{c}{n}\left(1-\frac{3\eta}{2}\right)\left(9\deltaest^2+O(\deltaest^3)\right)\\
        &\leq \frac{10c}{n}\left(1-\frac{3\eta}{2}\right)\deltaest^2.
    \end{align*}
\end{proof}

Lemma \ref{lem:entries_of_exp_for_vertices_in_S} shows that for $u,v\in S$, we have that $\bar{A}_{u,v}\geq 0$ (in fact this is true by some margin) it might, however, be the case that there are vertices outside the set $S$ that have large edge probabilities with vertices in $S$. In particular, there might exist $v\in L'\setminus S$ such that there exist $u\in S$ with $\bar{A}_{u,v}\geq 0$. We will use Lemma \ref{lem:edge_prob_u_in_S_v_not_inS} to show that there are not many such vertices. Consider the set:
\[
    S_e=\set{v\in L': \exists u
    \in S, \bar{A}_{u,v}\geq 0}.
\]
Notice that by Lemma \ref{lem:entries_of_exp_for_vertices_in_S}, $S\subseteq S_e$, and let $B=S_e\setminus S$. We will bound the size of the set $S_e$ and this will give us a bound on the size of the set $B$.

\begin{figure}[ht]
    \centering
    \begin{overpic}[width=0.3\textwidth]{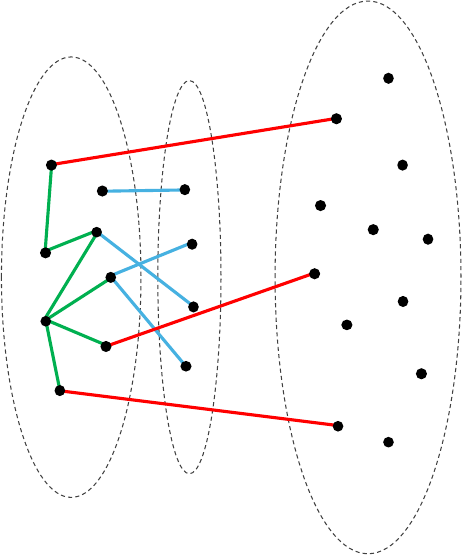}
        \put(18,90){$S$}
        \put(38,85){$B$}
        \put(80,90){$L' \setminus (S \cup B)$}
    \end{overpic}
    \caption{Edges between $S$ and $L'\setminus (S \cup B)$ (red edges) occur with probability at most $q$. Edges within $S$ (green edges) occur with probability at least $q+\frac{c}{n}\left(1-\frac{3\eta}{2}\right)\deltaest^2$.}
    \label{fig:sbm}
\end{figure}

\begin{lemma}
    \label{lem:size of sets S_e and B}
    Suppose that the assumptions of \Cref{thm:qed-guarantee} hold. Consider the set $S_e$ defined as above. For the size of the set $S_e$, we have that:
    \begin{align*}
        |S_e|\leq \left(\deltaest+11 \deltaest^2\right)|L'|
    \end{align*}
    Which in particular implies that:
    \begin{align*}
        |B|\leq 12 \deltaest^2|L'|
    \end{align*}
\end{lemma}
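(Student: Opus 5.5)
The plan is to show that every $v\in S_e$ is a leaf of the subtree of $T$ rooted at one fixed vertex $w$, and then to bound $|L_w\cap L'|$ by $\left(\deltaest+11\deltaest^2\right)|L'|$.

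We start from the second part of Lemma~\ref{lem:edge_prob_u_in_S_v_not_inS}. If $v\in L'\setminus S$ and some $u\in S$ satisfies $\bar{A}_{u,v}\geq 0$, then necessarily $\delta_{uv|L'}<\deltaest+11\deltaest^2$. The same bound holds trivially for $v\in S$: there $\delta_{uv|L'}\le\delta_{S|L'}\le\deltaest$ for any $u\in S$. So every $v\in S_e$ has a witness $u\in S$ with $|L_{uv}\cap L'|<(\deltaest+11\deltaest^2)|L'|$.

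Next we use the tree structure. By the assumptions of Theorem~\ref{thm:qed-guarantee}, $S=L_x\cap L'$ for some vertex $x$. For $u\in S$ and $v\notin L_x$, the vertex $\lca(u,v)$ is an ancestor of $x$ lying on the path from $x$ to the root. The ancestors of $x$ form a chain, and along it the quantity $|L_y\cap L'|$ is nondecreasing. Let $w$ be the highest ancestor $y$ of $x$ (including $x$ itself) with $|L_y\cap L'|<(\deltaest+11\deltaest^2)|L'|$; this set is nonempty because $x$ qualifies.

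Now take any $v\in S_e$ with its witness $u$. The vertex $y=\lca(u,v)$ is an ancestor of $x$ satisfying the strict inequality, so by monotonicity along the chain $y$ lies between $x$ and $w$, that is, $y$ is a descendant of $w$ or equal to it. Hence $v\in L_y\subseteq L_w$. We conclude $S_e\subseteq L_w\cap L'$, and therefore $|S_e|\le|L_w\cap L'|\le(\deltaest+11\deltaest^2)|L'|$.

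For the bound on $B$, we have $|B|=|S_e|-|S|$ because $S\subseteq S_e$. Using $|S|=\delta_{S|L'}|L'|$, this gives $|B|\le(\deltaest-\delta_{S|L'}+11\deltaest^2)|L'|$. Since $\deltaest-\delta_{S|L'}\le\poly(\delta)$ is chosen to be at most $\deltaest^2$ (it was assumed $\ll\delta_{S|L'}^3$), we get $|B|\le 12\deltaest^2|L'|$.

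The main point requiring care is the chain argument: the witnesses $u$ may differ across different $v$, but every $\lca(u,v)$ still lies on the single ancestor chain of $x$, and monotonicity along that chain is what caps all of them by $w$.
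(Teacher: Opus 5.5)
Your proof is correct and is essentially the paper's argument: both confine $S_e$ to a single subtree $L_w\cap L'$ with $|L_w\cap L'|\le(\deltaest+11\deltaest^2)|L'|$. The paper takes $w=\lca(\mathcal{S}_e)$ and invokes Lemma~\ref{lem:getting_the_set_lca_by_two}, whereas you take $w$ to be the topmost ancestor of $x$ below the threshold and use monotonicity along the ancestor chain. You also spell out the $|B|$ bound via $\deltaest-\delta_{S|L'}\le\deltaest^2$, which the paper leaves implicit. There is one small slip: for $v\in S$ the vertex $\lca(u,v)$ is a descendant of $x$, not an ancestor, but in that case $v\in L_x\subseteq L_w$ holds trivially.
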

\begin{proof}
    By Lemma \ref{lem:edge_prob_u_in_S_v_not_inS}, we have that if $b\in S_e$, then there exists a vertex $a\in S$ such that $\delta_{ab|L'}\leq \deltaest+11\deltaest^2$.
    Consider the set $\mathcal{S}_e=\set{b\in L': \exists a\in S \text{, } \delta_{ab|L'}\leq \deltaest+11\deltaest^2}$, which by Lemma \ref{lem:edge_prob_u_in_S_v_not_inS} is a superset of the set $S_e$. Observe that $\mathcal{S}_e$ can also be written as:
    \begin{align*}
        \mathcal{S}_e=S\cup \set{b\in L'\setminus S: \forall a \in S\text{, }\delta_{ab|L'}\leq \deltaest+11\deltaest^2},
    \end{align*}
    this follows from the fact that if $b\notin S$ then the $\lca$ of b with any vertex in $S$ is the same (recall that $S$ corresponds to the leaves of a subtree).
    We now consider the subtree rooted at the $\lca$ of the vertices in $\mathcal{S}_e$ and show that the number of leaves of this subtree is bounded by $\left(\deltaest+11\deltaest^2\right)|L'|$, this implies that the set $\mathcal{S}_e$ is bounded by this value, which in turn implies that the size of $S_e$ is also bounded by $\left(\deltaest+11\deltaest^2\right)|L'|$, since $|S_e|\leq |\mathcal{S}_e|$.
    Formally, let $w=\lca(\mathcal{S}_e)$ and let $a$ be an arbitrary vertex in $S$, we bound the size of the set ${L}_{w}\cap L'$, which is a superset of $\mathcal{S}_e$. By Lemma \ref{lem:getting_the_set_lca_by_two}, there exists a $b\in \mathcal{S}_e$ such that $\lca(a,b)=w$. If $b\in S$ then we have that:
    \[        |{L}_{w}\cap L'|=\delta_{ab|L'}|L'|
    \leq \delta_{S|L'}|L'| \leq (\deltaest+11\deltaest^2)|L'|.
    \]
    and the Lemma follows. If on the other hand, $b\in \mathcal{S}_e\setminus S$, then we have that $\delta_{ab|L'}\leq \deltaest+11\deltaest^2$ and we have that:
    \[
    |{L}_{w}\cap L'|=\delta_{ab|L'}|L'|
    \leq (\deltaest+11\deltaest^2)|L'|.
    \]
    We now observe that:
    \[
    |S_e| \leq |\mathcal{S}_e| \leq |{L}_{w}|,
    \]
    and the lemma follows.
\end{proof}
\subsection{\texorpdfstring{Geometry of the SDP Solution: $S$ is a tight Cluster}{Geometry of the SDP Solution: S is a Tight Cluster}}\label{sec:algo_geometry_S}

In this section, we analyze the geometry of the optimal solution $\set{\bar{u}}_{u\in L'}$ to the SDP described by matrix $A$.
In particular, our goal is to show that the vectors corresponding to set $S$ are well clustered together. To that end, we let $\alpha=\Avg_{\substack{u,v\in S\\u\neq v}}\inner{\bar{u},\bar{v}}$ and $\beta=\Avg_{\substack{u\in S, v\in L'\setminus S}}\inner{\bar{u},\bar{v}}$ our analysis shows that $\alpha$ is close to $1$ and $\beta$ is small, if the graph is $(c,\eta)$-good.
At a high level the following lemma shows that we can upper bound $1-\alpha$ by $O(\delta_{S
|L'})$ and $\beta$ by $O(\delta_{S|L'}^2)$ if we assume $c$ to be sufficiently large polynomial in $\frac{1}{\delta_{S|L'}(1-\frac{3\eta}{2})}$:
\begin{lemma}[Bound on $1-\alpha$ and $\beta$]
    \label{lem:bound_on_1-a}
    Suppose that the conditions of \Cref{thm:qed-guarantee} hold, then we have that for $\alpha$:
    \[
        1-\alpha \leq \frac{160K_G}{\delta^4_{S|L'}\gamma^4}\cdot \frac{1}{\sqrt{c'}\left(1-\frac{3\eta}{2}\right)}+482\delta_{S|L'}.
    \]
    Furthermore for $\beta$, we have that
    \[
        \beta\leq \frac{1}{1-\delta_{S|L'}}(\delta_{S|L'}(1-\alpha)+\delta_{S|L'}^2).
    \]
\end{lemma}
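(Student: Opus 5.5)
The argument compares the SDP optimum with the \emph{planted} solution and transfers the comparison to the expected matrix $\bar A$ via Lemma~\ref{lem:Actual_to_expected_bound}. For the planted solution, take vectors $\bar u=e_1$ for $u\in S$, and give every $v\in L'\setminus S$ its own orthogonal unit vector $e_v$. This is feasible. Constraints \eqref{eq:sdp1} and \eqref{eq:sdp2} are immediate. For \eqref{eq:sdp3}, a vertex of $S$ has row sum $|S|=\delta_{S|L'}|L'|\le\deltaest|L'|$, and any other vertex has row sum $1$. Let $\{\bar u\}$ be the SDP optimum. Using goodness of $G$ twice gives
\[
\sum_{u,v}\bar A_{u,v}\inner{\bar u,\bar v}\;\ge\;\sum_{u,v}\bar A_{u,v}\inner{\bar u^{pl},\bar v^{pl}}-20K_G\sqrt{c'}n .
\]
On the diagonal the two sides coincide, since both solutions have unit norms.

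\textbf{Upper bound on the optimum's $\bar A$-value.} Split the sum by the type of pair.
\begin{itemize}
\item Pairs inside $S$ contribute at most $\sum_{u\ne v\in S}\bar A_{u,v}\inner{\bar u,\bar v}$.
\item Pairs $(u,v)$ with $u\in S$ and $v\notin S_e$ have $\bar A_{u,v}\le 0$ (they lie outside $S_e$, by definition of $S_e$), so they contribute at most $0$.
\item Pairs with $u\in S$ and $v\in B$ contribute at most $10\frac cn(1-\frac{3\eta}{2})\deltaest^2$ times $\inner{\bar u,\bar v}$, by Lemma~\ref{lem:edge_prob_u_in_S_v_not_inS}. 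Summed over $v\in B$, this is at most $|S|\,|B|\cdot 10\frac cn(\cdot)\deltaest^2$, which is $O(\deltaest^5)\cdot c n$ by Lemma~\ref{lem:size of sets S_e and B}.
\item Pairs entirely outside $S$ are the delicate term. I will compare them with the planted value on the same pairs, where the off-diagonal part is $0$.
\end{itemize}

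\textbf{Controlling the pairs outside $S$.} I expect this to be the main obstacle: $\bar A$ can be positive there, and the spreading constraint is the only tool. For each $v\notin S$, one would like $\sum_{w\notin S}\bar A_{v,w}\inner{\bar v,\bar w}$ to be not much larger than the entrywise maximum times $\deltaest|L'|$. The obstruction is that $\max\bar A\approx\frac cn(1-\frac{3\eta}{2})\cdot 2\deltaest$ is first order, so this crude bound loses too much.

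Instead I would do the accounting row by row on $\bar A$. In a row $v$, entries $\bar A_{v,w}$ exceed $x\cdot\frac cn(1-\frac{3\eta}{2})$ only for $w$ with $\delta_{vw|L'}\lesssim \deltaest-x/2$, and there are at most $(\deltaest-x/2)|L'|$ such $w$. Combined with the spreading budget $\deltaest|L'|$, a layer-cake argument gives a per-row bound of order $\frac cn\deltaest^2|L'|$ on $\sum_w \bar A_{v,w}\inner{\bar v,\bar w}$. The planted solution for the same vertices pays $0$, so summing over $v$ gives a total slack of $O(\deltaest^2)c\gamma$ per vertex. This slack is weighed against the $S$-block gain, which is of order $|S|^2\cdot\frac cn\deltaest^2\approx \deltaest^4 cn$. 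I am not sure the accounting closes directly. The fallback is to bound the outside contribution by the corresponding quantity in which $S$ is also included, and to track the deficit through $1-\alpha$ linearly. This is how a term linear in $\delta_{S|L'}$, like $482\delta_{S|L'}$, would arise.

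\textbf{Conclusion for $\alpha$ and $\beta$.} Comparing the two solutions then yields
\[
\frac cn\Bigl(1-\frac{3\eta}{2}\Bigr)\deltaest^2\,|S|^2(1-\alpha)\;\le\;20K_G\sqrt{c'}n+O(\deltaest)\cdot\frac cn\Bigl(1-\frac{3\eta}{2}\Bigr)\deltaest^2|S|^2 .
\]
Here the left side comes from the lower bound in Lemma~\ref{lem:entries_of_exp_for_vertices_in_S} on the $S$-block entries. Dividing through, and using $c=\gamma^2c'/4$ and $|S|=\delta_{S|L'}\gamma n$, gives the stated form $\frac{160K_G}{\delta^4\gamma^4\sqrt{c'}(1-3\eta/2)}+482\delta_{S|L'}$.

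For $\beta$, apply \eqref{eq:sdp3} to each $u\in S$:
\[
1+(|S|-1)\alpha_u+\sum_{v\notin S}\inner{\bar u,\bar v}\le\deltaest|L'|.
\]
Averaging over $u\in S$ gives $|S|\alpha+(|L'|-|S|)\beta\le\deltaest|L'|-1+\alpha$. Dividing by $|L'|$ and using $\deltaest-\delta_{S|L'}\ll\delta_{S|L'}^3$ then gives $\beta(1-\delta_{S|L'})\le\delta_{S|L'}(1-\alpha)+\delta_{S|L'}^2$.
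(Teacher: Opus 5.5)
Your argument for $\beta$ is correct and is essentially the paper's: sum \eqref{eq:sdp3} over $u\in S$. You even keep the sharper term $\deltaest-\delta_{S|L'}$ in place of $\delta_{S|L'}^2$.

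The bound on $1-\alpha$, however, has a genuine gap, exactly at the step you flagged. Because you compare the optimum $\{\bar u\}$ with the \emph{fully planted} solution, goodness of $G$ only gives
\[
\sum_{u\neq v\in S}\bar A_{u,v}\bigl(1-\inner{\bar u,\bar v}\bigr)\le 20K_G\sqrt{c'}\,n+2\sum_{u\in S,\,v\in B}\bar A_{u,v}\inner{\bar u,\bar v}+\sum_{u\neq v\in L'\setminus S}\bar A_{u,v}\inner{\bar u,\bar v},
\]
and the last term cannot be controlled at the needed scale. The left side is at least $\frac{c}{n}(1-\frac{3\eta}{2})\deltaest^2(1-\alpha)|S|(|S|-1)\approx c\gamma^2(1-\frac{3\eta}{2})\deltaest^4 n(1-\alpha)$. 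So reaching $1-\alpha=O(\delta_{S|L'})$ needs the right side to be $O(c\gamma^2\deltaest^5 n)$. Your layer-cake estimate gives only $O(\frac{c}{n}\deltaest^2|L'|)$ per row, i.e.\ $O(c\gamma^2\deltaest^2 n)$ overall.

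This is not merely a lossy estimate. Take a balanced $T$ with $L'=L$. Collapsing each of the $\Theta(1/\deltaest)$ disjoint subtrees having between $\deltaest|L'|/2$ and $\deltaest|L'|$ leaves to its own orthogonal unit vector is feasible. It earns $\bar A$-value of order $c(1-\frac{3\eta}{2})\deltaest^2 n$, which dominates $20K_G\sqrt{c'}n$ under the assumption on $c'$. By contrast, the whole $S$-block is worth only $O(c\,\deltaest^3 n)$. So the optimum beats the planted solution on $\bar A$ by far more than anything inside $S$ can explain, the outside term really does absorb this slack, and the displayed inequality is vacuous. The ``fallback'' you sketch does not address this.

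The missing idea is to perturb the optimum \emph{only on $S$}, which is what the paper does. Replace $\bar u$ for $u\in S$ by a common unit vector $e$ orthogonal to all the $\bar v$, and keep $\bar v$ for $v\notin S$.
\begin{itemize}
\item This is still feasible: rows of $S$ now sum to $|S|\le\deltaest|L'|$, and rows outside $S$ merely lose the nonnegative terms $\inner{\bar v,\bar u}$ with $u\in S$.
\item The block $(L'\setminus S)\times(L'\setminus S)$ cancels identically in the difference, so no row-wise accounting outside $S$ is ever needed.
\item On $S\times(L'\setminus(S\cup B))$ the change is $-\bar A_{u,v}\inner{\bar u,\bar v}\ge 0$.
\item On $S\times B$ the loss is at most $20(1-\frac{3\eta}{2})\frac{c}{n}\deltaest^2|S||B|=O\bigl(\frac{c}{n}\deltaest^4\delta_{S|L'}|L'|^2\bigr)$, by the two lemmas you already cite.
\item On the $S$-block the gain is at least $\frac{c}{n}(1-\frac{3\eta}{2})\deltaest^2(1-\alpha)|S|(|S|-1)$.
\end{itemize}
Applying goodness twice and then optimality of $\{\bar u\}$ for $A$ yields the stated bound. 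The rest of your bookkeeping, including the substitution $c=\gamma^2c'/4$, carries over unchanged.
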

\begin{proof}
    We will apply the optimal solution for $A$ to the expected instance $\bar{A}$ and then  examine how much the objective value would increase if all vectors corresponding to set $S$ were equal and orthogonal to everything else. Formally we consider the following SDP solution. Let $e$ be a unit vector orthogonal to all vectors in $\set{\bar{u}}_{u\in L'}$, we consider the solution $\set{\bar{u}_e}_{u\in L'}$ defined as follows:
    \begin{align*}
        \bar{u}_e=
        \begin{cases}
            e \text{, if } u\in S\\
            \bar{u} \text{, if }u\in L'\setminus S,
        \end{cases}
    \end{align*}
    in other words, we move all vectors corresponding to vertices in $S$ to the same vector $e$, which is orthogonal to everything else.
    We compare the values of the two solutions. We have that:
    \begin{align}
        \sum_{u,v\in L'}\bar{A}_{u,v}\inner{\bar{u}_e,\bar{v}_e}-\sum_{u,v\in L'}\bar{A}_{u,v}\inner{\bar{u},\bar{v}}&=\sum_{u,v\in L'}\bar{A}_{u,v}\left(\inner{\bar{u}_e,\bar{v}_e}-\inner{\bar{u},\bar{v}}\right)\notag\\
        &=\sum_{u\in S, v\in S}\bar{A}_{u,v}\left(\inner{\bar{u}_e,\bar{v}_e}-\inner{\bar{u},\bar{v}}\right)\label{eq:first_term}\\&+2\sum_{u\in S, v\in B}\bar{A}_{u,v}\left(\inner{\bar{u}_e,\bar{v}_e}-\inner{\bar{u},\bar{v}}\right)\label{eq:second_term}\\
        &+2\sum_{\substack{u\in S,\\ v\in L'\setminus{B\cup S}}}\bar{A}_{u,v}\left(\inner{\bar{u}_e,\bar{v}_e}-\inner{\bar{u},\bar{v}}\right)\label{eq:third_term}\\
        &+\sum_{u,v\in L'\setminus S}\bar{A}_{u,v}\left(\inner{\bar{u}_e, \bar{v}_e}-\inner{\bar{u},\bar{v}}\right)\label{eq:fourth_term}.
    \end{align}
    We analyze each term in the sum separately, we first start with term (\ref{eq:fourth_term}). We have that for $u\in L'\setminus S$: $\bar{u}_e=\bar{u}$ and $\bar{v}_e=\bar{v}$, therefore:
    \begin{align*}
        \sum_{u,v\in L'\setminus S}\bar{A}_{u,v}\left(\inner{\bar{u}_e, \bar{v}_e}-\inner{\bar{u},\bar{v}}\right)= \sum_{u,v\in L'\setminus S}\bar{A}_{u,v}\left(\inner{\bar{u}, \bar{v}}-\inner{\bar{u},\bar{v}}\right)=0
    \end{align*}
    For the term (\ref{eq:third_term}), we show that it is non-negative. We have that for $u\in S, v\in L'\setminus(B\cup S) $, $\inner{\bar{u}_e, \bar{v}_e}=\inner{e, \bar{v}}=0$ and by constraint (\ref{eq:sdp1}) in the semidefinite program, we have that $\inner{\bar{u},\bar{v}}\geq 0$. Finally we have that for $v\in L'\setminus(B\cup S)$ it holds that $\bar{A}_{u,v}\leq 0$ so we can conclude that:
    \[
        \sum_{u\in S,v\in L'\setminus {(S\cup B)}}\bar{A}_{u,v}\left(\inner{\bar{u}_e, \bar{v}_e}-\inner{\bar{u},\bar{v}}\right)=-\sum_{u\in S,v\in L'\setminus {(S\cup B)}}\bar{A}_{u,v}\inner{\bar{u},\bar{v}}
        \geq 0.
    \]
    We now focus our attention to  term (\ref{eq:first_term}). We use Lemma \ref{lem:entries_of_exp_for_vertices_in_S} and the fact that by constraint \ref{eq:sdp2}, $\inner{\bar{u},\bar{v}}\leq 1$:
    \begin{align*}
        \sum_{u\in S, v\in S}\bar{A}_{u,v}\left(\inner{\bar{u}_e,\bar{v}_e}-\inner{\bar{u},\bar{v}}\right)&=\sum_{u\in S, v\in S}\bar{A}_{u,v}\left(1-\inner{\bar{u},\bar{v}}\right)\\
        &\geq \frac{{c}}{n}\deltaest^2\left(1-\frac{3\eta}{2}\right)\sum_{\substack{u\in S, v\in S\\u\neq v}}(1-\inner{\bar{u},\bar{v}})\\
        &=\frac{{c}}{n}\deltaest^2\left(1-\frac{3\eta}{2}\right)(1-\alpha)\cdot |S|\cdot (|S|-1)\\
        &\geq \frac{{c}}{2n}\deltaest^2\left(1-\frac{3\eta}{2}\right)\left(1-\alpha\right)\delta_{S|L'}^2|L'|^2,
    \end{align*}
    where for the last inequality we assumed that $|S|\geq 2$. Finally, we turn our attention to term \ref{eq:second_term}:
    \begin{align*}
        2\sum_{u\in S, v\in B}\bar{A}_{u,v}\left(\inner{\bar{u}_e, \bar{v}_e}-\inner{\bar{u}, \bar{v}}\right)&=-2\sum_{u\in S,v\in B}\bar{A}_{u,v}\inner{\bar{u},\bar{v}}\\
        &\geq -20\left(1-\frac{3\eta}{2}\right)\frac{c}{n}\deltaest^2\sum_{u\in S, v\in B}\inner{\bar{u},\bar{v}}\\
        &\geq -20\left(1-\frac{3\eta}{2}\right)\frac{c}{n}\deltaest^2|B|\cdot |S|,
    \end{align*}
    were in the first line we have used that for $u\in S$ and $v\in B$, $\inner{\bar{u}_e, \bar{v}_e}=\inner{e, \bar{v}}=0$ and from the first to the second line we have used Lemma \ref{lem:edge_prob_u_in_S_v_not_inS}. Finally, from the second to the third line we have used that $\inner{\bar{u},\bar{v}}\leq 1$.  We now use our bound for $|B|$ from Lemma \ref{lem:size of sets S_e and B} and the fact that $|S|=\delta_{S|L'}|L'|$:
    \[
        2\sum_{u\in S, v\in B}\bar{A}_{u,v}\left(\inner{\bar{u}_e,\bar{v}_e}-\inner{\bar{u},\bar{v}}\right)\geq -20\left(1-\frac{3\eta}{2}\right)\frac{c}{n}\deltaest^2|B|\cdot |S|
        \geq -240\frac{c}{n}\deltaest^4\left(1-\frac{3\eta}{2}\right)\delta_{S|L'}|L'|^2.
\]
    Putting everything together, we have that:
    \begin{align*}
        \sum_{u,v\in L'}\bar{A}_{u,v}\left(\inner{\bar{u}_e,\bar{v}_e}-\inner{\bar{u},\bar{v}}\right)&\geq \frac{{c}}{2n}\left(1-\frac{3\eta}{2}\right)\deltaest^2(1-\alpha)\delta_{S|L'}^2\cdot |L'|^2  \\
        &{\phantom{aaaaaaaaaaaaa}}-240 \cdot\frac{{c}}{n}\left(1-\frac{3\eta}{2}\right) \cdot \deltaest^4\delta_{S|L'}|L'|^2\\
        &\geq\frac{{c}}{n}\left(1-\frac{3\eta}{2}\right)\left(\frac{1}{2}\delta^4_{S|L'}(1-\alpha)-241\delta^5_{S|L'}\right)|L'|^2\\
        &={c}\left(1-\frac{3\eta}{2}\right)\left(\frac{1}{2}\delta_{S|L'}^4(1-\alpha)-241\delta_{S|L'}^5\right)\gamma^2n.
    \end{align*}
    Where we have assumed that $\deltaest$ is sufficiently close to $\delta_{S|L'}$.

    We will now use this to get a bound on $\alpha$. We want to look at the difference that we would get if we applied solution $\set{\bar{u}_e}_{u\in L'}$ instead of $\set{\bar{u}}_{u\in L'}$ to the realized instance given by matrix $A$. Recall that $\set{\bar{u}}_{u\in L'}$ is assumed to be the optimal solution for $A$. Here we use our assumption that $G$ is $(c',\eta)$-good. On the one hand this condition tells us that for every SDP solution the value of the solution on the expected instance and the actual instance are very close. On the other hand, we have that we can improve the objective value by putting vectors that correspond to vertices in $S$ together. This gives us a bound on how large $1-\alpha$ can be. We have that:
    \begin{align*}
        \sum_{u,v}A_{u,v}\inner{\bar{u}_e,\bar{v}_e}-\sum_{u,v}A_{u,v}\inner{\bar{u},\bar{v}}&\geq \sum_{u,v}\bar{A}_{u,v}\inner{\bar{u}_e,\bar{v}_e}-10K_G\sqrt{c'}\cdot n- \sum_{u,v}\bar{A}_{u,v}\inner{\bar{u},\bar{v}}-10K_{G}\sqrt{c'}\cdot n\\
        &=\sum_{u,v}\bar{A}_{u,v}\left(\inner{\bar{u}_e,\bar{v}_e}-\inner{\bar{u},\bar{v}}\right)-20K_{G}\sqrt{c'}\cdot n\\
        &\geq {{c}}\left(1-\frac{3\eta}{2}\right)\left(\frac{1}{2}\delta^4_{S}(1-\alpha)-241\delta^5_{S}\right)\gamma^2n- 20K_{G}\sqrt{c'}\cdot n.
    \end{align*}
    By virtue of the fact that $\set{\bar{u}}_{u\in L'}$ is an optimal solution for the instance given by matrix $A$ we have that:
    \begin{align*}
        \sum_{u,v}A_{u,v}\inner{\bar{u}_e,\bar{v}_e}-\sum_{u,v}A_{u,v}\inner{\bar{u},\bar{v}}\leq 0.
    \end{align*}
    Solving for $1-\alpha$, we get that:
    \begin{align*}
        1-\alpha\leq \frac{40 K_{G}}{\delta^4_{S}\gamma^2}\cdot \frac{\sqrt{c'}}{{c\left(1-\frac{3\eta}{2}\right)}} +482\delta_{S|L'}.
    \end{align*}
    We now replace ${c}=\frac{\gamma^2 c'}{4}$ and get that:
    \begin{align*}
        1-\alpha \leq \frac{160K_G}{\delta^4_{S}\gamma^4}\cdot \frac{1}{\sqrt{c'}\left(1-\frac{3\eta}{2}\right)}+482\delta_{S|L'}.
    \end{align*}
    We now turn our attention to bounding $\beta$. We consider \Cref{eq:sdp3}
    of the semidefinite program for every $u\in S$. Summing over all vertices $u\in S$ we have that:
    $$
    \sum_{u\in S}\sum_{v\in L'}\inner{\bar{u},\bar{v}}\leq |S|\cdot \deltaest \gamma n  =\delta_{S|L'}\deltaest (\gamma n)^2.
    $$
    We now use that:
    \begin{align*}
        \sum_{u\in S}\sum_{v\in L'\setminus S}\inner{\bar{u},\bar{v}}&=\sum_{u\in S}\sum_{v\in L'}\inner{\bar{u},\bar{v}}-\sum_{u\in S}\sum_{v\in S}\inner{\bar{u},\bar{v}}\\
        &\leq \delta_{S|L'}\deltaest (\gamma n)^2-\alpha|S|^2\\
        &=\left(\delta_{S|L'}\deltaest-\delta_{S|L'}^2\alpha\right)(\gamma n)^2.
    \end{align*}
   We will now use that $\delta_{S|L'}$ and $\deltaest$ are sufficiently close (for example $\deltaest\leq \delta_{S|L'}+\delta_{S|L'}^2$). In particular, we have that:
    \begin{align*}
        \sum_{u\in S}\sum_{v\in L'\setminus S}\inner{\bar{u},\bar{v}}&\leq \left(\delta_{S|L'}\deltaest-\delta^2_{S}\alpha\right)(\gamma n)^2\\
        &\leq \left(\delta_{S|L'}(\delta_{S|L'}+\delta_{S|L'}^2)-\delta_{S|L'}^2\alpha\right)(\gamma n)^2\\
        &=\delta_{S|L'}^2(1-\alpha)(\gamma n)^2+\delta_{S|L'}^3(\gamma n)^2.
    \end{align*}
    We now use the definition of $\beta$, to get that:
    \begin{align*}
        \beta&=\Avg_{u\in S, v\in L'\setminus S}\left(\inner{\bar{u},\bar{v}}\right)\\
        &=\frac{1}{|S|}\frac{1}{|L'\setminus S|}\sum_{u\in S}\sum_{v\in L'\setminus S}\inner{\bar{u},\bar{v}}\\
        &\leq \frac{1}{1-\delta_{S|L'}}(\delta_{S|L'}(1-\alpha)+\delta_{S|L'}^2).
    \end{align*}
\end{proof}
\subsection{\texorpdfstring{Proof of \Cref{thm:qed-guarantee}: Recovering $S$ up to $O(\delta^2n)$ error}{Recovering S up to O(delta²|L'|) Error}}
\label{sec:algo_recover_S}
In this section, we prove \Cref{thm:qed-guarantee}. We start with the following lemma.
\begin{lemma}
    \label{lem:bound_on_average_square_dist}
    Under the assumptions of \Cref{thm:qed-guarantee}:
    \[
        \Avg_{u,v\in S}\left(\|\bar{u}-\bar{v}\|^2\right)\leq 2(1-\alpha)\leq 966\cdot \delta_{S|L'},
    \]
    and for $\beta$:
    \[
        \beta\leq 485\delta_{S|L'}^2
    \]
\end{lemma}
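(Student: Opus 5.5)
The plan is to derive this lemma directly from Lemma~\ref{lem:bound_on_1-a}, using the unit-norm constraint (\ref{eq:sdp2}) and the quantitative lower bound on $c'$ in the hypotheses of \Cref{thm:qed-guarantee}.

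The first inequality comes from an identity. For $u,v\in S$ with $u\neq v$, constraint (\ref{eq:sdp2}) gives $\|\bar{u}-\bar{v}\|^2 = 2-2\inner{\bar{u},\bar{v}}$. Averaging over distinct pairs yields exactly $2(1-\alpha)$. If the average is taken over all ordered pairs including $u=v$, the diagonal terms contribute $0$, so the average is $(1-1/|S|)\cdot 2(1-\alpha)\le 2(1-\alpha)$. Either reading gives $\Avg_{u,v\in S}\|\bar{u}-\bar{v}\|^2\le 2(1-\alpha)$.

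Next I plug into Lemma~\ref{lem:bound_on_1-a}. It gives $1-\alpha\le \frac{160K_G}{\delta_{S|L'}^4\gamma^4}\cdot\frac{1}{\sqrt{c'}(1-\frac{3\eta}{2})}+482\delta_{S|L'}$. The goal is to show the first term is at most $\delta_{S|L'}$, so that $1-\alpha\le 483\delta_{S|L'}$ and $2(1-\alpha)\le 966\delta_{S|L'}$. By hypothesis, $\sqrt{c'}(1-\frac{3\eta}{2})\ge 40960K_G/\delta^5$. Since $\gamma\ge 1/4$, we have $\gamma^{-4}\le 256$, and $160\cdot 256=40960$. The first term is therefore at most $\delta^5/\delta_{S|L'}^4$. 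Using $\delta\le\delta_{S|L'}$ (valid because $|S|\ge\delta n$ and $|L'|\le n$), this is at most $\delta_{S|L'}$, as required. The main point needing care is this bookkeeping: matching $\delta$ against $\delta_{S|L'}$ and $\gamma$ against $1/4$ so that the constant works out exactly.

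For $\beta$, I use the second part of Lemma~\ref{lem:bound_on_1-a}: $\beta\le\frac{1}{1-\delta_{S|L'}}\bigl(\delta_{S|L'}(1-\alpha)+\delta_{S|L'}^2\bigr)$. Substituting $1-\alpha\le 483\delta_{S|L'}$ gives $\beta\le\frac{484\,\delta_{S|L'}^2}{1-\delta_{S|L'}}$. Since $\delta_{S|L'}\le 8\delta$ is a sufficiently small constant (below $1/485$, say), we have $\frac{484}{1-\delta_{S|L'}}\le 485$, and hence $\beta\le 485\,\delta_{S|L'}^2$.
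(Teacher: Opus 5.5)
Your proposal is correct and follows the paper's proof essentially step for step. You use the identity $\|\bar u-\bar v\|^2=2-2\inner{\bar u,\bar v}$ from the unit-norm constraint, then substitute the lower bound on $c'$, together with $\gamma\ge 1/4$ and $\delta\le\delta_{S|L'}$, into Lemma~\ref{lem:bound_on_1-a} to get $1-\alpha\le 483\,\delta_{S|L'}$, and finish the $\beta$ bound with $\frac{1}{1-\delta_{S|L'}}\le\frac{485}{484}$; your treatment of the diagonal terms and of the factor $2$ in front of the first term is, if anything, slightly more careful than the paper's write-up.
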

\begin{proof}
    We have that:
    \[
        \Avg_{u,v\in S}\left(\|\bar{u}-\bar{v}\|^2\right)=\Avg_{u,v\in S}\left(\|\bar{u}\|^2+\|v\|^2-2\inner{\bar{u},\bar{v}}\right)
        =\Avg_{u,v\in S}\left(2-2\inner{\bar{u},\bar{v}}\right)
        \leq2(1-\alpha).
    \]
    We now use our bound on $1-\alpha$ from Lemma \ref{lem:bound_on_1-a} and our assumption on $c'$ (recall that $\delta_{S|L'}\ge \delta$ and that $\gamma\ge 1/4$) to get the first inequality. In particular, we have that $c'\geq  \left(\frac{40960\cdot K_G}{\left(1-\frac{3\eta}{2}\right)\delta^5}\right)^2$, replacing this into the bound that we have on $1-\alpha$ gives us that:
    \begin{align*}
        2(1-\alpha)&\leq 2\frac{1}{\sqrt{c'}}\cdot \frac{160K_G}{\delta_{S|L'}^4\gamma^4(1-\frac{3\eta}{2})}+964\delta_{S|L'}\\
        &\leq2\frac{(1-\frac{3\eta}{2})\delta^5}{40960K_G}\cdot\frac{40960K_G}{\delta_{S|L'}^4(1-\frac{3\eta}{2})}+964\delta_{S|L'}\\
        &\leq\frac{\delta^5}{\delta_{S|L'}^4}+964\delta_{S|L'}\\
        &\leq 966\cdot \delta_{S|L'},
    \end{align*}
    where in the first line we have used on $1-\alpha$ from Lemma \ref{lem:bound_on_1-a}, in the second line we have used our assumption on $c'$ and that $\gamma\geq \frac{1}{4}$ and in the fourth line we have used that \[\delta_{S|L'}=\frac{|S|}{|L'|}\geq \frac{|S|}{n}\geq  \delta.
    \]
    For the second inequality we will use the bound on $\beta$ in Lemma \ref{lem:bound_on_1-a}, and that $\delta_{S|L'}$ is a small enough constant. In particular, assuming that $\frac{1}{1-\delta_{S|L'}}\leq \frac{485}{484}$, we have that:
    \begin{align*}
        \beta&\leq \frac{1}{1-\delta_{S|L'}}(\delta_{S|L'}(1-\alpha)+\delta_{S|L'}^2)\\
        &\leq \frac{485}{484}484\delta_{S|L'}^2\\
        &\leq 485\delta_{S|L'}^2.
    \end{align*}
\end{proof}

We now prove \Cref{thm:qed-guarantee}.

\begin{proof}[Proof of \Cref{thm:qed-guarantee}]
    We first define $\mu$ to be the center of points in $S$:
    \[        \mu=\frac{1}{|S|}\sum_{u\in S}\bar{u}.
    \]
    We will show that if we sample a point that is sufficiently close to $\mu$ and whose average squared distance to points in $L'\setminus S$ is large, then $\hat{S}$ satisfies the desired properties. We bound the average squared distance between points in $S$ and $\mu$.
    By Lemma \ref{lem:Avg dist between points to avg dist between points and center} and Lemma \ref{lem:bound_on_average_square_dist}, we have that:
    $$
    \Avg_{u\in S}(\|\bar{u}-\mu\|^2)= \frac{1}{2}\Avg_{u\in S, v\in S}\left(\|\bar{u}-\bar{v}\|^2\right)     \leq 483\delta_{S|L'}.
    $$
    We now define the set $\Gamma$ of good points that we want to sample $u$ from ($u$ is the random vertex sampled in Line 2 of Algorithm \ref{alg:recover_S}), to that end we let
    \[
        S_{B_1}=\set{u\in S: \|\bar{u}-\mu\|^2\geq 4\cdot 483 \delta_{S|L'}}.
    \]
    and also
    \[
        S_{B_2}=\set{u\in S: \Avg_{v\in L'\setminus S}\|\bar{u}-\bar{v}\|^2\leq 2-8\cdot 485\delta_{S|L'}^2}.
    \]
    We let $\Gamma=S\setminus(S_{B_1}\cup S_{B_2})$. We argue about the size of set $\Gamma$ by arguing about the size of sets $S_{B_1}$ and $S_{B_2}$. We let $U$ be a uniformly selected vertex in $S$ and $W$ be a uniformly selected vertex in $L'\setminus S$. By Markov's inequality, we have that:
    $$
        \frac{|S_{B_1}|}{|S|}=\prob{\|\bar{U}-\mu\|^2\geq 4\cdot 483\delta_{S|L'}}
        \leq \frac{1}{4},
    $$
    meaning that $|S_{B_1}|\leq \frac{1}{4}|S|$. For set $S_{B_2}$ we have that:
    \begin{align*}
        \frac{|S_{B_2}|}{|S|}&=\prob{\Avg_{v\in L'\setminus S}\left(\|\bar{U}-\bar{v}\|^2\right)\leq 2-8\cdot 485\delta_{S|L'}^2}\\
        &=\prob{\Exp{\|\bar{U}-\bar{W}\|^2|U}\leq 2-8\cdot 485\delta_{S|L'}^2}\\
        &=\prob{2-\Exp{\|\bar{U}-\bar{W}\|^2|U}\geq 8\cdot 485\delta_{S|L'}^2}.
    \end{align*}

    We now use that for any $u,v\in L'$, $\inner{\bar{u},\bar{v}}\geq 0$, therefore $\|\bar{u}-\bar{v}\|^2\leq 2$. This implies that the random variable $2-\Exp{\|\bar{U}-\bar{W}\|^2|U}$ is non-negative allowing us to apply Markov's inequality. We get that:
    \begin{align*}
        \frac{|S_{B_2}|}{|S|}&\leq \frac{\Exp{2-\Exp{\|\bar{U}-\bar{W}\|^2|U}}}{8\cdot 485\delta_{S|L'}^2}\\
        &\leq \frac{2-\Exp{\|\bar{U}-\bar{W}\|^2}}{8\cdot 485\delta_{S|L'}^2}\\
        &=\frac{2\Avg_{u\in S, v\in L'\setminus S}\left(\inner{\bar{u},\bar{v}}\right)}{8\cdot 485\delta_{S|L'}^2}\\
        &=\frac{2\beta}{8\cdot 485\delta_{S|L'}^2}\\
        &\leq \frac{1}{4},
    \end{align*}
    where we have used our bound on $\beta$ from Lemma \ref{lem:bound_on_average_square_dist}. This in turn means that $|S_{B_2}|\leq \frac{1}{4}|S|$. Putting everything together, we get that $|\Gamma|=|S\setminus \left(S_{B_1}\cup S_{B_2}\right)|\geq |S|-\frac{1}{2}|S|=\frac{1}{2}|S|$. This gives us that the probability that $u$ is in $\Gamma$ is at least $\frac{1}{2}\delta_{S|L'}$. We now show that if $u\in \Gamma$ then for $\hat{S}$ it holds that $|S\triangle \hat{S}|=O(\delta^2n)$, notice that $u\in \Gamma$ implies that
    \begin{align}
        \label{eq:dist_p_mu}
        \|\bar{u}-\mu\|^2\leq 4\cdot 483\delta_{S|L'},
    \end{align}
    and also
    \begin{align}
        \Avg_{v\in L'\setminus S}\left(2-\|\bar{v}-\bar{u}\|^2\right)\leq8\cdot 485\delta_{S|L'}^2.
    \end{align}
    We first focus on the size of $S\setminus \hat{S}$. By Markov's inequality and relaxed triangle inequality:
    \begin{align*}
        \frac{|S\setminus \hat{S}|}{|S|}&=\prob{\|\bar{U}-\bar{u}\|^2\geq 1}\\
        &\leq {\Exp{\|\bar{U}-\bar{u}\|^2}}{}\\
        &\leq \Exp{2\|\bar{U}-\mu\|^2+2\|\bar{u}-\mu\|^2}\\
        &\leq {10}\cdot 483\delta_{S|L'}.
    \end{align*}
    Meaning that $|S\setminus \hat{S}|\leq4830 \delta_{S|L'}^2\gamma n=O(\delta^2n )$ (since $\delta_{S|L'}\leq 8\delta $). We now turn our attention to bounding the size of the set $\hat{S}\setminus S$. We have that
    $$
    \frac{|\hat{S}\setminus S|}{|L'\setminus S|}=\prob{\|\bar{W}-\bar{u}\|^2\leq 1}
    = \prob{2-\|\bar{W}-\bar{u}\|^2\geq 1}.
    $$
    We can now use that $2-\|\bar{W}-\bar{u}\|^2$ is a non-negative random variable to apply Markov's inequality:
    \begin{align*}
        \frac{|\hat{S}\setminus S|}{|L'\setminus S|}&=\prob{2-\|\bar{W}-\bar{u}\|^2\geq 1}\\
        &\leq {\Exp{2-\|\bar{W}-\bar{u}\|^2}}\\
        &={\Avg_{v\in L'\setminus S}\left(2-\|\bar{v}-\bar{u}\|^2\right)}\\
        &\leq {8}\cdot 485\delta_{S|L'}^2.
    \end{align*}
    For the size of the set $\hat{S}\setminus S$, we get that $|\hat{S}\setminus S|\leq {8}\cdot 485\delta_{S|L'}^2|L'|={8}\cdot 485\delta_{S|L'}^2\gamma n=O(\delta^2 n)$.
\end{proof}

\section{Natarajan Dimension of Quartets}\label{sec:nat-dimension}

Here we derive the bound on the Natarajan dimension $N_{\mathrm{quart}}(n)$ for learning phylogenetic trees from quartets. Recall, that the Natarajan dimension governs learnability for multiclass settings, similar to how VC dimension governs learnability in binary classification (see also the Multiclass Fundamental Theorem 29.3, Chapter 29 in~\citep{shalev2014understanding}). Since our Theorem~\ref{thm:natarajan_quart} here guarantees that $N_{\mathrm{quart}}(n) \;\le\; 10n$, this means that a linear in $n$ sample of quartets suffices for generalization, and so our previous tree reconstruction algorithm finds a tree $\widehat T$ with low generalization error.

We now formally define the Natarajan dimension for learning quartets. In our definition we naturally view $\binom{L}{4}$ (subsets of $L$ of size $4$), as the domain set and quartet constraints as the label set. The set of binary trees that have $L$ as leaves is viewed as the hypothesis class: each tree $T$ is interpreted as a function that on input a set of $4$ elements $\set{a,b,c,d}$ outputs the unique quartet constraint (among three possible), $T(a,b,c,d)$, that the tree satisfies. Our definitions are similar to the definitions of \cite*{avdiukhin2023tree} but applied to quartet reconstruction. We start by defining Natarajan Shattering:
\begin{definition}[Natarajan Shattering]
    \label{def: Natarajan shattering}
    Let $S=\set{\set{a_1,b_1,c_1,d_1}, \ldots, \set{a_m,b_m,c_m,d_m}}\subseteq \binom{L}{4}$ be a collection of $4$-element sets of size $m$. We say that $S$ is Natarajan shattered if for every $\set{a_i,b_i, c_i,d_i}\in S$ there exist two distinct constraints $f_0(a_i,b_i,c_i,d_i)$ and $f_1(a_i,b_i,c_i,d_i)$ on $\set{a_i,b_i,c_i,d_i}$, such that for every function $g:[m]\rightarrow \set{0,1}$, the instance 
    $$\mathcal{I}_{g}=\set{f_{g(i)}(a_i,b_i,c_i,d_i): i\in [m]}$$
    is satisfiable.
\end{definition}

Using this definition, we can define the Natarajan dimension:
\begin{definition}[Natarajan dimension]
    The Natarajan dimension of quartets, denoted $N_{\mathrm{quart}}(n)$, is defined as the size of the largest cardinality of a set which can be Natarajan shattered.
\end{definition}

We prove that the Natarajan dimension of quartets, is linear i.e. $N_{\mathrm{quart}}(n)=\Theta(n)$. Our argument uses the following theorem of \cite{warren1968lower} which was also used by \cite*{alon2023optimal} to give bounds on the Natarajan dimension for contrastive learning.
\begin{theorem}[\cite{warren1968lower}]
    Let $m\geq l\geq 2$ be integers, and let $p_1,\ldots,p_m$ be real polynomials on $l$ variables, each of degree $\leq k$. Let
    \begin{align*}
        U(p_1,\ldots, p_m)=\set{\vec{x}\in \mathbb{R}^l\text{ }|\text{ } p_i(\vec{x})\neq 0\text{ } \forall  i\in [m] }.
    \end{align*}
    be the set of points in $x\in \mathbb{R}^{l}$ which are non-zero in all polynomials. Then the number of connected components in $U(p_1,\ldots, p_m)$ is at most $(4ekm/l)^l$.
\end{theorem}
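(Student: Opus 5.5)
This statement is Warren's classical bound, which we would only cite. If we were to prove it, we would follow Warren's original strategy. First, reduce counting connected components of $U(p_1,\ldots,p_m)$ to counting components of a few real algebraic varieties in general position. Then bound each variety by a Milnor--Thom/Oleinik--Petrovsky-type estimate. Finally, sum over subsets of at most $l$ polynomials, so that the count scales like $\binom{m}{\le l}(O(k))^l \le (4ekm/l)^l$ rather than the cruder $(km)^l$ obtained from the single product polynomial $\prod_i p_i$.

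\textbf{Step 1 (compactification and perturbation).} Fix finitely many components $C_1,\ldots,C_N$ of $U$ and a point $x_j\in C_j$ in each. Choose $\varepsilon>0$ so small that every $x_j$ satisfies $|p_i(x_j)|>\varepsilon$ for all $i$ and $\|x_j\|<1/\varepsilon$. Then each $C_j$ contains a distinct component of the compact set
\[
K_\varepsilon=\{x:\ p_i(x)^2\ge \varepsilon^2\ \forall i,\ \|x\|^2\le 1/\varepsilon^2\}.
\]
By Sard's theorem we may also choose $\varepsilon$ generically. Then for every subset $I$ of the $m+1$ defining functions, the level sets $\{p_i=\pm\varepsilon,\ i\in I\}$ and the sphere meet transversally. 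In particular, they are empty whenever $|I|>l$.

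\textbf{Step 2 (each component gets a distinguished point).} Take a generic linear functional $\lambda$. On each compact component of $K_\varepsilon$, $\lambda$ attains its minimum at some point $y$. This $y$ lies in the interior, or on a face cut out by an active set $I$ of constraints with $|I|\le l$. By transversality, $y$ is a nondegenerate critical point of $\lambda$ restricted to the smooth variety $V_{I,s}=\{p_i=s_i\varepsilon,\ i\in I\}$ (plus possibly the sphere), where $s\in\{\pm1\}^I$.

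So $N$ is at most the total number of critical points of $\lambda$ on all such varieties. These are solutions of a polynomial system of degree at most $k$ (Lagrange conditions of degree at most $k$). The Milnor/Bézout-type bound gives at most $(2k)^l$, up to a constant factor absorbed later, for each pair $(I,s)$.

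\textbf{Step 3 (summation).} Summing over $|I|=j\le l$ and the $2^j$ sign choices gives
\[
N\le \sum_{j\le l}\binom{m}{j}2^j(2k)^l\lesssim (em/l)^l(4k)^l,
\]
using $\sum_{j\le l}\binom{m}{j}\le (em/l)^l$ for $m\ge l$. This yields $(4ekm/l)^l$ once the constants are tracked carefully.

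\textbf{Main obstacle.} The delicate part is Step 2. We must make the genericity rigorous, via Sard and a choice of generic $\lambda$, so that critical points are isolated and nondegenerate. We must also ensure that distinct components receive distinct counted points, and that the Bézout count applies to the Lagrange system without spurious positive-dimensional solution sets. This is where we would invoke the Milnor--Thom bound on the number of nondegenerate solutions of a polynomial system, rather than argue from scratch.
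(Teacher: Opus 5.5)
The paper gives no proof of this statement. It is quoted from Warren, and only its sign-pattern corollary is used, so citing it, as you say you would, is exactly what the paper does. The architecture of your sketch is sound: compactify, minimize a generic linear functional on each compact component, use transversality to land on a nondegenerate critical point of some face, then sum over at most $\sum_{j\le l}2^j\binom{m}{j}$ faces. However, there is a genuine gap in the one step that produces the bound, namely the per-face count in Step~2.

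The critical points of $\lambda$ on $V_{I,s}$ are the solutions of the Lagrange system in $(x,\mu)\in\mathbb{R}^{l}\times\mathbb{R}^{|I|}$: $p_i(x)=s_i\varepsilon$ for $i\in I$, and $\lambda=\sum_{i\in I}\mu_i\nabla p_i(x)$. These are $l+|I|$ equations, each of total degree at most $k$. The Milnor--Thom/B\'ezout count of nondegenerate solutions you invoke therefore gives $k^{l+|I|}$, not $(2k)^l$ (eliminating $\mu$ via minors is worse). Summing over faces then only yields a bound of order $(2ek^2m/l)^l$. That would still give $N_{\mathrm{quart}}(n)=O(n)$ in the paper, but it is not the stated theorem.

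You also cannot substitute the Milnor--Thom bound on the number of \emph{components} of $V_{I,s}$, which is the natural source of a $(2k)^l$. One component of a face can carry the minimizers of several components of $K_\varepsilon$. For example, take $p_1=x_2-(x_1^2-1)^2$, $p_2=x_1$ and $\lambda\approx x_2$: the two components above the curve $p_1=\varepsilon$ have their minimizers near $(\pm1,\varepsilon)$, on the same connected curve.

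What is missing is a critical-point count that does not grow with $|I|$. One option is multihomogeneous B\'ezout in $\mathbb{P}^l\times\mathbb{P}^{|I|}$. With bidegree $(k,0)$ for the constraints and $(k-1,1)$ for the stationarity equations, it bounds the nondegenerate solutions by $\binom{l}{|I|}k^{|I|}(k-1)^{l-|I|}\le(2k)^l$. The same computation gives at most $(2k)^l$ when the sphere is also active.

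Two smaller repairs are also needed.

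In Step~1, Sard's theorem cannot make a single common level generic, because the points $(\pm\varepsilon,\dots,\pm\varepsilon)$ form a null set in $\mathbb{R}^{|I|}$. For instance, for $p_1=p_2=p_3=x_1$ in $\mathbb{R}^2$, the set $\{p_1=p_2=p_3=\varepsilon\}$ is a line for every $\varepsilon$. It is neither transversal nor empty, although $|I|=3>l$. Use independent levels $\varepsilon_1,\dots,\varepsilon_m$ and then a radius $R$. Apply Sard to each map $x\mapsto(p_i(x))_{i\in I}$, and to $\|x\|^2$ on each resulting manifold.

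In Step~3 you drop the faces on which the sphere is active. They form a second family of at most $\sum_{j<l}2^j\binom{m}{j}$ faces, so the total is at most $2(2k)^l\sum_{j\le l}2^j\binom{m}{j}$. Your estimate $\sum_{j\le l}2^j\binom{m}{j}\le 2^l(em/l)^l$ then leaves a factor $2$ above the claimed bound, and an exact constant cannot absorb it. Use instead $2\sum_{j\le l}2^j\binom{m}{j}\le(2em/l)^l$ for $m\ge l\ge 2$. This follows from $2^j\binom{m}{j}\le(2m)^j/j!$, the fact that these terms at least double from $j$ to $j+1$ for $j<l\le m$, and Stirling; the case $l=2$ is checked directly.
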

In particular we use the following corollary.
\begin{corollary}
    \label{cor: Warren}
    Let $m\geq l\geq 2$ be integers, and let $p_1,\ldots,p_m$ be real polynomials on $l$ variables, each of degree $\leq k$. Then we have that
    \begin{align*}
        \left|\set{\xi \in \set{-1,1}^{m}: \exists \vec{x} \text{ such that for all }i \text{, } \sign(p_i(x))=\xi(i)}\right|\leq \left(\frac{4ekm}{l}\right)^l.
    \end{align*}
\end{corollary}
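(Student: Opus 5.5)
The plan is to deduce the corollary directly from Warren's theorem. I will build a surjection from the connected components of $U(p_1,\ldots,p_m)$ onto the set of realizable sign vectors $\xi\in\{-1,1\}^m$. Then the number of such vectors is at most the number of components, which Warren bounds by $(4ekm/l)^l$.

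First, any witness $\vec{x}$ for a pattern $\xi\in\{-1,1\}^m$ must lie in $U(p_1,\ldots,p_m)$. Indeed, $\sign(p_i(\vec{x}))=\xi(i)\neq 0$ for every $i$, so no $p_i$ vanishes at $\vec{x}$ (using $\sign(0)=0$).

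Second, I claim the sign vector $\vec{x}\mapsto(\sign(p_1(\vec{x})),\ldots,\sign(p_m(\vec{x})))$ is constant on each connected component $K$ of $U$. Each $p_i$ is continuous, so $p_i(K)$ is a connected subset of $\mathbb{R}$, i.e.\ an interval. Since $p_i$ has no zero on $K$, this interval does not contain $0$, so it lies entirely in $(0,\infty)$ or entirely in $(-\infty,0)$. Hence each component $K$ determines a well-defined sign vector $\xi_K\in\{-1,1\}^m$.

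Third, every realizable $\xi$ arises this way. Take a witness $\vec{x}$; it lies in $U$ and hence in some component $K$, and then $\xi=\xi_K$. So $K\mapsto \xi_K$ is a surjection from the set of components of $U$ onto the set of realizable sign vectors. Therefore the cardinality in the corollary is at most the number of connected components of $U$, which is at most $(4ekm/l)^l$ by Warren's theorem under the same hypotheses $m\ge l\ge 2$ and $\deg p_i\le k$.

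There is no real obstacle. The only point needing care is the second step, that a sign cannot change within a component. This is exactly the intermediate value argument applied to the connected image $p_i(K)$. One could worry about path-connectedness versus connectedness, but the image of a connected set under a continuous map is connected, so the argument works with plain connectedness.
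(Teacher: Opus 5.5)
Your proposal is correct and is the standard deduction the paper relies on when it states this corollary as an immediate consequence of Warren's theorem without proof. Every strict sign pattern is realized only at points of $U(p_1,\ldots,p_m)$, and the sign vector is constant on each connected component, so the number of realizable patterns is at most the number of components; you have written out the step the paper leaves implicit.
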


To apply this corollary, we begin with the following embedding lemma.
\begin{lemma}
    \label{lem:embedding_Tree_to_line}
    Let $T$ be a binary tree and $0<\gamma<\frac{1}{2}$. There exists a mapping $\phi$ of the leaves of $T$ to the interval $[0,1]$ with the following property. Let $v_i$ and $v_j$ be two leaves of the tree and let $\lambda$ be the depth of their lowest common ancestor. For the points $x_i=\phi(v_i)$ and $x_j=\phi(v_j)$ we have that
    \begin{align*}
        (1-2\gamma)\gamma^{\lambda}\leq|x_i-x_j|\leq \gamma^{\lambda}.
    \end{align*}
\end{lemma}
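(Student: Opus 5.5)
We construct $\phi$ recursively, in the style of a Cantor-set embedding. Root $T$ arbitrarily, so that every internal vertex has exactly two children, a left one and a right one, and let the root have depth $0$. To each vertex $w$ at depth $d$ we assign a closed interval $I_w$ of length exactly $\gamma^{d}$, starting with $I_{\mathrm{root}}=[0,1]$. If $I_w=[s,s+\gamma^d]$, its two children receive the extreme subintervals of length $\gamma^{d+1}$:
\[
I_{w_{\mathrm{left}}}=[s,\,s+\gamma^{d+1}], \qquad I_{w_{\mathrm{right}}}=[s+\gamma^d-\gamma^{d+1},\,s+\gamma^d].
\]
Each leaf $v$ is mapped to the left endpoint of its interval, $\phi(v)=\min I_v$, or to any point of $I_v$. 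Since $\gamma<\tfrac12$, the two child intervals are disjoint, and they are separated by a gap of length $\gamma^d-2\gamma^{d+1}=(1-2\gamma)\gamma^d$. By induction, all intervals are nested and lie inside $[0,1]$, so $\phi$ maps into $[0,1]$.

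Now take leaves $v_i,v_j$ whose lowest common ancestor $w$ has depth $\lambda$. Both points $x_i$ and $x_j$ lie in $I_w$, which has length $\gamma^\lambda$, so $|x_i-x_j|\le\gamma^\lambda$. For the lower bound, $v_i$ and $v_j$ descend through different children of $w$. Hence one of $x_i,x_j$ lies in $I_{w_{\mathrm{left}}}$ and the other in $I_{w_{\mathrm{right}}}$, and these two intervals are separated by the gap computed above. This gives $|x_i-x_j|\ge(1-2\gamma)\gamma^\lambda$.

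The only point needing care is that nestedness holds all the way down. Each child interval is contained in its parent's interval, so a leaf's point lies in $I_u$ for every ancestor $u$ of the leaf. Both bounds use exactly this fact, applied at $w$ and at its two children.
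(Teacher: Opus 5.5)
Your proof is correct and matches the paper's construction: split each interval into the two outer $\gamma$-fractions (assigned to the two subtrees) and a middle gap of relative length $1-2\gamma$, then read off both bounds at the lowest common ancestor. The only differences are immaterial: you send a leaf to the left endpoint of its interval rather than the midpoint, and you should measure depth from the given root rather than re-rooting, since $\lambda$ is defined with respect to it.
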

\begin{proof}
    We describe a recursive algorithm that provides the mapping $\phi$. Given an interval $\mathcal{I}=[a,b]$ and tree $T$, the algorithm divides between cases. If $T$ has only one node then this node is mapped to $\frac{a+b}{2}$. Otherwise, $\mathcal{I}$ is divided into the intervals $[a,a+\gamma(b-a)]$, $[a+\gamma(b-a), b-\gamma(b-a)]$ and $[b-\gamma(b-a),b]$, we recursively map all the leaves of the left subtree to the interval $[a,a+\gamma(b-a)]$ and all leaves of right subtree to interval $[b-\gamma(b-a), b]$. The mapping $\phi$ is the output of the algorithm on interval $[0,1]$ and tree $T$. It is easy to show inductively that every subtree whose root is at depth $\lambda$ is mapped to an interval of length $\gamma^{\lambda}$. Now consider points $x_i$ and $x_j$. The leaves of the tree rooted at their lowest common ancestor are mapped to an interval $\mathcal{I}=[a,b]$ of length $\gamma^{\lambda}$, this in particular gives us the upper bound on $|x_i-x_j|$. For the lower bound, observe that one of $x_i,x_j$ will be in the interval $\mathcal{I}_1=[a,a+\gamma(b-a)]$ while the other will be in the interval $\mathcal{I}_2=[b,b-\gamma(b-a)]$. The distance between any point in $\mathcal{I}_1$ and any point in $\mathcal{I}_2$ is at least $(1-2\gamma)(b-a)=(1-2\gamma)\gamma^{\lambda}$, giving us the lower bound on $|x_i-x_j|$.
\end{proof}
Given the embedding guaranteed by \Cref{lem:embedding_Tree_to_line}, we will describe a polynomial that is positive if a quartet constraint is satisfied and negative otherwise. We prove the following lemma
\begin{lemma}\label{lem:polynomial_quart}
    Let $T$ be a tree with leaves $v_1,\ldots, v_n$ and $\phi(v_i)=x_i$ where $\phi$ is the mapping of \Cref{lem:embedding_Tree_to_line} and let $\gamma > 0$ be small enough so that $(1-2\gamma)^8>\max(\gamma^2,\gamma^4,\gamma^8)$ (e.g., $\gamma \le \frac{1}{16}$). Consider leaves $v_i,v_j,v_k,v_l$ and the polynomial:
    \begin{align*}
        p(x_i,x_j,x_k,x_l)=\left((x_i-x_k)(x_i-x_l)(x_j-x_k)(x_j-x_l)\right)^2-\left((x_i-x_j)(x_k-x_l)\right)^4.
    \end{align*}
    We have the following.
    \begin{enumerate}
        \item If the quartet constraint $v_iv_j|v_kv_l$ is satisfied in $T$ then $p(x_i,x_j,x_k,x_l)>0$.
        \item Otherwise, if $v_iv_j|v_kv_l$ is not satisfied in $T$ then $p(x_i,x_j,x_k,x_l)<0$.
    \end{enumerate}
\end{lemma}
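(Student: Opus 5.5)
The plan is a case analysis over the rooted topology of $\{v_i,v_j,v_k,v_l\}$ in $T$. In each case I will plug the two-sided estimates of Lemma~\ref{lem:embedding_Tree_to_line} into the two terms of $p$ and compare powers of $\gamma$. The basic tool is that for any pair $a,b$ among the four leaves,
\[
(1-2\gamma)\gamma^{\lambda_{ab}}\le|x_a-x_b|\le\gamma^{\lambda_{ab}},
\]
where $\lambda_{ab}=\depth(\lca(v_a,v_b))$.

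I will write $P_1=\bigl((x_i-x_k)(x_i-x_l)(x_j-x_k)(x_j-x_l)\bigr)^2$ and $P_2=\bigl((x_i-x_j)(x_k-x_l)\bigr)^4$, so $p=P_1-P_2$. Both are products of eight distance factors. Hence each is determined up to a multiplicative factor in $[(1-2\gamma)^8,1]$ by the exponent $E_1=2(\lambda_{ik}+\lambda_{il}+\lambda_{jk}+\lambda_{jl})$, respectively $E_2=4(\lambda_{ij}+\lambda_{kl})$. It therefore suffices to show the following:
\begin{itemize}
\item if $v_iv_j|v_kv_l$ holds, then $E_2\ge E_1+s$ with $\gamma^s<(1-2\gamma)^8$, which gives $P_1\ge(1-2\gamma)^8\gamma^{E_1}>\gamma^{E_2}\ge P_2$;
\item otherwise $E_1\ge E_2+s$ with the same kind of $s$, which gives $P_2>P_1$.
\end{itemize}
The exponents $s$ that occur will be $2$, $4$ or $8$, which is exactly why the hypothesis $(1-2\gamma)^8>\max(\gamma^2,\gamma^4,\gamma^8)$ is imposed. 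Strictness also needs $x_a\ne x_b$ for distinct leaves, which holds because the lower bound is positive.

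The rooted topology of four leaves in a binary tree is one of two shapes. It is either balanced, $((ab)(cd))$, with all four cross pairs sharing one lca depth $\mu$ and $\lambda_{ab},\lambda_{cd}\ge\mu+1$. Or it is a caterpillar $(((ab)c)d)$ with depths $\alpha>\beta>\theta$, where $\lambda_{ab}=\alpha$, $\lambda_{ac}=\lambda_{bc}=\beta$, and $\lambda_{ad}=\lambda_{bd}=\lambda_{cd}=\theta$. The unrooted quartet is then $ab|cd$ in both shapes.

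Case 1 is when the quartet $v_iv_j|v_kv_l$ is satisfied. In the balanced shape I get $E_1=8\mu$ and $E_2\ge 8\mu+8$, so $s=8$. In the caterpillar shape, where $\{i,j\}=\{a,b\}$ and $\{k,l\}=\{c,d\}$, I get $E_1=4\beta+4\theta$ and $E_2=4\alpha+4\theta\ge 4\beta+4+4\theta$, so $s=4$.

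Case 2 is when the quartet fails. Since $p$ is invariant under swapping $k\leftrightarrow l$ (and also $i\leftrightarrow j$, and the pair swap $(i,j)\leftrightarrow(k,l)$), I may assume the true quartet is $v_iv_k|v_jv_l$. Then the pairs $ij$ and $kl$ appearing in $P_2$ are both "cross" pairs of the true split. In the balanced shape $((ik)(jl))$, $E_2=8\mu$ and $E_1=2(\lambda_{ik}+\lambda_{jl}+2\mu)\ge 8\mu+4$, so $s=4$.

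The caterpillar shape of Case 2 is the step that needs the most care, since several labelings are possible. I will enumerate which of $i,k$ (and of $j,l$) play the roles $a,b,c,d$. Using the symmetries, the orientation reduces to $\{i,k\}=\{a,b\}$ with $j\in\{c,d\}$, and I will check both placements of $j$; the remaining placement follows from these by symmetry.
\begin{itemize}
\item If $j=c$ and $l=d$: $E_2=4(\beta+\theta)$ and $E_1=2(\alpha+\theta+\beta+\theta)\ge 4\beta+4\theta+2$, so $s=2$.
\item If $j=d$ and $l=c$: $E_2=4(\theta+\beta)$ and $E_1=2(\alpha+\beta+\theta+\theta)$, again giving $s\ge 2$.
\end{itemize}
This $s=2$ case is the tight one that forces the condition involving $\gamma^2$. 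With all cases checked, the lemma follows.
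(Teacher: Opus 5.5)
Your proposal is correct and follows essentially the same route as the paper: a balanced/caterpillar case split for the satisfied and unsatisfied configurations, with the two-sided bounds of Lemma~\ref{lem:embedding_Tree_to_line} plugged into both terms of $p$, giving exponent gaps $8,4,4,2$ that are matched against the hypothesis on $\gamma$. Your explicit $E_1,E_2$ bookkeeping and symmetry reduction are somewhat more careful than the paper's appeal to ``by symmetry''; note that your two caterpillar subcases in Case~2 are equivalent under the symmetries of $p$, which is why the paper checks only one of them.
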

\begin{proof}
    We first prove item 1 of the lemma. By symmetry, it suffices to only consider the case where (a) $v_i$ and $v_j$ are in one subtree of $T$ and $v_k$, $v_l$ are on another disjoint subtree (as in \Cref{fig:case_a}) and the case (b) where $\lca(v_j,v_l)$ is an ancestor of $\lca(v_k,v_l)$ and $\lca(v_i,v_j)$ is an ancestor of $\lca(v_j,v_k,v_l)$ (as in \Cref{fig:case_b}).

    For case (a), let $\depth(\lca(v_i,v_j))=\lambda_1$, $\depth(\lca(v_k,v_l))=\lambda_2$ and  
    \[\depth(\lca(v_i,v_j,v_k,v_l))=\lambda_3.
    \]
    Note that $\lambda_1\geq \lambda_3+1$ and also $\lambda_2\geq \lambda_3+1$. Using the lower bound of \Cref{lem:embedding_Tree_to_line} we have that
    \begin{align*}
        \left((x_i-x_k)(x_i-x_l)(x_j-x_k)(x_j-x_l)\right)^2\geq (1-2\gamma)^8\gamma^{8\lambda_3}.
    \end{align*}
    On the other hand, using the upper bound of \Cref{lem:embedding_Tree_to_line} we get that
    $$
    \left((x_i-x_j)(x_k-x_l)\right)^{4}\leq \gamma^{4\lambda_1+4\lambda_2}
    \leq \gamma^{8\lambda_3+8}.
    $$
    Where we have used that both $\lambda_1$ and $\lambda_2$ are greater than $\lambda_3$. We get the following bound on $p(x_i,x_j,x_k,x_l)$:
    \[
        p(x_i,x_j,x_k,x_l)\geq (1-2\gamma)^{8}\gamma^{8\lambda_3}-\gamma^{8\lambda_3+8}
        \geq \gamma^{8\lambda_3}\left((1-2\gamma)^8-\gamma^8\right)
        >0,
    \]
    where we have used that $(1-2\gamma)^{8}>\gamma^8$.
    
    For case (b), let $\depth(\lca(v_k,v_l))=\lambda_1$, $\depth(\lca(v_j,v_k))=\lambda_2$ and $\depth(\lca(v_i,v_j))=\lambda_3$. Using \Cref{lem:embedding_Tree_to_line}, we have that
    \begin{align*}
        \left((x_i-x_k)(x_i-x_l)(x_j-x_k)(x_j-x_l)\right)^2\geq(1-2\gamma)^8\gamma^{4\lambda_3}\gamma^{4\lambda_2}.
    \end{align*}
    While on the other hand:
    $$
    \left((x_i-x_j)(x_k-x_l)\right)^{4}\leq \gamma^{4\lambda_3+4\lambda_1}     \leq \gamma^{4\lambda_3+4\lambda_2+4}.
    $$
    Where we have used that $\lambda_1\geq\lambda_2+1$. We can now bound the polynomial $p(x_i,x_j,x_k,x_l)$:
    \begin{align*}
        p(x_i,x_j,x_k,x_l)\geq \gamma^{4\lambda_2+4\lambda_3}\left((1-2\gamma)^8-\gamma^4\right)>0
    \end{align*}
    where we have used that $(1-2\gamma)^8>\gamma^4$.

    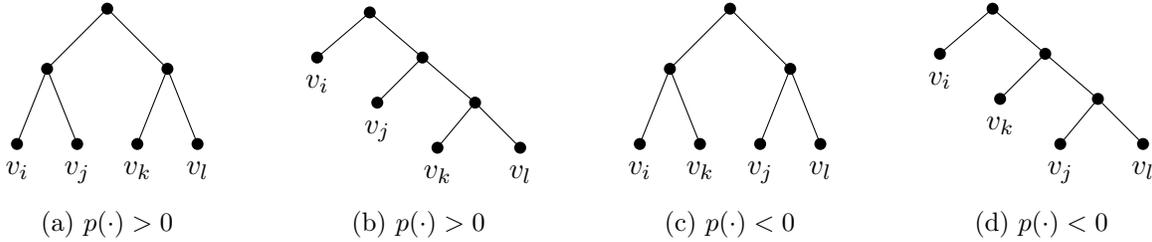
\begin{figure}[ht]\label{fig:quart}
        \centering
        {
            \tikzset{
                dot/.style={circle,fill,inner sep=1.6pt}
            }

            \begin{subfigure}[t]{0.24\textwidth}
                \centering
                \begin{tikzpicture}
                    \coordinate (r)  at (0,1.8);
                    \coordinate (a)  at (-0.8,1.0);
                    \coordinate (b)  at ( 0.8,1.0);
                    \coordinate (vi) at (-1.2,0);
                    \coordinate (vj) at (-0.4,0);
                    \coordinate (vk) at ( 0.4,0);
                    \coordinate (vl) at ( 1.2,0);

                    \draw (r) -- (a) -- (vi);
                    \draw (a) -- (vj);
                    \draw (r) -- (b) -- (vk);
                    \draw (b) -- (vl);

                    \foreach \p in {r,a,b,vi,vj,vk,vl}{\node[dot] at (\p) {};}

                    \node[below=3pt] at (vi) {$v_i$};
                    \node[below=3pt] at (vj) {$v_j$};
                    \node[below=3pt] at (vk) {$v_k$};
                    \node[below=3pt] at (vl) {$v_l$};
                \end{tikzpicture}
                \caption{$p(\cdot) > 0$\label{fig:case_a}}
            \end{subfigure}\hfill
            \begin{subfigure}[t]{0.24\textwidth}
                \centering
                \begin{tikzpicture}
                    \coordinate (r)  at (-0.1,1.8);
                    \coordinate (a)  at ( 0.6,1.2);
                    \coordinate (b)  at ( 1.3,0.6);

                    \coordinate (vi) at (-0.8,1.2);
                    \coordinate (vj) at ( 0,0.6);
                    \coordinate (vk) at ( 0.8,0);
                    \coordinate (vl) at ( 1.9,0);

                    \draw (r) -- (a) -- (b);
                    \draw (r) -- (vi);
                    \draw (a) -- (vj);
                    \draw (b) -- (vk);
                    \draw (b) -- (vl);

                    \foreach \p in {r,a,b,vi,vj,vk,vl}{\node[dot] at (\p) {};}

                    \node[below=3pt] at (vi) {$v_i$};
                    \node[below=3pt] at (vj) {$v_j$};
                    \node[below=3pt] at (vk) {$v_k$};
                    \node[below=3pt] at (vl) {$v_l$};
                \end{tikzpicture}
                \caption{$p(\cdot) > 0$\label{fig:case_b}}
            \end{subfigure}\hfill
            \begin{subfigure}[t]{0.24\textwidth}
                \centering
                \begin{tikzpicture}
                    \coordinate (r)  at (0,1.8);
                    \coordinate (a)  at (-0.8,1.0);
                    \coordinate (b)  at ( 0.8,1.0);
                    \coordinate (vi) at (-1.2,0);
                    \coordinate (vk) at (-0.4,0);
                    \coordinate (vj) at ( 0.4,0);
                    \coordinate (vl) at ( 1.2,0);

                    \draw (r) -- (a) -- (vi);
                    \draw (a) -- (vk);
                    \draw (r) -- (b) -- (vj);
                    \draw (b) -- (vl);

                    \foreach \p in {r,a,b,vi,vj,vk,vl}{\node[dot] at (\p) {};}

                    \node[below=3pt] at (vi) {$v_i$};
                    \node[below=3pt] at (vk) {$v_k$};
                    \node[below=3pt] at (vj) {$v_j$};
                    \node[below=3pt] at (vl) {$v_l$};
                \end{tikzpicture}
                \caption{$p(\cdot) < 0$\label{fig:unsat case_a}}
            \end{subfigure}\hfill
            \begin{subfigure}[t]{0.24\textwidth}
                \centering
                \begin{tikzpicture}
                    \coordinate (r)  at (-0.1,1.8);
                    \coordinate (a)  at ( 0.6,1.2);
                    \coordinate (b)  at ( 1.3,0.6);

                    \coordinate (vi) at (-0.8,1.2);
                    \coordinate (vk) at ( 0,0.6);
                    \coordinate (vj) at ( 0.8,0);
                    \coordinate (vl) at ( 1.9,0);

                    \draw (r) -- (a) -- (b);
                    \draw (r) -- (vi);
                    \draw (a) -- (vk);
                    \draw (b) -- (vj);
                    \draw (b) -- (vl);

                    \foreach \p in {r,a,b,vi,vj,vk,vl}{\node[dot] at (\p) {};}

                    \node[below=3pt] at (vi) {$v_i$};
                    \node[below=3pt] at (vk) {$v_k$};
                    \node[below=3pt] at (vj) {$v_j$};
                    \node[below=3pt] at (vl) {$v_l$};
                \end{tikzpicture}
                \caption{$p(\cdot) < 0$\label{fig:unsat case_b}}
            \end{subfigure}
        }
        \caption{The configuration of $v_i,v_j,v_k,v_l$ determines the sign of $p(x_i,x_j,x_k,x_l)$.}
    \end{figure}

    We now prove item 2 of the lemma. Again by symmetry, it suffices to consider the following two cases. The first case where (c) $v_i$ and $v_k$ are in one subtree of $T$ and $v_j,v_l$ are on another disjoint tree (as in \Cref{fig:unsat case_a}) and the second case (d) where $\lca(v_k,v_l)$ is an ancestor of $\lca(v_j,v_l)$ and $\lca(v_i,v_k)$ is an ancestor of $\lca(v_j,v_k,v_l)$ (as in \Cref{fig:unsat case_b}).
    We first consider case (c) and let $\depth(\lca(v_i,v_k))=\lambda_1$, $\depth(\lca(v_j,v_l))=\lambda_2$ and $\depth(\lca(v_i,v_j))=\lambda_3$. We again use the bounds of \Cref{lem:embedding_Tree_to_line} to get that
    \begin{align*}
        \left((x_i-x_k)(x_i-x_l)(x_j-x_k)(x_j-x_l)\right)^2\leq \gamma^{2\lambda_1+2\lambda_2+4\lambda_3},
    \end{align*}
    while on the other hand, we have that
    \begin{align*}
        \left((x_i-x_j)(x_k-x_l)\right)^4\geq \left(1-2\gamma\right)^{8}\gamma^{8\lambda_3}.
    \end{align*}
    We get the following upper bound on the value of $p(x_i,x_j,x_k,x_l)$:
    \[
        p(x_i,x_j,x_k,x_l)\leq \gamma^{2\lambda_1+2\lambda_2+4\lambda_3}-(1-2\gamma)^{8}\gamma^{8\lambda_3}
        \leq \gamma^{8\lambda_3}(\gamma^{4}-(1-2\gamma)^{8})
        <0.
    \]
    where we have used that $\lambda_1,\lambda_2\geq \lambda_3+1$ and $\gamma^{4}<(1-2\gamma)^{8}$. Finally for case (d), let 
    \[
\depth(\lca(v_j, v_l)) = \lambda_1, \quad
\depth(\lca(v_j, v_k)) = \lambda_2, \quad
\depth(\lca(v_i, v_k)) = \lambda_3.
\]
We use \Cref{lem:embedding_Tree_to_line} for the final time to get that
    \begin{align*}
        \left((x_i-x_k)(x_i-x_l)(x_j-x_k)(x_j-x_l)\right)^2\leq\gamma^{2\lambda_1+2\lambda_2+4\lambda_3}
    \end{align*}
    while on the other hand:
    \begin{align*}
        \left((x_i-x_j)(x_k-x_l)\right)^4\geq(1-2\gamma)^{8}\gamma^{4\lambda_3+4\lambda_2}.
    \end{align*}
    Meaning that
    \begin{align*}
        p(x_i,x_j,x_k,x_l)&\leq \gamma^{2\lambda_1+2\lambda_2+4\lambda_3}-(1-2\gamma)^{8}\gamma^{4\lambda_3+4\lambda_2}\\
        &=\gamma^{2\lambda_2+4\lambda_3}(\gamma^{2\lambda_1}-(1-2\gamma)^{8}\gamma^{2\lambda_2})\\
        &\leq\gamma^{4\lambda_2+4\lambda_3}(\gamma^2-(1-2\gamma)^{8})\\
        &<0.
    \end{align*}
    Where we have used that $\lambda_1\geq \lambda_2+1$ and that $\gamma^2<(1-2\gamma)^{8}$.
\end{proof}

\begin{theorem}\label{thm:natarajan_quart}
    For the Natarajan dimension of quartets we have
    \[
        n-3 \;\le\; N_{\mathrm{quart}}(n) \;\le\; 10n.
    \]
\end{theorem}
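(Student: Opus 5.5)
For the upper bound I will reduce shattering to counting sign patterns of polynomials, via \Cref{lem:embedding_Tree_to_line}, \Cref{lem:polynomial_quart} and \Cref{cor: Warren}. For the lower bound I will give an explicit shattered family of $n-3$ quadruples.

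\textbf{Upper bound.} Suppose $S=\{\{a_i,b_i,c_i,d_i\}\}_{i\in[m]}$ is Natarajan shattered, with witness constraints $f_0,f_1$. For each $i$, write $f_0(a_i,b_i,c_i,d_i)$ as $v_iv_j|v_kv_l$ and let $p_i$ be the degree-$8$ polynomial of \Cref{lem:polynomial_quart} for this split. It is a polynomial in the $n$ variables $x_1,\dots,x_n$ that depends on only four of them.

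Fix $\gamma=1/16$. For each $g:[m]\to\{0,1\}$ there is a tree $T_g$ satisfying $\mathcal{I}_g$. Embedding its leaves via \Cref{lem:embedding_Tree_to_line} gives a point $\vec x^{(g)}\in\mathbb{R}^n$. By \Cref{lem:polynomial_quart}, $\sign(p_i(\vec x^{(g)}))=+1$ exactly when $T_g$ satisfies $f_0$ on quadruple $i$. Since $f_0\neq f_1$ and $T_g$ satisfies exactly one split per quadruple, this happens iff $g(i)=0$. So the $2^m$ functions $g$ give $2^m$ distinct, nonzero sign vectors of $(p_1,\dots,p_m)$.

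By \Cref{cor: Warren} with $l=n$ and $k=8$, assuming $m\ge n$ (otherwise we are already done), we get $2^m\le(32em/n)^n$. Set $t=m/n$. This reads $2^t\le 32et\approx 87t$, which fails for $t>10$ (at $t=10$, $2^{10}=1024>870$). Hence $m\le 10n$.

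The point to check carefully is that $p_i$ has strict sign under every tree, not only under $T_g$. \Cref{lem:polynomial_quart} gives exactly this: $p_i>0$ when the split holds and $p_i<0$ otherwise, for any tree embedded with $\gamma\le 1/16$. So the sign pattern is a faithful encoding of $g$.

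\textbf{Lower bound.} I use a caterpillar-style family. Take leaves $v_1,\dots,v_n$ and the $n-3$ quadruples $q_i=\{v_1,v_2,v_{i+2},v_{i+3}\}$ for $i=1,\dots,n-3$. The two candidate splits for $q_i$ are $f_0=v_1v_{i+2}|v_2v_{i+3}$ and $f_1=v_1v_{i+3}|v_2v_{i+2}$.

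Given $g$, build a caterpillar whose spine runs from $v_1$ to $v_2$. Insert $v_3,v_4,\dots,v_n$ in order: $v_{i+3}$ is placed just beyond $v_{i+2}$ toward $v_2$ when $g(i)=0$, and just beyond $v_{i+2}$ toward $v_1$ when $g(i)=1$. More precisely, maintain a spine path from $v_1$ to $v_2$ with the already-inserted leaves hanging off it. Attach $v_{i+3}$ at a new vertex on the spine edge immediately adjacent to $v_{i+2}$'s attachment point, on the $v_2$ side or the $v_1$ side according to $g(i)$.

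Then $q_i$ induces $v_1v_{i+2}|v_2v_{i+3}$ or $v_1v_{i+3}|v_2v_{i+2}$ according to the relative order of $v_{i+2}$ and $v_{i+3}$ along the spine. Later insertions only subdivide spine edges, so they do not change this order, and the quartet induced by $q_i$ depends only on that order. Each $q_i$ therefore receives $f_{g(i)}$, so the family is shattered and $N_{\mathrm{quart}}(n)\ge n-3$.

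The main obstacle I expect is the invariance claim for later insertions, which must be verified. The arithmetic of the Warren bound is routine.
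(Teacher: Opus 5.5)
Your upper bound is the paper's argument: the same line embedding with $\gamma\le 1/16$, the same degree-$8$ polynomials with $f_0$ as the positive side, and Warren's corollary with $l=n$, $k=8$, giving $2^t\le 32et$. You are slightly more careful than the paper in handling the case $m<n$, which the corollary's hypothesis $m\ge l$ requires. In the other direction, to pass from the check at $t=10$ to all $t>10$ you should state, as the paper does, that $t\mapsto 2^t/t$ is increasing for $t\ge 2$.

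Your lower bound is correct but uses a different shattered family. The paper fixes three anchors $a_1,a_2,a_3$ and takes $q_b=\{a_1,a_2,a_3,b\}$ with labels $a_1b|a_2a_3$ versus $a_2b|a_1a_3$. Each quadruple contains a single free leaf, and its label depends only on whether $b$ hangs on the $a_1$-branch or the $a_2$-branch of the anchor star, so the $n-3$ choices are decoupled by construction.

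Your caterpillar family uses two anchors $v_1,v_2$, and consecutive quadruples $q_i,q_{i+1}$ share the leaf $v_{i+3}$. The choices are therefore chained, and you need the sequential insertion together with the fact that subdividing spine edges preserves the linear order of attachment points. Your argument for this is right: in a caterpillar with ends $v_1,v_2$, the quartet on $\{v_1,v_2,x,y\}$ is $v_1x|yv_2$ exactly when $x$ attaches closer to $v_1$ than $y$. Inserting $v_{i+3}$ on the spine edge adjacent to $v_{i+2}$ then fixes that pair's order permanently, and you get the same count $n-3$.

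The paper's family makes independence immediate. Yours encodes the labels as relative orders of consecutive leaves along a path. That costs a little more bookkeeping but is equally elementary.
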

\begin{proof}
    \emph{Upper bound.} Fix any $0<\gamma\le \frac{1}{16}$ and the embedding $\phi$ from \Cref{lem:embedding_Tree_to_line}.
    Let $S=\{\{a_i,b_i,c_i,d_i\}\}_{i=1}^m$ be Natarajan-shattered. For each $i$, let $f_0(a_i,b_i,c_i,d_i)$ and $f_1(a_i,b_i,c_i,d_i)$ be the two distinct quartet labels promised by shattering, and define the degree-$8$ polynomial
    \[
        p_i(x_{a_i},x_{b_i},x_{c_i},x_{d_i})
    \]
    as in \cref{lem:polynomial_quart} but with the “positive” side chosen to be $f_0(a_i,b_i,c_i,d_i)$.
    By that lemma, for \emph{any} function $g:[m]\to\{0,1\}$, the tree $T_g$ that satisfies $\mathcal{I}_g=\{f_{g(i)}(a_i,b_i,c_i,d_i)\}_{i=1}^m$ yields reals $x_1,\dots,x_n$ (via $\phi$) with
    \[        \sign\!\left(p_i(x_{a_i},x_{b_i},x_{c_i},x_{d_i})\right)=(-1)^{g(i)}\qquad\forall i\in[m].
    \]
    Thus the number of sign patterns realized by $\{p_i\}_{i=1}^m$ in $\mathbb{R}^n$ is at least $2^m$.
    Applying \cref{cor: Warren} with $l=n$, $m$ polynomials, and degree $k=8$ gives that the number of realizable sign patterns is at most $\left(\tfrac{4ekm}{l}\right)^l=\left(\tfrac{32 e\, m}{n}\right)^n$. Hence
    \[
        2^m \le \left(\tfrac{32 e\, m}{n}\right)^n.
    \]
    Writing $m=tn$ this is $\left(2^t\right)^n \le \left(32 e\, t\right)^n$, i.e., $2^t \le 32 e\, t$.
    The function $t\mapsto \frac{2^t}{t}$ is increasing for $t\ge 2$, and $2^{10}/10>32e$.
    Therefore $2^t>32 e\, t$ for all $t\ge 10$, implying $m\le 10n$.

    \emph{Lower bound.} For rooted $k$-tuple relations, checking their satisfiability is polynomial-time solvable, contrary to our quartet setting where checking satisfiability is $\mathsf{NP}$-complete. The Natarajan dimension for rooted $k$-tuple relations satisfies $N_\mathrm{rooted}(H_k)\ge n-k+1$, as proved in~\cite{avdiukhin2023tree}. For unrooted quartets the result doesn't follow directly, since two distinct rooted labels may collapse to the same unrooted label after forgetting the root. To keep the two labels distinct, we adapt their construction as follows. Fix three leaves $A=\{a_1,a_2,a_3\}\subset V$ and let
    $B=V\setminus A$ so $|B|=n-3$.
    For each $b\in B$ define the 4-set
    $q_b=A\cup\{b\}$ and choose two quartets $f_0(q_b)=a_1 b|a_2 a_3$ and $f_1(q_b)=a_2 b|a_1 a_3$,
    which are distinct in the unrooted sense.
    Given any labeling $g:B\to\{0,1\}$,
    construct a binary tree by first forming a 3-leaf base on
    $\{a_1,a_2,a_3\}$ and, for each $b\in B$,
    attaching $b$ as a sibling of $a_1$ if $g(b)=0$
    or of $a_2$ if $g(b)=1$.
    Each operation determines the quartet on $q_b$
    without affecting any other $q_{b'}$.
    Hence every labeling $\{f_{g(b)}(q_b):b\in B\}$ is realizable,
    and these $n-3$ quartets are Natarajan-shattered, proving $N_{\mathrm{quart}}(n)\ge n-3$.
\end{proof}

\section{Acknowledgment}
The authors thank the anonymous STOC reviewers for valuable feedback, and the organizers and participants of Dagstuhl Seminar 25211, ``The Constraint Satisfaction Problem: Complexity and Approximability,'' for helpful discussions on phylogenetic CSPs.
\newpage
\bibliography{references.bib}
\newpage

\appendix

\section{Inequalities}\label{app:inequalities}

\begin{theorem}[Bernstein inequality, Theorem 2.7 in \cite{mcdiarmid1998concentration}]
    \label{thm:Bernstein}
    Let the random variables $X_1, \ldots, X_n$ be independent, with $X_k - \mathbb{E}(X_k) \leq b$ for each $k$. Let $S_n = \sum X_k$, and let $S_n$ have expected value $\mu$ and variance $V$ (the sum of the variances of the $X_k$). Then for any $t \geq 0$,
    \begin{align*}
        \Pr(S_n - \mu \geq t)
        &\leq e^{-\left( \frac{V}{b^2} \right)\left( (1+\epsilon) \ln(1+\epsilon) -\epsilon \right)} \quad \text{where } \epsilon = bt/V \\
        &\leq e^{-\frac{t^2}{2V \left( 1 + \frac{bt}{3V} \right)}}.
    \end{align*}
\end{theorem}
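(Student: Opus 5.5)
We would prove this by the standard exponential-moment (Cramér--Chernoff) method. First center the variables: put $Y_k = X_k - \mathbb{E}(X_k)$, so that $\mathbb{E}(Y_k)=0$, $Y_k\le b$, $\sum_k \Var{Y_k}=V$, and $S_n-\mu=\sum_k Y_k$. For any $\lambda>0$, Markov's inequality applied to $e^{\lambda(S_n-\mu)}$, together with independence, gives
\[
\Pr(S_n-\mu\ge t)\le e^{-\lambda t}\prod_{k}\mathbb{E}\big(e^{\lambda Y_k}\big).
\]
So it suffices to bound each moment generating factor in terms of $\Var{Y_k}$.

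\textbf{Key lemma (the main obstacle).} The function $\psi(x)=(e^x-1-x)/x^2$, extended by $\psi(0)=\tfrac12$, is nondecreasing on $\mathbb{R}$. We would verify this from the power series for $x\ge 0$, and for $x<0$ by differentiating, or by writing $\psi(x)=\int_0^1 (1-s)e^{sx}\,ds$, which makes monotonicity immediate since each $e^{sx}$ is increasing in $x$.

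Given the lemma, for $y\le b$ we have $\lambda y\le \lambda b$, hence
\[
e^{\lambda y}-1-\lambda y\le y^2\,\frac{e^{\lambda b}-1-\lambda b}{b^2}.
\]
Taking expectations and using $\mathbb{E}(Y_k)=0$ gives
\[
\mathbb{E}\big(e^{\lambda Y_k}\big)\le 1+\Var{Y_k}\frac{e^{\lambda b}-1-\lambda b}{b^2}\le \exp\!\Big(\Var{Y_k}\frac{e^{\lambda b}-1-\lambda b}{b^2}\Big),
\]
where the last step uses $1+z\le e^z$. Multiplying over $k$ yields
\[
\Pr(S_n-\mu\ge t)\le \exp\!\Big(-\lambda t+\tfrac{V}{b^2}\big(e^{\lambda b}-1-\lambda b\big)\Big).
\]

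\textbf{Optimizing $\lambda$.} Setting the derivative of the exponent to zero gives $e^{\lambda b}=1+bt/V$, that is, $\lambda=\tfrac1b\ln(1+\epsilon)$ with $\epsilon=bt/V$. Substituting, the exponent becomes $-\tfrac{V}{b^2}\big((1+\epsilon)\ln(1+\epsilon)-\epsilon\big)$, which is the first bound.

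\textbf{Second bound.} It remains to show $h(\epsilon):=(1+\epsilon)\ln(1+\epsilon)-\epsilon\ge \frac{\epsilon^2}{2(1+\epsilon/3)}$ for $\epsilon\ge0$. Let $D(\epsilon)$ be the left side minus the right side. Then $D(0)=D'(0)=0$, so it suffices to show $D''\ge0$. We have $h''(\epsilon)=1/(1+\epsilon)$, and the second derivative of the right side is $27/(3+\epsilon)^3$. The inequality $(3+\epsilon)^3\ge 27(1+\epsilon)$ follows by expanding: $27+27\epsilon+9\epsilon^2+\epsilon^3\ge 27+27\epsilon$. Finally, substituting $\epsilon=bt/V$ gives
\[
\frac{V}{b^2}\cdot\frac{\epsilon^2}{2(1+\epsilon/3)}=\frac{t^2}{2V\big(1+\frac{bt}{3V}\big)},
\]
which is the stated second bound. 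The edge cases $V=0$ and $t=0$ are trivial and would be handled separately.
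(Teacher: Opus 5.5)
Your proof is correct and follows essentially the same route as the cited source. The paper gives no proof of its own; it only quotes McDiarmid's Theorem 2.7, and that proof also bounds each factor $\mathbb{E}(e^{\lambda Y_k})$ using monotonicity of $(e^x-1-x)/x^2$, optimizes at $\lambda=\frac{1}{b}\ln(1+\epsilon)$, and finishes with the elementary inequality $(1+\epsilon)\ln(1+\epsilon)-\epsilon\ge \epsilon^2/(2(1+\epsilon/3))$, exactly as you do (with $b>0$ implicit, since $b\le 0$ together with $\mathbb{E}(Y_k)=0$ forces $Y_k=0$ and leaves the formula undefined).
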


\begin{theorem}[Grothendieck inequality, \cite{grothendieck1956resume}; see also \cite{krivine1978constantes, braverman2013grothendieck}]
\label{thm: Grothendieck}
    For every $n\times n$ matrix $M$, the following inequality holds:
    \begin{align*}
        \max_{\|U_i\|,\|V_j\|=1}\left|\sum_{i,j}M_{ij}\inner{U_i,V_j}\right|\leq K_{G}\cdot \max_{x,y\in \set{-1,1}^{n}}\sum_{i,j=1}^{n}M_{ij}x_iy_j,
    \end{align*}
    where $K_G\leq 1.783$ is the Grothendieck constant. The first maximum is over all unit vectors $U_1,\ldots, U_n$ and $V_1,\ldots,V_n$.
\end{theorem}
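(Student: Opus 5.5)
The statement is the classical Grothendieck inequality; I would prove it by Krivine's argument, which yields the constant $K_G\le \frac{\pi}{2\ln(1+\sqrt2)}\approx 1.7822\le 1.783$. Two preliminary remarks come first. Since replacing $x$ by $-x$ negates $\sum_{i,j}M_{ij}x_iy_j$, the right-hand maximum is nonnegative and equals $\max_{x,y}|\sum_{i,j} M_{ij}x_iy_j|$. Similarly, replacing $U_i$ by $-U_i$ removes the absolute value on the left-hand side. So it suffices to show $\sum_{i,j}M_{ij}\inner{U_i,V_j}\le K\max_{x,y}\sum_{i,j}M_{ij}x_iy_j$ for arbitrary unit vectors $U_1,\ldots,U_n,V_1,\ldots,V_n$. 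Since these are $2n$ vectors, we may assume they lie in $\mathbb{R}^{2n}$.

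The first key tool is the Grothendieck identity. If $g$ is a standard Gaussian vector and $u,v$ are unit vectors, then
\[
\mathbb{E}\bigl[\sign\inner{g,u}\,\sign\inner{g,v}\bigr]=\tfrac{2}{\pi}\arcsin\inner{u,v}.
\]
To prove it, project $g$ onto the plane spanned by $u$ and $v$; the projection is rotationally invariant. The two signs disagree exactly when a uniformly random line through the origin separates $u$ and $v$, which happens with probability $\theta/\pi$, where $\theta=\arccos\inner{u,v}$. Hence the expectation is $1-2\theta/\pi=\frac{2}{\pi}\arcsin\inner{u,v}$.

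The second tool, which I expect to be the main obstacle, is Krivine's preprocessing. Set $c=\ln(1+\sqrt2)$, so that $\sinh c=1$. We need unit vectors $U'_i,V'_j$ in some Hilbert space with
\[
\inner{U'_i,V'_j}=\sin\bigl(c\inner{U_i,V_j}\bigr).
\]
I would build them from the Taylor series $\sin(ct)=\sum_{k\ge0}(-1)^k\frac{c^{2k+1}}{(2k+1)!}t^{2k+1}$ and the tensor-power identity $\inner{u^{\otimes m},v^{\otimes m}}=\inner{u,v}^m$. Define
\[
U'_i=\bigoplus_{k\ge0}\sqrt{\tfrac{c^{2k+1}}{(2k+1)!}}\,U_i^{\otimes(2k+1)}
\quad\text{and}\quad
V'_j=\bigoplus_{k\ge0}(-1)^k\sqrt{\tfrac{c^{2k+1}}{(2k+1)!}}\,V_j^{\otimes(2k+1)}.
\]
Their inner product is then exactly the series for $\sin(c\inner{U_i,V_j})$. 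Their squared norms equal $\sum_k c^{2k+1}/(2k+1)!=\sinh c=1$, so they are unit vectors. The direct sum is an infinite-dimensional Hilbert space, but only $2n$ vectors are involved. I would therefore restrict to their finite-dimensional span, where an orthonormal basis exists, and apply the Gaussian identity there.

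Finally, apply random-hyperplane rounding to the new vectors: set $x_i=\sign\inner{g,U'_i}$ and $y_j=\sign\inner{g,V'_j}$. By the Grothendieck identity,
\[
\mathbb{E}\Bigl[\sum_{i,j}M_{ij}x_iy_j\Bigr]=\frac{2}{\pi}\sum_{i,j}M_{ij}\arcsin\bigl(\sin(c\inner{U_i,V_j})\bigr).
\]
Because $|c\inner{U_i,V_j}|\le c<\pi/2$, we have $\arcsin(\sin(c\inner{U_i,V_j}))=c\inner{U_i,V_j}$, so the expectation equals $\frac{2c}{\pi}\sum_{i,j}M_{ij}\inner{U_i,V_j}$. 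Some realization of the random signs attains at least this expectation, which gives $\sum_{i,j}M_{ij}\inner{U_i,V_j}\le\frac{\pi}{2c}\max_{x,y}\sum_{i,j}M_{ij}x_iy_j$. Since $\frac{\pi}{2\ln(1+\sqrt2)}\le1.783$, this proves the stated bound.
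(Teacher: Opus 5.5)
Your proposal is correct. It is Krivine's argument: the Gaussian sign/arcsine identity, combined with the tensor-power construction that realizes $\sin(c\langle U_i,V_j\rangle)$ with $\sinh c=1$, giving the constant $\pi/(2\ln(1+\sqrt{2}))\approx 1.7822\le 1.783$. The paper does not prove this inequality itself but cites it from Grothendieck, Krivine, and Braverman et al., so your write-up supplies the standard proof of the cited Krivine bound, which is exactly where the $1.783$ in the statement comes from.
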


\section{Mapping Quartet Relations on a Rooted Tree}
\label{app:mapping-lemma}
In this section, we show that the mapping $\phi_U$ preserves the quartet relation $(ab \mid cd)$ under mild conditions.
Recall that we consider a rooted tree $T$ and a subset of its vertices $U$.
We define a mapping $\phi_U : V(T) \to U$ by
\[
    \phi_U(u) = \text{the closest ancestor of } u \text{ that lies in } U.
\]
Equivalently, $\phi_U(u)$ is the first vertex in $U$ encountered on the path from $u$ to the root $r$.
Below, we denote by $P(x, y)$ the (unique) path from vertex $x$ to vertex $y$ in the tree,
and by $\mathrm{LCA}(x, y)$ the least common ancestor of $x$ and $y$.

\begin{lemma}\label{lem:disjoint-f}
    Let $a,b,c,d \in V$ be distinct vertices satisfying:
    \begin{enumerate}
        \item[\textnormal{(i)}] The paths $P(a,b)$ and $P(c,d)$ are vertex-disjoint.
        \item[\textnormal{(ii)}] For every distinct triple $x,y,z \in \{a,b,c,d\}$, we have $\phi_U(x) \neq \phi_U(\lca(y,z))$.
    \end{enumerate}
    Then the paths $P(\phi_U(a),\phi_U(b))$ and $P(\phi_U(c),\phi_U(d))$ are vertex-disjoint.
\end{lemma}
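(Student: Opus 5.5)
We reduce the statement to a case analysis on the rooted topology of $\{a,b,c,d\}$. In a rooted tree, condition (i) means the quartet is $ab|cd$. Up to swapping the names within each pair and swapping the pairs, there are two shapes. In the \emph{balanced} shape, $w_1=\lca(a,b)$ and $w_2=\lca(c,d)$ are incomparable; their common ancestor $z=\lca(a,b,c,d)$ lies strictly above both and on neither $P(a,b)$ nor $P(c,d)$. In the \emph{caterpillar} shape, $w_2=\lca(c,d)$ is a strict descendant of $w_1=\lca(a,b)$, with $c,d$ both below the child of $w_1$ on the side of $b$, say. Then $\lca(b,c)=\lca(b,d)=:v$ lies strictly between $w_2$ and $w_1$, and $v\notin P(c,d)$.

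First, a preliminary fact about $\phi_U$ that I will use repeatedly. Since $U$ contains the root, $\phi_U(x)$ is an ancestor of $x$. Moreover, if $y$ is an ancestor of $x$, then either $\phi_U(x)=\phi_U(y)$, or $\phi_U(x)$ lies on the path from $x$ up to $y$, strictly below $y$. Consequently $P(\phi_U(x),\phi_U(y))$ is contained in the union of $P(x,y)$ and the upward path from $\lca(x,y)$ to $\phi_U(\lca(x,y))$.

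Second, I determine where the images sit. Condition (ii) with $x=a$ and $\{y,z\}=\{c,d\}$ gives $\phi_U(a)\ne\phi_U(w_2)$, and similarly for $b$. Condition (ii) also gives $\phi_U(c),\phi_U(d)\ne\phi_U(w_1)$ and, in the caterpillar case, $\phi_U(c),\phi_U(d)\ne\phi_U(v)$ via the triples $(c,b,d)$ and $(d,b,c)$. From these I deduce the following. In the balanced case, $\phi_U(c)$ and $\phi_U(d)$ lie strictly below $z$. In the caterpillar case, they lie strictly below $v$, so the whole image path $P(\phi_U(c),\phi_U(d))$ sits inside the subtree hanging below $v$, and in the balanced case inside the subtree of the child of $z$ containing $w_2$. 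Indeed, if $\phi_U(c)$ were at or above $v$, then $\phi_U(c)=\phi_U(v)$ by the preliminary fact, which contradicts (ii). Symmetrically, $\phi_U(a)$ and $\phi_U(b)$ must avoid the part of the tree containing the image of the $cd$ path. In the balanced case they cannot lie in the $w_2$-branch at all, since they are ancestors of $a$ and $b$, which do not lie in that branch. In the caterpillar case, $\phi_U(b)$ is an ancestor of $b$, and its being at or below $v$ would still keep it off the branch containing $c,d$ unless it equals $v$ or a vertex above $v$.

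Third, I conclude disjointness. The image $cd$-path lies strictly inside a subtree $T_x$, where $x$ is the child of $z$ (balanced case) or the child of $v$ toward $c,d$ (caterpillar case). The image $ab$-path runs between ancestors of $a$ and $b$, neither of which lies in $T_x$. A path between two vertices outside $T_x$ never enters $T_x$: it goes up to their LCA, which lies outside $T_x$ because its descendants $a,b$ are not all in $T_x$. The paths are therefore disjoint.

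I expect the main obstacle to be the caterpillar case when $\phi_U(b)$ coincides with an ancestor at or above $v$. The conclusion still holds there, but it requires carefully checking that $\phi_U(c)\neq\phi_U(v)$ forces $\phi_U(c)$ strictly below $v$, and that no degenerate coincidence such as $\phi_U(a)=\phi_U(c)$ arises. I plan to handle the latter by noting that $\phi_U(a)$ is an ancestor of $a$ while $\phi_U(c)$ lies in $T_x$, which does not contain $a$.
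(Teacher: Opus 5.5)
Your proof is correct for leaves of the binary tree $T$, which is the only case in which the lemma is applied, and it takes a genuinely different route from the paper's.

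\textbf{How the two proofs differ.} The paper argues by contradiction through an extremal choice. It takes a lowest vertex $x$ of $P(\phi_U(a),\phi_U(b))\cap P(\phi_U(c),\phi_U(d))$, assumes without loss of generality that $x\notin P(a,b)$, and shows $\phi_U(\lca(a,c))=\phi_U(x)=\phi_U(b)$, which violates (ii). That proof never inspects the shape of $\{a,b,c,d\}$ and does not use that the four points are leaves. The price is a chain of normalizations about which images lie in $T_x$.

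You instead locate the vertex where the $cd$-pair branches off ($z$, respectively $v$) and note that it equals $\lca(b,c)=\lca(b,d)$. You then use just two instances of (ii), namely $\phi_U(c)\neq\phi_U(\lca(b,d))$ and $\phi_U(d)\neq\phi_U(\lca(b,c))$, together with your preliminary fact. This traps $\phi_U(c)$ and $\phi_U(d)$ in a pendant subtree $T_x$ containing neither $a$ nor $b$. Since $T_x$ is downward closed and hangs from a single edge, disjointness is immediate.

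Your approach is more transparent and shows that only two instances of (ii) matter. It also makes your anticipated ``main obstacle'' vacuous. Where $\phi_U(b)$ sits relative to $v$ is irrelevant once you know it is an ancestor of $b\notin T_x$, and the same observation rules out coincidences such as $\phi_U(a)=\phi_U(c)$.

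\textbf{Two small repairs.}
\begin{enumerate}
\item In the balanced case, the inequalities you list, $\phi_U(c)\neq\phi_U(w_1)$ and $\phi_U(a)\neq\phi_U(w_2)$, do not imply that $\phi_U(c)$ lies strictly below $z$. Take $U=\{z,w_1\}$ with $z$ the root: all the listed inequalities hold, yet $\phi_U(c)=z$. You need the triples $(c;b,d)$ and $(d;b,c)$ there as well, using $\lca(b,c)=\lca(b,d)=z$. Your preliminary fact then finishes exactly as in the caterpillar case.
\item The statement allows arbitrary distinct vertices, so your two-shape classification is incomplete. For example, take $a=\lca(a,b)$ with $b$ in one child subtree of $a$ and $c,d$ in the other. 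The fix is as follows. When $\lca(c,d)$ lies strictly below $\lca(a,b)$ (swap the pairs in the opposite situation), let $v$ be the lowest vertex of $P(a,b)$ that is an ancestor of $\lca(c,d)$, and relabel $a,b$ so that $v$ is an ancestor of $b$. Then $v=\lca(b,c)=\lca(b,d)$, the child of $v$ toward $c,d$ roots a subtree missing $P(a,b)$, and your argument goes through unchanged.
\end{enumerate}
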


\begin{proof}
    Suppose, for contradiction, that the paths $P(\phi_U(a),\phi_U(b))$ and $P(\phi_U(c),\phi_U(d))$ intersect.
    Let $x$ be the lowest vertex in their intersection, that is, the vertex farthest from the root $r$
    that lies on both paths.
    Since $P(a,b)$ and $P(c,d)$ are disjoint, $x$ does not belong to one or both of them.
    Without loss of generality, assume that $x \notin P(a,b)$.

    Because $x \in P(\phi_U(a),\phi_U(b))$, at least one of $\phi_U(a)$ or $\phi_U(b)$ lies in the subtree rooted at $x$, denoted $T_x$.
    If only one of them is in $T_x$, assume it is $\phi_U(a)$.
    If both are in $T_x$ but not both equal to $x$, assume $\phi_U(a)\neq x$.
    If $\phi_U(a)=\phi_U(b)=x$, we make no further assumptions.
    Similarly, since $x\in P(\phi_U(c),\phi_U(d))$, we may assume that $\phi_U(c)\in T_x$.

    We now prove two claims.

    \begin{claim}\label{claim:f-lca-ac}
        We have
        \[
            \phi_U(\lca(a,c)) = \phi_U(x).
        \]
    \end{claim}

    \begin{claim}\label{claim:f-lca-b}
        We have
        \[
            \phi_U(b) = \phi_U(x).
        \]
    \end{claim}

    Together, these claims imply $\phi_U(b) = \phi_U(\lca(a,c))$, contradicting condition~(ii).
    Hence our assumption that the two paths intersect was false, and
    the paths $P(\phi_U(a),\phi_U(b))$ and $P(\phi_U(c),\phi_U(d))$ are vertex-disjoint.

    \begin{proof}[Proof of Claim~\ref{claim:f-lca-ac}]
        Both $\phi_U(a)$ and $\phi_U(c)$ lie in $T_x$, and hence $\lca(\phi_U(a),\phi_U(c))$ also lies in $T_x$.
Thus, $\lca(\phi_U(a),\phi_U(c))$ lies on both paths $P(\phi_U(a),x)$ and $P(\phi_U(c),x)$.
Since these paths are contained in $P(\phi_U(a),\phi_U(b))$ and $P(\phi_U(c),\phi_U(d))$, respectively, it follows that
\[
    \lca(\phi_U(a),\phi_U(c)) \in P(\phi_U(a),\phi_U(b)) \cap P(\phi_U(c),\phi_U(d)).
\]
As $x$ is the lowest vertex in this intersection, it follows that $\lca(\phi_U(a),\phi_U(c)) = x$.

We now consider two cases.
\medskip

\noindent I. If neither $\phi_U(a)$ nor $\phi_U(c)$ is an ancestor of the other, i.e.,      
\[x = \lca(\phi_U(a),\phi_U(c)) \notin \{\phi_U(a),\phi_U(c)\},\]
then
        $\lca(a,c) = \lca(\phi_U(a),\phi_U(c)) = x$, and thus $\phi_U(\lca(a,c)) = \phi_U(x)$.

\medskip

\noindent II. Otherwise, 
$$x = \lca(\phi_U(a),\phi_U(c)) \in \{\phi_U(a),\phi_U(c)\}.$$
In this case, $\phi_U(x)=x$ since $x\in\{\phi_U(a),\phi_U(c)\}\subset U$. Suppose, for instance, $$\lca(\phi_U(a),\phi_U(c)) = \phi_U(c).$$
Then both $a$ and $c$ lie in the subtree rooted at $\phi_U(c)$, and hence so does $\lca(a,c)$.
        Thus, the path from $\lca(a,c)$ to $\phi_U(c)$ is a subpath of the path $P(c,\phi_U(c))$.
        Since $\phi_U(c)$ is the first vertex from $U$ encountered when moving from $c$ to the root, it follows that $\phi_U(c)$ is also the first vertex from $U$ encountered when moving from $\lca(a,c)$ to the root.
        Therefore,
        \[
            \phi_U(\lca(a,c)) = \phi_U(c) = x = \phi_U(x).
        \]
        The case $\phi_U(a)=x$ is analogous.
    \end{proof}

    \begin{proof}[Proof of Claim~\ref{claim:f-lca-b}]
        If $\phi_U(b)=x$, then $x\in U$ and hence $\phi_U(x)=x$, giving $\phi_U(b)=\phi_U(x)$ as required.
        Assume now that $\phi_U(b)\neq x$. By the earlier assumptions, this also means $\phi_U(a)\neq x$.

        Since $\phi_U(a)$ lies below $x$, the vertex $a$ itself must also lie in the subtree rooted at $x$.
        Because $x \notin P(a,b)$, both $a$ and $b$ must lie in the same child subtree of $x$ -- otherwise $P(a,b)$ would pass through $x$. Denote that child by $y$.

        Observe that since $\phi_U(a)\in T_x$ but $\phi_U(a)\neq x$, the vertex $\phi_U(a)$ must be in $T_y$.
        The path from $b$ to the root decomposes into two parts: the path from $b$ to $y$ and the path from $x$ to $r$ (we use that $y$ is a child of $x$).
        If $\phi_U(b)$ were in the first part, then $P(\phi_U(a),\phi_U(b))$ would not contain $x$ (since both $\phi_U(a)$ and $\phi_U(b)$ would lie in $T_y$), contradicting our choice of $x$.
        Hence $\phi_U(b)$ must lie on the segment from $x$ to $r$.
        Since $\phi_U(b)\neq x$, it is the first vertex from $U$ encountered when moving upward from $x$, i.e., $\phi_U(b)=\phi_U(x)$.
        This completes the proof of Lemma~\ref{lem:disjoint-f}.
    \end{proof}
\end{proof}

\section{Auxiliary Lemmas used in Section~\ref{sec:community-detection}}\label{app:auxiliary_proofs}

\begin{lemma}
    \label{lem:Avg dist between points to avg dist between points and center}
    Let $P\subseteq \mathbb{R}^{d}$ be a set of points and $\mu$ be the average of those points, $\mu=\frac{1}{|P|}\sum_{p\in P}p$, then we have that:
    \[
        \Avg_{p,q\in P}\left(\|p-q\|^2\right)=2\cdot\Avg_{p\in P}\left(\|\mu-p\|^2\right)
    \]
\end{lemma}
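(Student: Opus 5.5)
The plan is to expand the squared distances and use linearity of averaging. Here the average is over ordered pairs $(p,q)\in P\times P$, with $p=q$ allowed. For fixed $p,q$, I would write $p-q=(p-\mu)-(q-\mu)$. This gives
\[
\|p-q\|^2=\|p-\mu\|^2+\|q-\mu\|^2-2\inner{p-\mu,q-\mu}.
\]

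Averaging over all ordered pairs, the first two terms each contribute $\Avg_{p\in P}\|p-\mu\|^2$, since each depends on only one coordinate of the pair. For the cross term, I would use bilinearity to write
\[
\Avg_{p,q}\inner{p-\mu,q-\mu}=\inner{\Avg_p(p-\mu),\Avg_q(q-\mu)}.
\]
This vanishes because $\Avg_{p\in P}(p-\mu)=\mu-\mu=0$ by the definition of $\mu$. Summing the three contributions yields exactly $2\Avg_{p\in P}\|\mu-p\|^2$.

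The only point needing care is the convention that the average runs over all ordered pairs, including $p=q$. Under this convention the identity is exact. The same convention applies when the lemma is invoked in the proof of \Cref{thm:qed-guarantee}, where $\Avg_{u,v\in S}$ includes diagonal terms with zero distance. The bound from \Cref{lem:bound_on_average_square_dist}, which is stated for that same average, transfers directly, so no step here is a genuine obstacle.
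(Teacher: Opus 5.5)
Your proof is correct and is essentially the paper's argument. The paper first proves the one-point identity $\sum_{p\in P}\|s-p\|^2=\sum_{p\in P}\|p-\mu\|^2+|P|\,\|s-\mu\|^2$, whose cross term vanishes because $\sum_{p\in P}(p-\mu)=0$, and then averages it over $s\in P$; this is the same centering-and-cancellation you carry out in a single symmetric expansion, and your convention of averaging over all ordered pairs including $p=q$ matches the paper's $\frac{1}{|P|^2}\sum_{p\in P}\sum_{q\in P}$.
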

\begin{proof}
    We first show the following intermediate claim
    \begin{claim}
        \label{clm: cluster_cost_diff_center}
        Let $s\in \mathbb{R}^{d}$ be a point, it holds that:
        \[
            \sum_{p\in P}\|s-p\|^2=\sum_{p\in P}\|p-\mu\|^2+|P|\cdot \|s-\mu\|^2.
        \]
    \end{claim}
    \begin{proof}
        Consider a random variable $X$ distributed uniformly in $P$, then we have that:
        \begin{align*}
            \sum_{p\in P}\|s-p\|^2&=|P|\cdot \Exp{\|s-X\|^2}\\
            &=|P|\cdot \Exp{\|s-\mu+\mu-X\|^2}\\
            &=|P|\cdot\Exp{\|s-\mu\|^2+\|\mu-X\|^2+2\inner{s-\mu, \mu-X}}\\
            &=|P|\cdot \|s-\mu\|^2+\sum_{p\in P}\|\mu-p\|^2+2|P|\cdot \inner{s-\mu, \Exp{\mu-X}}\\
            &=|P|\cdot\|s-\mu\|^2+\sum_{p\in P}\|\mu-p\|^2
        \end{align*}
    \end{proof}
    We now have that:
    \begin{align*}
        \Avg_{p,q\in P}\left(\|p-q\|^2\right)&=\frac{1}{|P|^2}\sum_{p\in P}\sum_{q\in P}\|p-q\|^2\\
        &=\frac{1}{|P|}\sum_{p\in P}\left(\frac{1}{|P|}\sum_{q\in P}\|p-q\|^2\right)
    \end{align*}
    For a fixed $p$, we apply \Cref{clm: cluster_cost_diff_center} to $\sum_{q\in P}\|p-q\|^2$, we have that:
    \begin{align*}
        \Avg_{p,q\in P}\left(\|p-q\|^2\right)&=\frac{1}{|P|}\sum_{p\in P}\left(\frac{1}{|P|}\cdot \sum_{q\in P}\|q-\mu\|^2+\|p-\mu\|^2\right)\\
        &=\frac{1}{|P|}\sum_{q\in P}\|q-\mu\|^2+\frac{1}{|P|}\sum_{p\in P}\|p-\mu\|^2\\
        &=2\Avg_{p\in P}\|\mu-p\|^2
    \end{align*}
\end{proof}

We give a proof to the following simple lemma that will be useful for our analysis.
\begin{lemma}
    \label{lem:subtree_of_size_alpha}
    Let $T$ be a rooted binary  tree with $n$ leaves and let $\alpha \geq \frac{1}{n}$. Then there is a node $u$ in the tree such that:
    \begin{align*}
        \alpha n\leq |L_u|<2\alpha n
    \end{align*}
\end{lemma}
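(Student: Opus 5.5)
The plan is to walk down from the root and always move into the heavier child. Since $\alpha\ge \frac1n$, we have $\alpha n\ge 1$, which is what makes the walk stop at a valid node. If $\alpha n > n$, no node can satisfy the bound, so we will assume $\alpha n\le n$, i.e.\ $\alpha\le 1$; this is the only regime in which the lemma is used.

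\emph{Setting up the walk.} Let $u_0$ be the root, so $|L_{u_0}|=n\ge \alpha n$. If $|L_{u_0}|<2\alpha n$ we are done. Otherwise $|L_{u_0}|\ge 2\alpha n\ge 2$, so $u_0$ is not a leaf and has two children (the tree is binary). Let $u_1$ be the child with the larger leaf set. Then
\[
|L_{u_1}|\ge \tfrac12|L_{u_0}|\ge \alpha n .
\]

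\emph{The invariant.} Iterating this step maintains $|L_{u_i}|\ge \alpha n$ at every node $u_i$ we visit. Whenever $|L_{u_i}|\ge 2\alpha n$, the node $u_i$ has at least two leaves below it, so it is internal and we can descend again to its heavier child.

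\emph{Termination.} Each step strictly decreases the depth remaining in the tree, so the process must stop. It can only stop at a node $u$ with $\alpha n\le |L_u|<2\alpha n$. It cannot reach a leaf while the size is still at least $2\alpha n$, because a leaf has $|L_u|=1<2\le 2\alpha n$.

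\emph{Where care is needed.} The only points requiring attention are exactly these two: the binarity of the tree, which guarantees the heavier child carries at least half the leaves, and the condition $\alpha n\ge1$, which keeps the lower bound attainable. Everything else is immediate.
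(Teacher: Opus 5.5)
Your proof is correct and is essentially the paper's argument in constructive form. The paper picks a deepest node $u$ with $|L_u|\ge \alpha n$ and observes two things: if $|L_u|\ge 2\alpha n$, some child would also have at least $\alpha n$ leaves, and if $u$ is a leaf, then $\alpha n\le 1$. You reach such a node by descending into the heavier child, and you also correctly note that the statement needs $\alpha\le 1$ for the root to qualify, a point the paper glosses over by attributing existence to $\alpha n\ge 1$.
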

\begin{proof}
    Let $u$ be a vertex that is farthest away from the root and furthermore, $|L_u|\geq \alpha n$ (by our assumption that $\alpha n\geq 1$ such a vertex exists).
    If $u$ is a leaf then $|L_u|=1\leq \alpha n$, which together with our assumption that $|L_u|\geq \alpha n$ gives us that $|L_u|=\alpha n< 2\alpha n$ and the claim holds.
    Otherwise, we will again show that $|L_u|< 2\alpha n$, suppose for the sake of contradiction that this is not the case and let $l$ and $r$ be the children of $u$. We have that $|L_u|=|L_r|+|L_l|$, our assumption that $|L_u|\geq 2\alpha n$ implies that either $|{L}_r|\geq \alpha n$ or $|{L}_l|\geq \alpha n$. In either case the assumption that $u$ is a vertex that is farthest away from the root such that $|{L}_u|\geq \alpha n$ is contradicted.
\end{proof}
\begin{lemma}
    \label{lem:getting_the_set_lca_by_two}
    Let $T$ be a binary rooted tree and let $S$ be a subset of the leaves. Let $w=\lca(S)$, then for every $a\in S$ there exists $b\in S$ such that $\lca(a,b)=w$.
\end{lemma}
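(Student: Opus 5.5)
Since $w=\lca(S)$, the plan is a short case analysis on whether $w$ is a leaf or an internal vertex.

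\emph{Leaf case.} If $w$ is a leaf, then every leaf of $S$ lies in $T_w$, which contains only $w$. Hence $S=\{w\}$, and the choice $b=a=w$ gives $\lca(a,b)=w$ trivially.

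\emph{Internal case.} Otherwise $w$ is internal. Since $T$ is binary, $w$ has exactly two children $w_1,w_2$. Every element of $S$ lies in $T_w=\{w\}\cup T_{w_1}\cup T_{w_2}$. Because $S$ consists of leaves and $w$ is internal, every element of $S$ lies in $T_{w_1}$ or in $T_{w_2}$.

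We claim that both $T_{w_1}\cap S$ and $T_{w_2}\cap S$ are nonempty. Suppose instead that $S\subseteq T_{w_1}$. Then $w_1$ is a common ancestor of all of $S$ and is strictly deeper than $w$. This contradicts the assumption that $w$ is the \emph{lowest} common ancestor. The same argument handles $T_{w_2}$.

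Now fix $a\in S$, and say without loss of generality that $a\in T_{w_1}$. Pick any $b\in S\cap T_{w_2}$, which exists by the claim. Then $w$ is a common ancestor of $a$ and $b$. No descendant of $w$ is a common ancestor of both, because any proper descendant of $w$ lies entirely in $T_{w_1}$ or entirely in $T_{w_2}$, and so cannot have both $a$ and $b$ below it. Therefore $\lca(a,b)=w$.

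The only mildly delicate point is the nonemptiness of both sides. This follows directly from the minimality in the definition of $\lca(S)$. If $T$ were not strictly binary, the same argument would still go through: $S$ must meet at least two child subtrees of $w$, and we choose $b$ from a child subtree different from the one containing $a$.
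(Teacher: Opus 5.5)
Your proof is correct and follows the same route as the paper: split $T_w$ at the two children of $w$ and take $b$ from the child subtree that does not contain $a$. You also make explicit two points that the paper's one-line argument leaves implicit: that $S$ meets both child subtrees, by minimality of $w$, and the degenerate case $S=\{w\}$.
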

\begin{proof}
    Consider the subtree rooted at $w$ and assume, without loss of generality, that $a$ lies in the left subtree of $w$. Set $b\in S$ to be any vertex in the right subtree of $w$ and observe that $\lca(a,b)=w$.
\end{proof}

\section{Distribution of graph \texorpdfstring{$G'$}{G'}}
We use the following claim that follows directly from Theorem 8.12 in \cite{mitzenmacher2017probability}.
\begin{claim}[Poisson thinning]
    \label{clm:Poisson-thinning}
    Let $X$ be a random variable following the Poisson distribution with parameter $\lambda$ and let $S$ be a set of size $X$. Suppose that we label each element in $S$ independently as: $0$ with probability $p$ and $1$ with probability $1-p$. Let $X_0, X_1$ be the random variables denoting the number of elements labeled $0$ and $1$ respectively. Then, $X_0\sim \Poi(p\lambda)$, $X_1\sim \Poi((1-p)\lambda)$ and furthermore, $X_0$ and $X_1$ are independent.
\end{claim}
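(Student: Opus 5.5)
We prove the claim directly by computing the joint probability mass function of $(X_0,X_1)$ and checking that it factors into a product of two Poisson mass functions with the claimed parameters. This gives both marginals and independence in one step. The key observation is that the pair $(X_0,X_1)$ determines $X=X_0+X_1$. So for any fixed nonnegative integers $k_0,k_1$, the event $\{X_0=k_0,X_1=k_1\}$ equals the event $\{X=k_0+k_1,\ X_0=k_0\}$.

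Condition on $X=k$ with $k=k_0+k_1$. The labels are independent, each equal to $0$ with probability $p$, so the number of zeros is $\mathrm{Binomial}(k,p)$. This gives
\[
\Pr(X_0=k_0,X_1=k_1)=e^{-\lambda}\frac{\lambda^{k}}{k!}\binom{k}{k_0}p^{k_0}(1-p)^{k_1}.
\]
Cancel $k!$ against the binomial coefficient, write $\lambda^k=\lambda^{k_0}\lambda^{k_1}$, and split $e^{-\lambda}=e^{-p\lambda}e^{-(1-p)\lambda}$. The right-hand side then becomes
\[
\left(e^{-p\lambda}\frac{(p\lambda)^{k_0}}{k_0!}\right)\left(e^{-(1-p)\lambda}\frac{((1-p)\lambda)^{k_1}}{k_1!}\right).
\]

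Summing this product over $k_1\ge 0$ uses the exponential series and yields $\Pr(X_0=k_0)=e^{-p\lambda}(p\lambda)^{k_0}/k_0!$, so $X_0\sim\Poi(p\lambda)$. Symmetrically, summing over $k_0$ shows $X_1\sim\Poi((1-p)\lambda)$. The joint mass function is therefore the product of the marginals for every $(k_0,k_1)$, which is exactly independence of the two discrete random variables.

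No step is a genuine obstacle. The only point needing care is to state that the labels are independent of $X$; this is implicit in the setup, where the labels are assigned after $S$ of size $X$ is formed. That independence is what justifies using the binomial conditional distribution. The degenerate cases $p\in\{0,1\}$ are consistent with the convention that $\Poi(0)$ is the point mass at $0$.
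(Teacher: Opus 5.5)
Your proof is correct. Conditioning on $X=k_0+k_1$, using the binomial law of the labels, and observing that the joint mass function factors as $e^{-p\lambda}(p\lambda)^{k_0}/k_0!\cdot e^{-(1-p)\lambda}((1-p)\lambda)^{k_1}/k_1!$ gives both marginals and independence at once.

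The paper takes a different route: it does no computation. It views $X$ as the count of a Poisson process at a fixed time and cites the Poisson-process splitting theorem (Theorem 8.12 of Mitzenmacher--Upfal). That theorem says independently labeled events form independent Poisson processes whose rates are scaled by $p$ and $1-p$.

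Your argument is more elementary and self-contained, needing only the Poisson mass function and the exponential series. The textbook proof obtains the marginals by checking the Poisson-process axioms for the thinned process. It uses essentially your factorization only for the independence step. So the citation buys brevity, plus a stronger process-level statement that the paper never needs.

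Your direct computation has one further advantage. It extends verbatim to splitting into any finite number of classes, with a multinomial coefficient in place of $\binom{k}{k_0}$. That is the form implicitly used in the proof of Lemma~\ref{lem:edge_prob2}, where random quartets are split both by label and by which pair receives the edge. Iterating the two-class statement would also suffice there.
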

The claim follows directly from the definition of a Poisson process at time $t$. We now prove the main lemma about the graph constructed by the algorithm.
\begin{proof}[Proof of Lemma \ref{lem:edge_prob2}]
    First notice that the $Q(n,\lambda, \eta)$ model is equivalent to the following model parameterized by $n$, $\lambda$ and $q$, for $q=\frac{3}{2}\eta$.
    For every set $\set{a,b,c,d}\subseteq \binom{L}{4}$ the number of quartet constraints involving these variables is a random variable $X_{abcd}\sim \Poi(\lambda)$. In each of these occurrences the quartet constraint is $T(a,b,c,d)$ with probability $1-q$, which we label as a clean quartet, and a uniformly random constraint with the remaining probability, we label such quartets as random quartets. This is only for the sake of analysis, the algorithm doesn't know which quartets are clean and which are random.
    By Claim \ref{clm:Poisson-thinning}, for a set $\set{a,b,c,d}$, the number of quartets labeled clean follows $\Poi(\lambda (1-q))$ and the number of quartets labeled random follows $\Poi(\lambda q)$, moreover, the two random variables are independent. We can therefore view the (multi)-set $Q$ as $Q=Q_{c}\cup Q_r$, where $Q_c$ contains all the quartet constraints labeled clean, $Q_r$ contains all the quartet constraints labeled random and the two sets are independent.
    Furthermore, we can view the graph constructed by the algorithm as the union of two independent random (multi)-graphs constructed by $Q_c$ and $Q_r$ respectively:
    \begin{align*}
        G'=G'_{c}\cup G'_r.
    \end{align*}

    Using Poisson thinning again, we have that for every $a,b\in L'$ the number of quartet constrains that contribute an edge $(a,b)$ to $G'_c$ is an independent Poisson random variable with parameter $\frac{1}{2}\lambda(1-q)|\mathcal{Q}_{ab|L'}|$. Recall that $\mathcal{Q}_{ab|L'}$ is the subset of quartet constraints involving elements of $L'$ that place $a$ and $b$ together, the factor $\frac{1}{2}$ comes from the fact that for a constraint $ab|cd$ we add an edge $(a,b)$ with probability $\frac{1}{2}$. Similarly for $G'_r$, the number of quartets that contribute an edge between $a$ and $b$ is an independent random variable following the Poisson distribution with parameter 
    $$\frac{1}{6}\lambda q \binom{|L'\setminus \set{a,b}|}{2}=\frac{1}{6}\frac{c'}{n^3} q\binom{\gamma n-2}{2}.$$
    This directly gives us the first item of the lemma, namely that edges are independent.

    We now focus on giving estimates for the probabilities of edges. We will analyze graphs $G'_c$ and $G'_r$ separately and then use that the two graphs are independent. We start with the easier, namely graph $G_r'$. The probability that there is at least one edge between $a,b$ is:
    \begin{align*}
        \prob{(a,b)\in G'_r}=\prob{X_{ab}>0},
    \end{align*}
    where $X_{ab}\sim \Poi\left(\frac{1}{6}\frac{c'}{n^3}q\binom{\gamma n-2}{2}\right)$. Using Taylor's theorem we have that:
    \begin{align*}
        \prob{X_{ab}>0}&=1-\prob{X_{ab}=0}\\
        &=\frac{1}{6}\frac{c'}{n^3}q\binom{\gamma n-2}{2}+O\left(\frac{1}{n^2}\right)\\
        &=\left(1+o(1)\right)\frac{c'\gamma^2q}{12n}.
    \end{align*}
    We now turn our attention to graph $G'_c$. Using similar calculations we can easily get that:
    \begin{align}
        \label{eq:edge prob characterization in Gc}
        \prob{(a,b)\in G'_c}=\left(1+o(1)\right)\frac{c'(1-q)}{2n^3}\cdot|\mathcal{Q}_{ab|L'}|
    \end{align}
    We will estimate the probability $\prob{(a,b)\in G'_c}$ by estimating $|\mathcal{Q}_{ab|L'}|$. Let $\set{C,D}$ be a random element of $\binom{L'\setminus \set{a,b}}{2}$. We have that:
    \begin{align*}
        |\mathcal{Q}_{ab|L'}|=\prob{ab|CD=T(a,b,C,D)}\cdot\binom{\gamma n-2}{2}.
    \end{align*}
    We can now turn our attention to estimating $\prob{ab|CD=T(a,b,C,D)}$. Recall that we are using the notation $\delta_{uv|L'}=\frac{|L_{uv}\cap L'|}{|L'|}$ ($\delta_{uv|L'}$ is the fraction of leaves in $L'$ that belongs to $L_{uv}$). To derive a lower bound observe that  if neither of $C$ and $D$ are in the subtree of $\lca(a,b)$ then $ab|CD=T(a,b,C,D)$:
    \begin{align*}
        \prob{ab|CD=T(a,b,c,d)}&\geq \prob{C,D\not \in L_{ab}\cap L'}\\
        &\geq (1-\delta_{ab|L'})(1-\delta_{ab|L'}+o(1))\\
        &=1-2\delta_{ab|L'}+\delta_{ab|L'}^2+o(1),
    \end{align*}
    where the $o(1)$ terms are introduced because of the fact that the variables are sampled without replacement. This gives us that:
    \begin{align}
        \label{eq:QabL lower bound}
        |\mathcal{Q}_{ab|L'}|\geq(1-2\delta_{ab|L'}+\delta_{ab|L'}^2+o(1))\cdot \binom{\gamma n -2}{2}.
    \end{align}
    For an upper bound on $|\mathcal{Q}_{ab|L'}|$ we notice that if $C$ is in the subtree of $\lca(a,b)$, while $D$ is not, then $ab|CD\neq T(a,b,C,D)$. Similarly if $C\not\in L_{ab}\cap L'$ while $D\in L_{ab}\cap L'$. Since these two events are disjoint, we get that:
    \begin{align*}
        \prob{ab|CD\neq T(a,b,C,D)}&\geq   \prob{C\in L_{ab}\cap L' \text{ and } D\notin L_{ab}\cap L'}\\
        &\quad +\prob{D\in L_{ab}\cap L' \text{ and } C\notin L_{ab}\cap L'}\\
        &\geq 2\delta_{ab|L'}(1-\delta_{ab|L'}+o(1))\\
        &= 2\delta_{ab|L'} - 2 \delta_{ab|L'}^2 + o(1),
    \end{align*}
    where again $o(1)$ terms are due to the fact that we are sampling without replacement. We have that:
    \begin{align*}
        \prob{ab|CD=T(a,b,C,D)}\leq 1-2\delta_{ab|L'}+2\delta_{ab|L'}^2+o(1),
    \end{align*}
    which in turn gives us that:
    \begin{align*}
        |\mathcal{Q}_{ab|L'}|\leq \left(1-2\delta_{ab|L'}+2\delta_{ab|L'}^2+o(1)\right)\cdot {\binom{\gamma n-2}{2}}
    \end{align*}

    We finally derive a bound on $|\mathcal{Q}_{ab|L'}|$ that corresponds to the third item of the lemma. Suppose that $a\in \mathcal{L}$ and $b\in \mathcal{R}$ (here, we are using $\mathcal{L}$ and $\mathcal{R}$ to denote the leaves of left and right subtree respectively) and let $\alpha=\frac{|\mathcal{L}\cap L'|}{L'}$ and $\beta=\frac{|\mathcal{R}\cap L'|}{|L'|}$. Observe that if $C\in \mathcal{L}\cap L'$ while $D\in \mathcal{R}\cap L'$ (or vice versa) then $ab|CD\neq T(a,b,C,D)$. We have that:
    \begin{align*}
        \prob{ab|CD\neq T(a,b,C,D)}\geq& \prob{C\in \mathcal{L}\cap L'\text{ and }D\in \mathcal{R}\cap L'}\\
        &+\prob{D\in \mathcal{L}\cap L'\text{ and }C\in \mathcal{R}\cap L'}\\
        \geq&2\alpha \beta+o(1)
    \end{align*}
    This in turn gives us that
    \begin{align}
        \prob{ab|CD=T(a,b,c,d)}\leq 1-2\alpha \beta +o(1).
    \end{align}
    We now go back to $G'$, we have that:
    \begin{align*}
        \prob{(a,b)\in G'}&=\prob{(a,b)\in G_c'}+\prob{(a,b)\in G'_r}-\prob{(a,b)\in G'_c}\cdot \prob{(a,b)\in G'_r}\\
        &=\prob{(a,b)\in G_c'}+\prob{(a,b)\in G'_r}+O\left(\frac{1}{n^2}\right).
    \end{align*}
Having bounded $|\mathcal{Q}_{ab|L'}|$, we can now prove the second and third items of the lemma. For the lower bound:
    \begin{align*}
        \prob{(a,b)\in G'_c}+\prob{(a,b)\in G'_r}&=(1+o(1))\cdot\left(\frac{c'(1-q)}{2n^3}|\mathcal{Q}_{ab|L'}|+\frac{c'\gamma^2 q}{12n}\right)\\
        &\geq \frac{c\gamma^2}{4}(1-q)\left(1-2\delta_{ab|L'}+\delta_{ab|L'}^2\right)+\frac{c'\gamma^2}{12}q\frac{1}{n}+o\left(\frac{1}{n}\right),
    \end{align*}
    this directly implies that:
    \begin{align*}
        \prob{(a,b)\in G'}\geq \frac{c'\gamma^2}{4}(1-q)\left(1-2\delta_{ab|L'}+\delta_{ab|L'}^2\right)+\frac{c'\gamma^2}{12}q\frac{1}{n}+o\left(\frac{1}{n}\right),
    \end{align*}
    replacing $\eta=\frac{3}{2}q$ and ${c}=\frac{c'\gamma^2}{4}$ gives us the lower bound of the second item. The other bounds follow similarly.
\end{proof}

\newpage
\label{app:distribution of graph}

\section{Quartet Query Complexity}\label{app:query-complexity}

In this section we extend the results of \cite{emamjomeh2018adaptive} on the query complexity of learning a hierarchical clustering to the quartet (unrooted) setting. In particular, they showed that adaptive algorithms can exactly reconstruct a rooted binary hierarchy from $O(n\log n)$ \textit{triplet} queries, while any non-adaptive algorithm requires $\Omega(n^3)$ \textit{triplet} queries, even without noise. We show that both results naturally extend to quartets.

\begin{lemma}[Adaptive quartets]\label{lem:adaptive-quartets}
    Assume that the responses to quartet queries are correct independently with probability $p>1/2$, and adversarially incorrect otherwise. There exists an adaptive algorithm that learns the correct unrooted binary tree with probability at least $1-\delta$, using at most $O(n\log n+n\log(1/\delta))$ quartet queries. In the absence of noise ($p=1$), the algorithm exactly reconstructs the tree using $O(n\log n)$ quartets.
\end{lemma}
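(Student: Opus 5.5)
We reduce the statement to the corresponding result for rooted triplets of \cite{emamjomeh2018adaptive}, through a standard rooting trick. Fix one leaf $x_0\in L$ and regard the unknown unrooted binary tree $T$ as rooted at $x_0$. Deleting $x_0$ then leaves a rooted binary tree $T^{x_0}$ on $L\setminus\{x_0\}$. In $T^{x_0}$ the rooted triplet $ab|c$ (meaning $\lca(a,b)$ is a proper descendant of $\lca(a,b,c)$) holds exactly when the quartet $ab|c\,x_0$ holds in $T$. The reason is that the path from $c$ to $x_0$ avoids the path from $a$ to $b$ precisely when $c$ attaches to the root path above $\lca(a,b)$. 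Consequently every triplet query on $T^{x_0}$ can be answered by a single quartet query that contains $x_0$. Moreover, the noise model carries over unchanged: each simulated triplet answer is correct independently with probability $p$ and adversarial otherwise, since the quartet answer is.

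Next we apply the adaptive algorithm of \cite{emamjomeh2018adaptive} to $T^{x_0}$, simulating each of its triplet queries by the quartet query above. That algorithm reconstructs a rooted binary hierarchy on $n-1$ leaves with probability $1-\delta$ using $O(n\log n+n\log(1/\delta))$ triplet queries under this noise model, and with $O(n\log n)$ queries when $p=1$. We then recover $T$ by attaching $x_0$ to the root of the reconstructed tree: we create a new vertex adjacent to the two children of the root and to $x_0$, and suppress the root. The result is an unrooted binary tree that induces exactly the same quartets as $T$, because a rooting and an unrooting of the same tree are in bijection with its topology. The total number of quartet queries equals the number of triplet queries, so both bounds in the statement follow.

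The main obstacle is checking that the cited triplet result really applies in our setting. First, the noise assumption there must be the same as ours: the responses must be independent across queries, so that repeated queries to the same triple receive fresh randomness, and this independence is inherited from the quartet oracle. Second, that result must hold for arbitrary binary hierarchies. If it is only stated for a slightly different noise model, we would instead redo its insertion-based argument directly with quartets. In that argument we insert the leaves one at a time into the current tree, locating each new leaf by a noisy binary search over a centroid decomposition. Each comparison at a centroid edge is decided by a quartet that uses $x_0$ or leaves on the appropriate sides of the centroid. The noise is handled with the standard noisy-binary-search technique of majority votes or a multiplicative-weights walk, which costs $O(\log n+\log(n/\delta))$ queries per insertion; a union bound over the $n$ insertions then gives the stated total.
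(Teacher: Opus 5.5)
Your proposal is correct and follows the same route as the paper: root $T$ at a fixed leaf $x_0$, answer each triplet query $(a,b,c)$ of the Emamjomeh-Zadeh--Kempe insertion algorithm with the single quartet query on $\{a,b,c,x_0\}$, and unroot the reconstructed tree. Your verification that the answer correspondence $ab|c\,x_0 \leftrightarrow ab|c$ is a bijection, so that correctness probabilities and independence carry over, makes explicit what the paper leaves implicit. One small caveat concerns your fallback: plain per-step majority votes along a centroid search of depth $O(\log n)$ would cost $O(\log n\cdot\log(n/\delta))$ queries per insertion, so the $O(\log n+\log(n/\delta))$ bound you state genuinely needs the multiplicative-weights walk rather than majority voting.
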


\begin{proof}
    Consider the rooted tree $T_r$ obtained by rooting $T$ at any leaf $r$.
    For each triplet $(a,b,c)$, the quartet $(r,a,b,c)$ yields the same information:
    if the answer is $ra|bc$ then $(b,c)$ is the closest pair in $(a,b,c)$,
    and so on.
    Thus, one can simulate each triplet query
    in the algorithm of \cite{emamjomeh2018adaptive} by a single quartet query with $r$.
    The resulting insertion-based algorithm reconstructs $T_r$, hence $T$, exactly.
    In the independent noise model, the same repetition and majority-vote
    mechanism used in their Theorem~5.1 applies directly, preserving the
    $O(n\log n+n\log(1/\delta))$ query complexity.
\end{proof}

\begin{lemma}[Non-adaptive lower bound for quartets]\label{lem:nonadaptive-quartets}
    Any non-adaptive algorithm that reconstructs an unrooted binary tree on $n$ leaves, even without noise, requires $\Omega(n^3)$ quartet queries.
\end{lemma}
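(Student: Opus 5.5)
Following the triplet argument of \cite{emamjomeh2018adaptive}, I would prove this with an indistinguishability argument. I will exhibit many quartets, each of which is "critical" for some pair of trees that agree on every other quartet. A non-adaptive algorithm fixes its query set $\mathcal{S}\subseteq\binom{L}{4}$ in advance. If $\mathcal{S}$ misses a quartet $q$ that is the only quartet separating two trees $T_1\neq T_2$, then the answers to $\mathcal{S}$ are identical under $T_1$ and $T_2$, so the algorithm must fail on one of them. The goal is therefore a family of such critical quartets, indexed so that a set of $o(n^3)$ queries must miss one, even though the algorithm does not know which pair of trees is in play.

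The construction is a caterpillar-like base tree in which one cherry can be locally rotated. Fix two special leaves $x,y$, hung as siblings, i.e.\ a cherry $\{x,y\}$. Consider trees on the remaining $n-2$ leaves $a,b,\dots$ in which $a,b$ form a cherry attached at some point, with a third leaf $c$ attached as the sibling of the cherry. The nearest-neighbour-interchange (NNI) move around the edge above the cherry $\{a,b\}$ swaps, say, $b$ and $c$. This move changes exactly the quartets that contain both $b$ and $c$ together with one leaf from each of the two other sides of that edge.

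To make only one quartet change, I would design each side of the edge to consist of a single leaf. The cleaner choice is the standard NNI on an internal edge $e$ whose four incident subtrees are single leaves $a,b,c,d$. Then exactly the quartet $\{a,b,c,d\}$ changes, since every other quartet either avoids one of the four leaves or its topology is determined by the rest of the tree.

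Concretely, take any binary tree and pick an internal edge $e=(u,v)$ such that $u$ has leaf children $a,b$ and $v$ has leaf neighbour $c$ together with a further subtree $D$. Then NNI changes all quartets $\{a,c,b,d\}$ with $d\in D$, which is a linear number, not one. This breaks the "single quartet" plan. So instead I would rerun the triplet counting argument directly. For each triple $(a,b,c)$, consider trees in which $\{a,b,c\}$ hangs as a rooted triple $((a,b),c)$ off a single attachment point, and compare it with $((a,c),b)$. The quartets that distinguish the two are exactly $\{a,b,c,d\}$ for $d$ outside the triple. Any single such quartet suffices to tell the trees apart, but the algorithm does not know which triple is local.

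The lower bound then comes from a counting/averaging step. Suppose $|\mathcal{S}|=o(n^3)$. Each quartet in $\mathcal{S}$ covers only $4$ triples, so $\mathcal{S}$ covers $o(n^3)$ of the $\binom{n}{3}$ triples, and some triple $\{a,b,c\}$ is contained in no queried quartet. Build two trees in which $a,b,c$ form a three-leaf clade with the rest of the tree attached elsewhere: one in which $a,b$ are the closest pair, and one in which $a,c$ are. Every quartet containing at most two of $a,b,c$ has the same topology in both trees, because the clade looks like a single subtree from outside and its internal order only matters for quartets containing all three. Hence the answers on $\mathcal{S}$ coincide under the two trees, and any algorithm errs on one of them with probability at least $1/2$.

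The main obstacle, and the step I would verify most carefully, is exactly this claim that quartets with at most two clade members are unaffected. For two members, e.g.\ $\{a,b,d,e\}$ with $d,e$ outside the clade, the split is $ab|de$ in both trees. For one member, the clade acts as a single leaf. Both cases look right, which gives the $\Omega(n^3)$ bound.
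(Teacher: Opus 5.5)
Your final argument is correct. It uses the same key observation as the paper: the internal topology of a small pendant clade is invisible to every quartet containing at most two of its leaves. The counting, however, is different.

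The paper plants $n/4$ disjoint 4-leaf clades and computes that a \emph{uniformly random} 4-set contains at least three leaves of a fixed clade with probability $\Theta(1/n^3)$. By linearity of expectation, $k$ queries then touch only $\Theta(k/n^2)$ clades, so $k=\Omega(n^3)$ is needed to touch most of them.

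You instead use a single 3-leaf clade and a worst-case pigeonhole argument. Each 4-set contains exactly four triples, so any query set with $|\mathcal{S}|<\binom{n}{3}/4$ misses some triple $\{a,b,c\}$. The pair $((a,b),c)$ versus $((a,c),b)$ is then indistinguishable on $\mathcal{S}$.

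What each route buys:
\begin{itemize}
\item Your argument handles an arbitrary, adversarially designed query set directly and gives the explicit bound $\binom{n}{3}/4$.
\item The paper's computation, as written, is about uniformly random queries. For a designed query set it would additionally need a random relabelling of the leaves.
\item In exchange, the paper's many-clusters argument shows that with $o(n^3)$ random queries most clades are untouched. Exact recovery therefore fails with high probability, whereas your argument yields only a constant failure probability on some input.
\end{itemize}

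Two small points for the write-up. First, drop the exploratory NNI paragraph: its claim that only the quartet $\{a,b,c,d\}$ changes holds only when $n=4$, as you yourself noticed. Second, to cover randomized non-adaptive algorithms, choose the triple uniformly at random, independently of $\mathcal{S}$, rather than after seeing $\mathcal{S}$. It is then covered with probability at most $4k/\binom{n}{3}$, so the error on a random tree from the pair is at least $\tfrac12\bigl(1-4k/\binom{n}{3}\bigr)$, which again forces $k=\Omega(n^3)$.
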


\begin{proof}
    Analogous to \cite{emamjomeh2018adaptive}, let $C = \{a,b,c,d\}$ be one of these disjoint 4-leaf subtrees; it attaches to the rest of the tree by a single edge.
    Consider two global trees that are identical outside $C$ but differ only in the internal structure within $C$, e.g., $ab|cd$ vs. $ac|bd$.

    Then, any quartet query using $\le 2$ leaves from $C$ (and the rest outside $C$) has the \textit{same answer} under both trees. Therefore, to distinguish the two internal structures of $C$, a non-adaptive algorithm must include at least one query containing $\ge 3$ leaves from $C$ (either a 3-of-4 or the 4-of-4 query).

    For a uniformly random 4-set query on the $n$ leaves,
    \[
        \prob{\text{query hits exactly 3 of } C}
        = \frac{\binom{4}{3}\binom{n-4}{1}}{\binom{n}{4}}
        = \frac{4(n-4)}{\binom{n}{4}},
    \]
    and
    \[
        \prob{\text{query hits all 4 of } C}
        = \frac{1}{\binom{n}{4}}.
    \]
    Hence,
    \[
        \prob{\text{query hits}\ge 3\text{ from } C}
        = \frac{4(n-4)+1}{\binom{n}{4}}
        = \frac{(4n-15)\cdot 24}{n(n-1)(n-2)(n-3)}
        = \Theta\!\left(\frac{1}{n^3}\right).
    \]

    There are $n/4$ disjoint size-4 clusters. With $k$ non-adaptive quartet queries, the expected number of clusters whose internal structure is ``touched'' is at most
    \[
        \frac{n}{4} \cdot k \cdot
        \frac{4(n-4)+1}{\binom{n}{4}}
        = k \cdot \frac{24n - 90}{(n-1)(n-2)(n-3)}
        = \Theta\!\left(\frac{k}{n^2}\right).
    \]

    To recover the full tree, the algorithm must determine the internal structure for almost all of these $n/4$ disjoint clusters, so the expected number of touched clusters must be $\Omega(n)$.
    This requires
    \[
        \Theta\!\left(\frac{k}{n^2}\right) = \Omega(n)
        \quad \Longrightarrow \quad
        k = \Omega(n^3).
    \]

\end{proof}

\section{Figures}\label{app:fig}

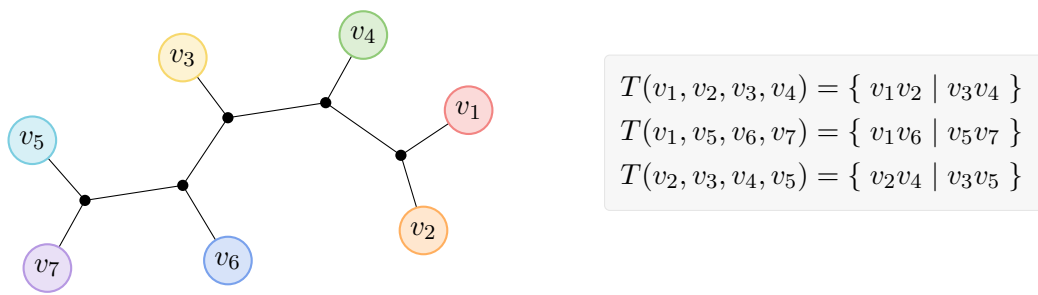
\begin{figure}[h]
    \centering
    {
        \begin{tikzpicture}
            \tikzset{
                dot/.style={circle,fill=black,inner sep=1.5pt},
                leaf/.style={circle,minimum size=18pt,inner sep=0pt,draw=black,thick}
            }

            \definecolor{vone}{RGB}{242,132,130}
            \definecolor{vtwo}{RGB}{255,175,95}
            \definecolor{vthree}{RGB}{247,216,108}
            \definecolor{vfour}{RGB}{144,202,119}
            \definecolor{vfive}{RGB}{122,205,224}
            \definecolor{vsix}{RGB}{121,161,236}
            \definecolor{vseven}{RGB}{182,152,226}

            \coordinate (I0) at (3.0, 2.7);
            \coordinate (I1) at (4.3, 2.9);
            \coordinate (I2) at (2.4, 1.8);
            \coordinate (I3) at (1.1, 1.6);
            \coordinate (I4) at (5.3, 2.2);

            \coordinate (v1) at (6.2, 2.8);
            \coordinate (v2) at (5.6, 1.2);
            \coordinate (v3) at (2.4, 3.5);
            \coordinate (v4) at (4.8, 3.8);
            \coordinate (v5) at (0.4, 2.4);
            \coordinate (v6) at (3.0, 0.8);
            \coordinate (v7) at (0.6, 0.7);

            \draw (I0) -- (v3);
            \draw (I0) -- (I1);
            \draw (I0) -- (I2);
            \draw (v1) -- (I4);
            \draw (v7) -- (I3);
            \draw (I1) -- (v4);
            \draw (I1) -- (I4);
            \draw (I2) -- (v6);
            \draw (I2) -- (I3);
            \draw (I3) -- (v5);
            \draw (I4) -- (v2);

            \foreach \p in {I0,I1,I2,I3,I4} {\node[dot] at (\p) {};}

            \node[leaf,fill=vone!30,draw=vone]    at (v1) {$v_1$};
            \node[leaf,fill=vtwo!30,draw=vtwo]    at (v2) {$v_2$};
            \node[leaf,fill=vthree!30,draw=vthree]at (v3) {$v_3$};
            \node[leaf,fill=vfour!30,draw=vfour]  at (v4) {$v_4$};
            \node[leaf,fill=vfive!30,draw=vfive]  at (v5) {$v_5$};
            \node[leaf,fill=vsix!30,draw=vsix]    at (v6) {$v_6$};
            \node[leaf,fill=vseven!30,draw=vseven]at (v7) {$v_7$};

            \node[anchor=west, align=left, inner sep=6pt, rounded corners=2pt,
            fill=black!3, draw=black!10] at (8.0,2.5) {
                $
                \begin{aligned}
                    T(v_1,v_2,v_3,v_4) &= \{\; v_1 v_2 \mid v_3 v_4 \;\}\\
                    T(v_1,v_5,v_6,v_7) &= \{\; v_1 v_6 \mid v_5 v_7 \;\}\\
                    T(v_2,v_3,v_4,v_5) &= \{\; v_2 v_4 \mid v_3 v_5 \;\}
                \end{aligned}$
            };
        \end{tikzpicture}
    }
    \caption{A phylogenetic tree $T$ on $7$ taxa, showing how quartet examples are derived from $T$.}
    \label{fig:quartet_label}
\end{figure}

\end{document}